\documentclass[aps,pra,onecolumn, notitlepage, superscriptaddress,nofootinbib,longbibliography]{revtex4-1}

\usepackage[T1]{fontenc}
\usepackage{amsmath,amsfonts,amssymb,amsthm,bbm,bm,braket,relsize,exscale}
\usepackage{graphicx}
\usepackage{subcaption}
\usepackage{wasysym}
\usepackage[usenames,dvipsnames]{color}
\usepackage[colorlinks,bookmarks=false,citecolor=NavyBlue,linkcolor=OliveGreen,url color=blue]{hyperref}
\newcommand{\1}{\mathbbm{1}}

\newtheorem{Claim}{Claim}

\newtheorem{Lemma}{Lemma}

\allowdisplaybreaks

\begin{document}

%\title{Boundary condition dependent behavior of the kicked Ising model}

%\title{Higher Order Spectral Statistics in the Kicked Ising model}
\title{Generalized Spectral Statistics in the Kicked Ising model}

\author{Divij Gupta}
\affiliation{Department of Physics, Brandeis University, Waltham, Massachusetts, USA}
\affiliation{Deparment of Physics, University of Illinois, Urbana-Champaign, Illinois, USA}
\author{Brian Swingle}
\affiliation{Department of Physics, Brandeis University, Waltham, Massachusetts, USA}

%\date{\today}

\begin{abstract}

The kicked Ising model has been studied extensively as a model of quantum chaos. Bertini, Kos, and Prosen studied the system in the thermodynamic limit, finding an analytic expression for the spectral form factor, $K(t)$, at the self-dual point with periodic boundary conditions. The spectral form factor is the 2nd moment of the trace of the time evolution operator, and we study the higher moments of this random variable in the kicked Ising model. A previous study of these higher moments by Flack, Bertini, and Prosen showed that, surprisingly, the trace behaves like a real Gaussian random variable when the system has periodic boundary conditions at the self dual point. By contrast, we investigate the model with open boundary conditions at the self dual point and find that the trace of the time evolution operator behaves as a complex Gaussian random variable as expected from random matrix universality based on the circular orthogonal ensemble. This result highlights a surprisingly strong effect of boundary conditions on the statistics of the trace. %We present our results and revisit the results of Flack et al. based on a combination of numerical and analytical methods.
We also study a generalization of the spectral form factor known as the Loschmidt spectral form factor and present results for different boundary conditions.

\end{abstract}

\maketitle
\section{Introduction}

One of the central pillars of the broad field of quantum chaos is the idea of random matrix universality: that the energy spectrum of a sufficiently generic Hamiltonian will exhibit level-repulsion like that found in a random Hermitian matrix~\cite{Casati1980,Berry1981,Bohigas1984,Haake2001,Mehta1991}. This idea can also be extended to Floquet operators, in which case the appropriate comparator random matrix is unitary rather than Hermitian. Much of the work in this field has studied correlations between pairs of eigenvalues, but random matrix universality makes predictions for higher order correlations as well. In an appropriate parameter regime, one typically expects that these correlations can be understood as those of an approximately Gaussian random variable, as has been verified in a variety of studies~\cite{Fritz_Haake_1999,Cotler_2017,PhysRevD.98.086026,Flack_2020,Forrester_2021,Cipolloni_2023,altland2025statisticsrandommatrixspectral}. The corrections to Gaussianity are more complicated, with some conflicting claims in the literature, as discussed in~\cite{altland2025statisticsrandommatrixspectral,Legramandi:2024ljn}. In this paper, we study these higher correlations in the kicked Ising model~\cite{Akila_2016,Bertini_2018}, which has previously been analyzed at the level of the second moment by Bertini et al.~\cite{Bertini_2018} and for higher moments by Flack et al.~\cite{Flack_2020} for the case of periodic boundary conditions. We find the expected approximately Gaussian results consistent with~\cite{Flack_2020}, albeit from a somewhat different perspective; we also consider generalizations to open boundary conditions and to a more general class of ``Loschmidt'' moments in which we compare spectra of Hamiltonians or Floquet operators which are correlated but not identical.  As we explain shortly, among our results, one particularly noteworthy feature is a surprising dependence of the higher order correlations on boundary conditions.

The quantities we study are higher-moment generalizations of the well-known spectral form factor (SFF), which we quickly review. Given a Floquet system with Floquet operator $U_F$, the basic object of interest is 
\begin{equation}
    Z(t)=\text{tr}(U_F^t),
\end{equation}
which represents the return amplitude after $t$ periods of Floquet evolution. In general, $Z(t)$ is a complex random variable that, when considered as a function of time, encodes all the spectral data of $U_F$. The SFF is simply
\begin{equation}
    K(t) = \mathbb{E}( |Z(t)|^2) ,
\end{equation}
where $\mathbb{E}$ denotes an average over some random parameters in the definition of $U_F$.\footnote{It is well known that $|Z(t)|^2$ is not self-averaging~\cite{PhysRevLett.78.2280}, so averaging over an ensemble of Floquet operators is essential to obtain a smooth $K(t)$. There is also a point of notation here, with some authors defining the SFF as $|Z(t)|^2$ without the average.}

We are particularly interested in higher moments of $|Z(t)|^2$ of the form
\begin{equation}
    M_{2\ell} = \mathbb{E}( |Z(t)|^{2\ell} ).
\end{equation}
Note that $M_2 = K$ by definition. If $Z(t)$ is approximately a complex Gaussian, as predicted by random matrix universality, then these moments would be
\begin{equation}
    \mathbb{E}( |Z(t)|^{2\ell} ) \approx \ell! K(t)^\ell. \label{eq:c_gauss_pred}
\end{equation}
We study these moments in the Floquet kicked-Ising model of Bertini et al. \cite{Bertini_2018} and obtain the following results.

We first consider the model at the dual unitary point with periodic boundary conditions, the situation considered in \cite{Flack_2020}. There it was pointed out that the statistics of $Z(t)$ were not in agreement with the statistics one would obtain if $U_F$ were drawn from the COE (circular orthogonal ensemble), as was assumed in \cite{Bertini_2018}. Instead, beyond the usual time reversal symmetry, an additional symmetry of the Floquet operator was found. This symmetry constrains the Floquet operators and means that a more complicated random matrix ensemble is the appropriate comparator~\cite{duenez2004random}, with the main consequence of this being that $Z(t)$ effectively behaves like a real Gaussian random variable rather than a complex one. This difference is only pronounced for higher moments, which is why the issue did not arise in \cite{Bertini_2018}. %\cite{Flack_2020} was able to successfully generalize the transfer matrix approach of \cite{Bertini_2018} to derive the higher order moments for the kicked-Ising $Z(t)$. 
We provide  confirmatory numerical evidence and a different analysis of the higher moment transfer matrix which obtains the same result without explicitly invoking the symmetry. 

Second, we consider a generalization of the model to open boundary conditions. Remarkably, we find numerically that the statistics of $Z(t)$ become those of a complex Gaussian variable, even at the dual unitary point. We modify the transfer matrix setup to deal with the open boundary conditions, and sketch an approach to derive our numerical results from this modified transfer matrix setup.

Third, we compute the Loschmidt spectral form factor (LSFF), which generalizes the usual SFF (2nd moment of $Z(t)$) to allow different Hamiltonians for forward and backward evolution \cite{Winer_2022}. More concretely, the 2nd moment $M_2 = \mathbb{E}(Z(t) Z^*(t))$ computes the average of the return amplitude after $t$ periods of forward and $t$ periods of backward evolution. One may then imagine an imperfect time reversal procedure, given by a slightly different Hamiltonian $H_{\rm backward} = H_{\rm forward} + \delta H_0$ where $H_0$ is some perturbation. The LSFF is then defined as $K_{\rm E} (t) = \mathbb{E} (Z_{\rm forward} (t) Z^*_{\rm backward} (t))$, or more generally for Hamiltonians $H_1, H_2$ as
\begin{equation}
    K_{\rm E} (t) = \mathbb{E} \left[Z_1 (t) Z^*_2 (t)\right]
\end{equation}
We compute this numerically for the kicked-Ising model in a perturbative regime $H_2 = H_1 + \delta H_0$ for $\delta \ll 1$, along with higher moment analogues. We also appeal to a modified transfer matrix setup to make sense of the numerical results, although we leave a full analysis for the future.

The rest of the paper is organized as follows. In Section \ref{sec:model} we review the kicked-Ising model and the transfer operator calculation of $K$. In Section \ref{sec:periodic} we study the model with periodic boundary conditions and present an alternative set of numerical and analytical results that confirm~\cite{Flack_2020}. In Section \ref{sec:open} we study the model with open boundary conditions and show that complex Gaussian statistics are recovered even at the dual unitary point. In Section~\ref{sec:loschmidt} we consider the generalization to the Loschmidt spectral form factor. Finally, in Section \ref{sec:discussion} we discuss the results and give some outlook.

\section{Model and background}
\label{sec:model}

Following Bertini et al \cite{Bertini_2018}, we consider a Floquet Ising spin-$1/2$ chain with transverse and longitudinal fields, also known as the kicked Ising model, as an example of an exactly solvable model of quantum chaos. The time and field-dependent Hamiltonian ($H_{\rm KI}[\boldsymbol h; t]$) is given in terms of an interaction ($H_{\rm I}[\boldsymbol h]$) and kick ($H_{\rm K}$) term \cite{prosenHamiltonian1,ProsenHamiltonian2}:
\begin{equation}
    H_{\rm KI}[\boldsymbol h;t]=H_{\rm I}[\boldsymbol h]+\delta_p(t) H_{\rm K}
    \label{eq:ham}
\end{equation}
The time dependence is captured by $\delta_p(t)=\sum_{m=-\infty}^\infty\delta(t-m)$, the periodic delta function, and the interaction and kick terms are given by
\begin{equation}
    H_{\rm I}[\boldsymbol h]\equiv\! J\sum_{j=1}^{L} \sigma^{z}_j \sigma^z_{j+1} + \sum_{j=1}^{L} h_j \sigma^z_{j},\,\,\, H_{\rm K}\equiv b \sum_{j=1}^{L}  \sigma^x_j,
    \label{eq:hamiltonians}    
\end{equation}
where $L$ is the length of the chain and $\sigma_j^\alpha$, $\alpha\in\{x,y,z\}$ are the Pauli matrices at position $j$. The parameters $J, b$ are, respectively, the coupling of the Ising chain and the kick strength, while $\boldsymbol h=(h_1,\ldots, h_L)$ describes a position dependent longitudinal field. As written, the first sum for $H_{\rm I}$ ranges from $j=1$ to $L$ which leads to periodic boundary conditions with the identification $\sigma_{L+1}^{\alpha} \equiv \sigma_{1}^{\alpha}$. We may also consider open boundary conditions in which case the first sum in $H_{\rm I}$ now only ranges from $j=1$ to $L-1$.\footnote{It should be noted that our definition of the Hamiltonian deviates slightly from \cite{Flack_2020}, where the interaction Hamiltonian has an additional factor of $-\1_L$ in the first sum, where $\1_L$ is the identity operator in $(\mathbb{C}^2)^{\otimes L}$. This plays a minor role in the form of the quasienergies later.}

Next, we review the analysis of the spectral form factor in \cite{Bertini_2018}. The Floquet operator discussed in the introduction takes the form,
\begin{equation}
    U_{\rm KI}[\boldsymbol h]= 
T\!\exp\!\!\left[-i\int_0^1\!\!\!\!{\rm d}s\, H_{\rm KI}[\boldsymbol h;s]\right]\!\!=
e^{-i H_{\rm K}}e^{-i H_{\rm I}[\boldsymbol h]}\,
\label{eq:floquet}
\end{equation}
The return amplitude is then,
\begin{equation}
   Z(t) =  {\rm tr}\left(U^t_{\rm KI}[\boldsymbol h]\right).
\end{equation}
As stated in the introduction, our goal is to study the moments of $Z(t)$ for the ensemble of Floquet operators defined by \eqref{eq:floquet}. This is useful to characterize the distribution of the quasienergies $\{\varphi\}$, which are the phases of the eigenvalues of the Floquet operator. 

%The relevant random matrix ensemble to compare with is the circular orthogonal ensemble (COE) \cite{mehta}.
% Thus checking the quantum chaos conjecture for the system reduces to comparing various moments of ${\rm tr}\left(U^t_{\rm KI}[\boldsymbol h]\right)$ and ${\rm tr}\left(U^t_{\rm COE}\right)$, where $U_{\rm COE}$ is a $2^L$ by $2^L$ dimensional COE matrix. \textcolor{red}{adjust}

Each instance of the ensemble is given by a random choice of the $h_j$ fields in \eqref{eq:floquet}. We will compare averages of powers of $Z(t)$ over these random fields to averages of powers of ${\rm tr}(U_{{\rm ensemble}}^t)$ averaged over the relevant ensemble (we will shortly describe which ensemble to choose for the kicked-Ising model). The longitudinal fields on each site are independent Gaussian random variables with mean $\bar{h}$ and standard deviation $\sigma > 0$. The average of any function of the fields is thus,
\begin{equation*}
    \mathbb{E}_{\boldsymbol h} [f(\boldsymbol h)] = \int_{\infty}^{\infty} f(\boldsymbol h) \prod_{j=1}^L e^{-\frac{(h_j - \bar{h})^2}{2\sigma^2}} \frac{d h_j}{\sigma\sqrt{2\pi}}.
\end{equation*}

The goal, following \cite{Bertini_2018}, is to analytically evaluate the disorder average for the SFF, $K(t)$, in the thermodynamic limit $L \rightarrow \infty$. To do so, it is helpful to think of the trace of the Floquet operator as the partition function of a 2 dimensional Ising model on a $t \times L$ lattice, which is periodic in time,
\begin{equation}
    {\rm tr}\left(U^t_{\rm KI}[\boldsymbol h]\right) = \left[\frac{\sin{(2b)}}{2i}\right]^{Lt/2} \sum_{{s_{\tau,j}}} e^{-i\mathcal{E}
    \left[\{s_{\tau,j}\}, \boldsymbol h\right]}.
\end{equation}
Here we take $\{s_{\tau, j}\}$ to be classical spin variables in both the time and spatial directions, with the sum over all possible configurations, and energy given by
\begin{equation}
    \mathcal{E}\left[\{s_{\tau, j}, \boldsymbol h\}\right] = \sum_{\tau = 1}^t \left(J \sum_{j=1}^L s_{\tau, j} s_{\tau, j+1} + J' \sum_{j=1}^L s_{\tau, j} s_{\tau + 1, j} + \sum_{j=1}^L  h_j s_{\tau,j} \right),
\end{equation}
where $J' = -\frac{\pi}{4} - \frac{i}{2} \log{\tan{b}}$. $J$ acts as the spatial spin coupling constant, while $J'$ can be thought of as the temporal spin coupling constant. Note that this assumes (spatially) periodic boundary conditions; open boundary conditions can be imposed by changing the upper bound of the 1st sum over $j$ to $L-1$.

It is important to note that only adjacent spins are coupled both spatially and temporally; thus we can write the partition function as either the trace of a spatial transfer matrix, $U_{\rm KI}[\boldsymbol h]$, or a temporal transfer matrix, $\tilde{U}_{\rm KI}[\boldsymbol h_j]$. Here the temporal transfer matrix acts at a given point $j$ on the spatial lattice $(\boldsymbol h_j \equiv h_j \boldsymbol \epsilon, \text{where } \boldsymbol \epsilon = (1,1 \cdots 1) \text{ is a $t$-component vector})$ for all times $t$. $\tilde{U}_{\rm KI}[\boldsymbol h_j]$ is given by the same algebra as $U_{\rm KI}[\boldsymbol h]$ \eqref{eq:floquet} but acting on a spin chain of $t$ sites and with $J$ replaced by $J'$. Since the dual transfer matrix $\tilde{U}_{\rm KI}$ is not generically unitary, the analysis of \cite{Flack_2020} is restricted to the self dual points, defined by $|J| = |b| = \pi/4$, and thus $J' = \mp \pi/4$ for $b = \pm \pi/4$.

The relevant random matrix ensemble to compare this with is determined by the constraints on the Floquet operator. At the dual unitary point, the first symmetry is the time reversal symmetry, i.e. there exists an anti-unitary operator $T$ such that $T U_{\rm KI} [\boldsymbol h] T^{-1} = U^{-1}_{\rm KI} [\boldsymbol{h}]$. Only considering this symmetry would suggest the relevant ensemble is COE, however an additional symmetry was found in \cite{PhysRevE.101.052201}. Defining $F_y \equiv
\prod_{j = 1}^{L} \sigma_y^j$ and $U = U_{\rm KI} [\boldsymbol{h}] e^{i \frac{\pi}{4} L}$, we obtain the symmetry relation,
\begin{equation}
    \label{eq:conjugate symmetry}
    U^* = F_y ^\dagger U F_y 
\end{equation}
where $(\cdot)^*$ represents complex conjugation in the computational basis (the standard Pauli basis where $\sigma_x,\sigma_z$ are real). This implies that the quasienergies of $U$ come in pairs $\{\varphi,-\varphi\}$ and thus the spectrum of $U$ is real. The spectrum of $U^t_{\rm KI}[\boldsymbol{h}]$ is thus also real, up to a phase $\phi = e^{i \frac{\pi}{4} Lt}$, which suggests that the return amplitude $Z(t)$ obeys real random statistics.

Furthermore, one can deduce that the above symmetry relation reduces to the Floquet operators being orthogonal for even $L$ and symplectic for odd $L$. \cite{Flack_2020} concludes that the relevant ensemble for even $L$ is $S_+(\mathcal{N}/2) \equiv O(\mathcal{N})/(O(\mathcal{N}/2) \times O(\mathcal{N}/2))$, where $O(\mathcal{\mathcal{N}})$ is the orthogonal group of $\mathcal{N}\times\mathcal{N}$ matrices, with $\mathcal{N} = 2^L$ is the dimension of the Hilbert space $\mathcal{H}_L = (\mathbb{C}^2)^{\otimes L}$. For odd $L$, the relevant ensemble is $S_-(\mathcal{N}/2) \equiv Sp (\mathcal{N}/2)/U (\mathcal{N}/2)$ where $U(\mathcal{N}/2)$ is the unitary group and $Sp(\mathcal{N}/2) \subset U(\mathcal{N})$ is the unitary-symplectic group \cite{duenez2004random}. As a basic check, both ensembles have quasienergies appearing in pairs.

Now we return to the analysis of the SFF. At the dual unitary point, we may write the 2nd moment (SFF) in terms of either spatial or temporal transfer matrices,
\begin{align}
    K(t) &= \mathbb E_{\boldsymbol h}\left[{\rm tr}\left(U^t_{\rm KI}[\boldsymbol h]\right){\rm tr}\left(U^t_{\rm KI}[\boldsymbol h]\right)^*\right] \\
    &= \mathbb E_{h}\left[{\rm tr}\left(\tilde{U}^L_{\rm KI}[h \boldsymbol \epsilon]\right){\rm tr}\left(\tilde{U}^L_{\rm KI}[h \boldsymbol \epsilon]\right)^*\right].
\end{align}
Since $\left|{\rm tr}  (U)\right|^2 = {\rm tr}(U \otimes U^*)$, we can rewrite the 2nd expression using a single transfer matrix,
\begin{equation}
    K(t) = {\rm tr}\left(\mathbb{T}^L\right)
\end{equation}
where,
\begin{align}
\label{eq:transfer 2}
    \mathbb{T} &= \mathbb E_{h}\left[\tilde{U}_{\rm KI}[h \boldsymbol \epsilon] \otimes \tilde{U}_{\rm KI}[h \boldsymbol \epsilon]^*\right] = (\tilde{U}_{\rm KI} \otimes \tilde{U}_{\rm KI}^*)\cdot \mathbb{O}_{\sigma} \\
    \label{eq:O 2}
    \mathbb{O}_{\sigma} &= \exp{\left[-\frac{\sigma^2}{2} \left(M^{(1)}_z - M^{(2)}_z\right)^2\right]}.
\end{align}
Here we define $\tilde{U}_{\rm KI} \equiv \tilde{U}_{\rm KI}[\bar{h} \boldsymbol \epsilon]$, $M_{\alpha} \equiv \sum_{\tau = 1}^t \sigma^\alpha_\tau$ for $\alpha \in \{x,y,z\}$, and the superscripts on $M$ denote its position in a tensor product with the identity operator $\1$ (i.e. $M_z^{(1)} = M_z \otimes \1$). Thus the average over the longitudinal fields mixes conjugate copies of the system.

It turns out that the eigenvalues of $\mathbb{T}$ have at most unit magnitude and, for odd time $t$, the only unimodular eigenvalue of $\mathbb{T}$ is 1. Thus in the thermodynamic limit $L \rightarrow \infty$, the SFF is simply the total number of linearly independent unimodular eigenvectors of $\mathbb{T}$~\cite{Bertini_2018}. The eigenvectors with unimodular eigenvalues are associated with states $\ket{A}$ which must satisfy
\begin{align}
    \left(\tilde{U}_{\rm KI} \otimes \tilde{U}_{\rm KI}^* \right) \ket{A} &= e^{i\phi} \ket{A} \\
    \left(M_z^{(1)} - M_z^{(2)}\right) \ket{A} &= 0.
\end{align}
These can be rewritten as operator equations by expanding $\ket{A} = \sum\limits_{n,m}^{} {A_{n,m} \ket{n} \otimes \ket{m}^*}$, where ${\ket{n}}$ forms a basis of $\mathcal{H}_t$ and $A_{n,m} = \bra{n}A\ket{m}$ are interpreted as matrix elements,
\begin{align}
    \label{eq:m2 action}
    [A,M_z] = 0 \\ 
    \label{eq:u2 action}
    \tilde{U}_{\rm KI} A \tilde{U}_{\rm KI}^{\dagger} = e^{i\phi} A
\end{align}
The problem of computing the trace of the transfer matrix has now been reduced to counting the number of linearly independent operators satisfying the conditions above. \cite{Bertini_2018} then finds the analytic expression (for odd $t$),
\begin{equation}
    \lim_{L\rightarrow\infty}{K}(t) = \begin{cases}
        2t-1, &t \leq 5 \\
        2t, &t \geq 7
    \end{cases}
\end{equation}
This matches the COE prediction (in the thermodynamic limit $L \rightarrow \infty$) $K_{\rm COE}(t) = 2t - t \ln{(1 + 2t/\mathcal{N})}$ for $0 < t < \mathcal{N}$, where $\mathcal{N} = 2^L$ is the dimension of the Hilbert space $\mathcal{H}_L$. This, however, is a coincidence; real and complex Gaussians with identical 2nd moments will have different higher moments. By the symmetry arguments above, it is the SFF for the $S_+$ (even $L$) or $S_-$ (odd $L$) ensemble that must be compared to $K(t)$. However, these ensembles are not commonly encountered and COE matrices are easy to generate numerically. Thus for any numerical analysis, we will simply use appropriately scaled results for the COE. Higher moments $M_{2\ell} = \mathbb{E}( |Z(t)|^{2\ell} )$ of the COE return amplitude $Z_{\rm COE}(t)$ obey roughly complex Gaussian statistics, which differs from real Gaussian statistics, i.e. the statistics of $Z_{S_{\pm}} (t)$, by a factor of $(2\ell - 1)!!/\ell !$ ($!!$ denotes the double factorial). Thus, we can rescale the COE moment $M^{(\rm COE)}_{2\ell} = \mathbb{E}( |Z_{\rm COE}(t)|^{2\ell} ) \approx \ell ! K_{\rm COE}(t)^{\ell}$ by this factor, and we expect this rescaled moment $((2\ell - 1) !!/\ell!) M^{(\rm COE)}_{2 \ell}$ will be a reasonable proxy for the true moment, $M^{(S_{\pm})}_{2 \ell}$. 
\\\\
For sufficiently large finite size chains, the numerical results will be close to the analytic computation above (which is valid in the thermodynamic limit). We generate $N$ samples of the (spatial) Floquet operator with a normally distributed longitudinal field, then compute the 2nd moment for each sample and average over our ensemble.\footnote{The dual time Floquet operator can also be used; however the matrix size then depends on the timescale and limits analysis over longer timescales.} Comparing to the COE prediction in Figure~\ref{fig:order 2 figs}, we see that the system SFF matches the COE prediction (we also average over $N$ randomly generated COE matrices of appropriate size to get the COE result) for both the open and closed boundary conditions after an early time transient period. % Early time fluctuations can be ignored, as these arise from the finite system size.

\begin{figure}[ht]
    \centering
    \begin{subfigure}[b]{0.49\textwidth}
        \centering
        \includegraphics[width=\textwidth]{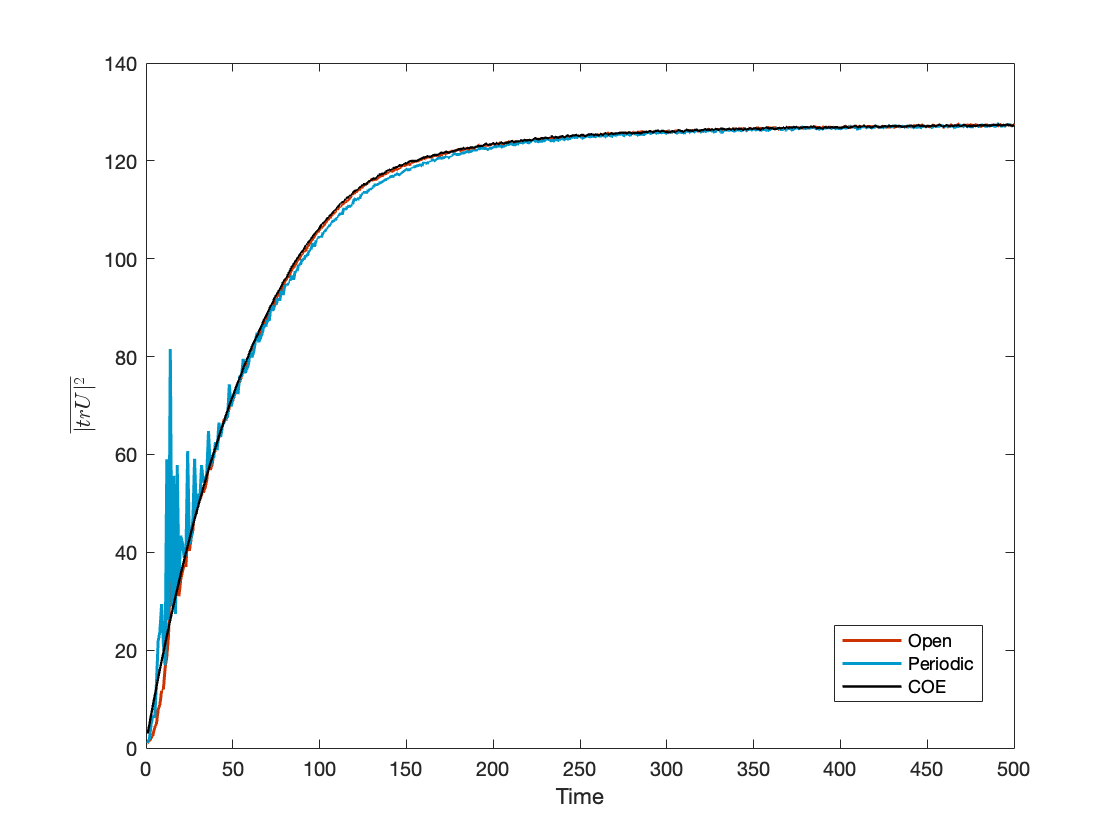}
    \end{subfigure}
    \begin{subfigure}[b]{0.49\textwidth}
        \centering
        \includegraphics[width=\textwidth]{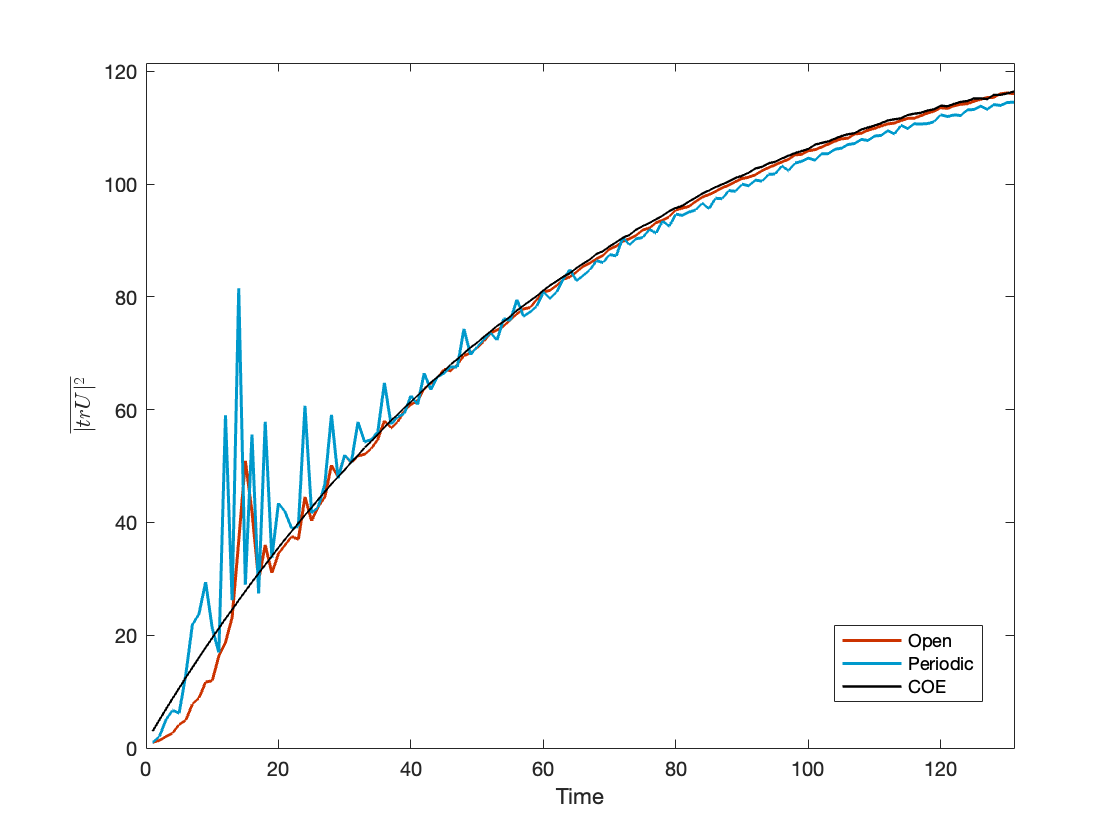}
    \end{subfigure}
    \caption{2nd moment for Floquet and COE trace of 7-qubit model with $J = b = \pi/4$, $N = 10^6$ samples and $\bar{h} = 0.6, \sigma = 100\pi$. The ramp-plateau behavior can be seen in the first plot, with the second focusing on the ramp region}
    \label{fig:order 2 figs}
\end{figure}

We emphasize that the analysis described above only applies to periodic boundary conditions. To obtain the result for open boundary conditions, we must replace the trace of $\mathbb{T}^L$ with a matrix element of $\mathbb{T}^{L-1}$ between appropriately defined boundary vectors, and the appropriate ensemble to compare to will also be different. We will discuss this in Section~\ref{sec:open}. 

\section{Periodic boundary conditions}
\label{sec:periodic}
We start our discussion of periodic boundary conditions by providing some numerical evidence in support of our statement earlier that the higher order moments for the kicked-Ising model should be well approximated by rescaled moments of the COE, even though it is the incorrect ensemble. After providing this numerical evidence, we compute the higher order moments analytically by explicitly constructing unimodular eigenvectors of higher order transfer matrices.

\subsection{Numerics}
For the numerical study in this section, all plots are generated by averaging over $N = 10^6$ samples of \eqref{eq:floquet} defined by $\bar{h} = 0.6, \sigma = 10\pi$ for a $7$-qubit chain ($L = 7$) at the self dual point $J = b = \pi/4$, unless otherwise noted.

\begin{figure}[ht]
    \centering
    \begin{subfigure}[b]{0.49\textwidth}
        \centering
        \includegraphics[width=\textwidth]{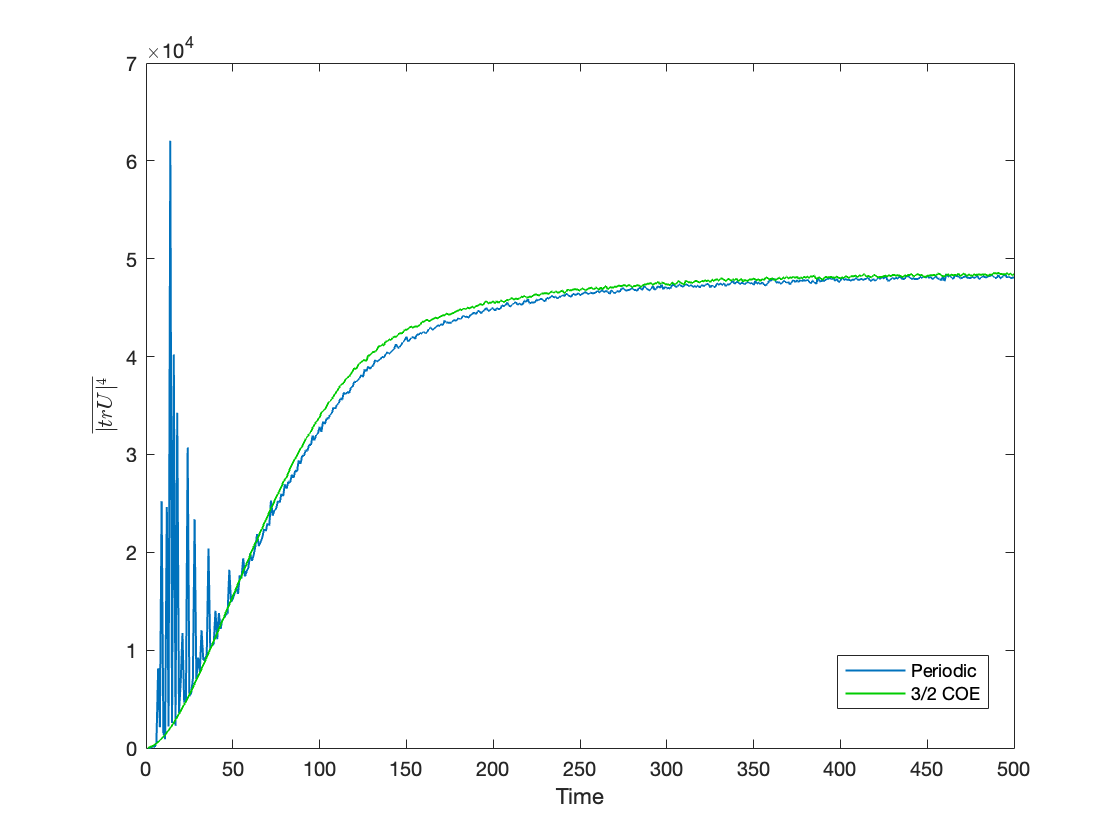}
        \label{fig:periodic 4th}
    \end{subfigure}
    \begin{subfigure}[b]{0.49\textwidth}
        \centering
        \includegraphics[width=\textwidth]{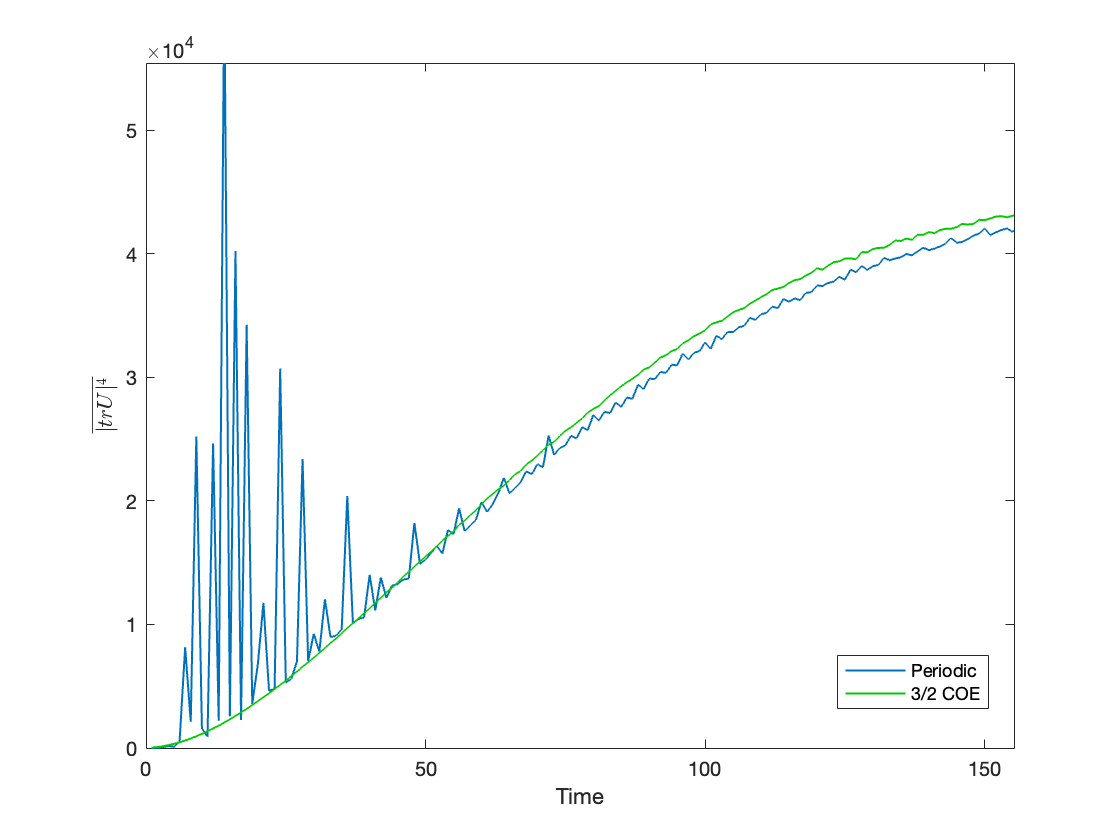}
        \label{fig:periodic 4th zoom}
    \end{subfigure}
    \caption{(Left) 4th moment of $Z(t)$ for the periodic boundary condition Floquet operator. (Right) Zooming in on the ramp region. The green curve is $3/2$ times the COE prediction.}
    \label{fig:periodic o4}
\end{figure}

We show the data for the 4th moment in Figure~\ref{fig:periodic o4}. Up to small deviations, the 4th moment for periodic boundary conditions is close to $3/2$ times the COE prediction. This is precisely $(2\ell-1)!!/\ell!$, the ratio between the 4th moment of a real and complex Gaussian variable, which is what we expect, as by the symmetry arguments $Z(t)$ must behave like a real Gaussian random variable (for periodic boundary conditions).

We can additionally verify this for higher moments, as shown in Figure~\ref{fig:periodic higher order}. The COE predictions would be $6 K^3$ (6th moment) and $24 K^4$ (8th moment), whereas the moments for a real Gaussian variable would be $15 K^3$ (6th moment) and $105 K^4$ (8th moment). Thus the periodic moments are compared to appropriately scaled predictions, i.e. $15/6$ times the COE prediction (6th moment) and $105/24$ times the COE prediction (8th moment).

\begin{figure}[ht]
    \centering
    \begin{subfigure}[b]{0.49\textwidth}
        \centering
        \includegraphics[width=\textwidth]{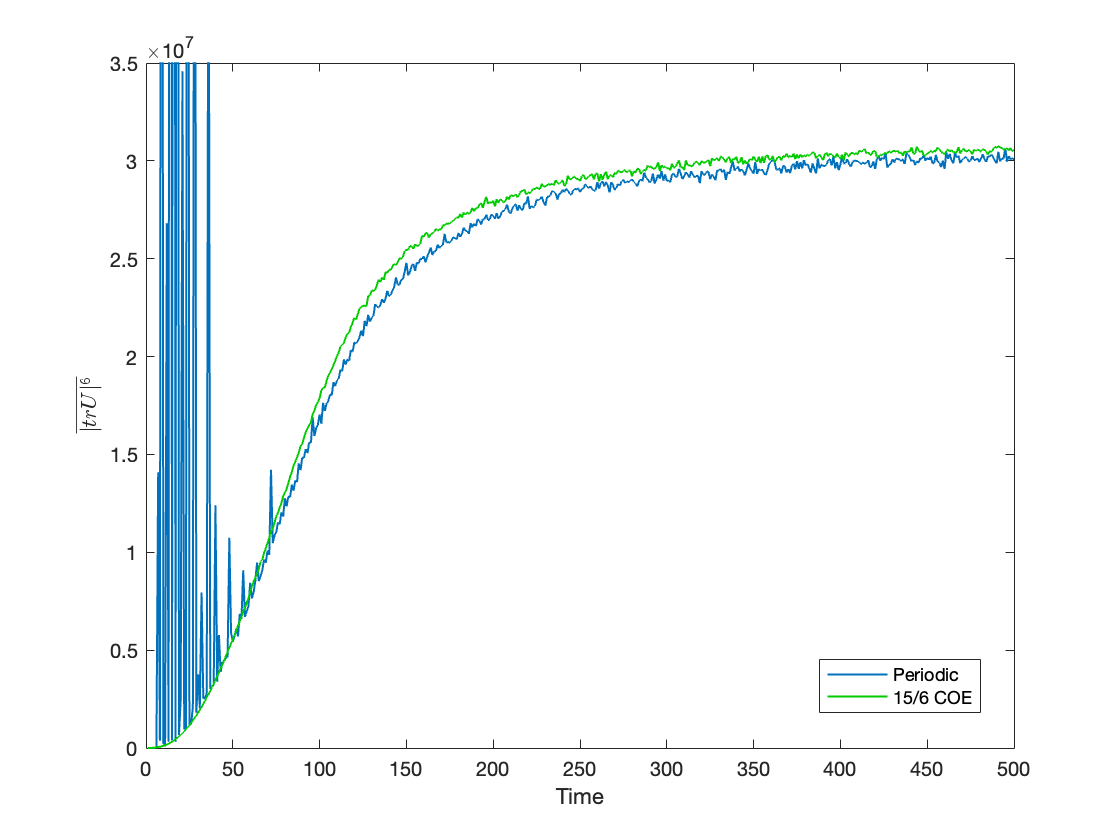}
        \subcaption{6th moment}
        \label{fig:periodic 6th}
    \end{subfigure}
    \begin{subfigure}[b]{0.49\textwidth}
        \centering
        \includegraphics[width=\textwidth]{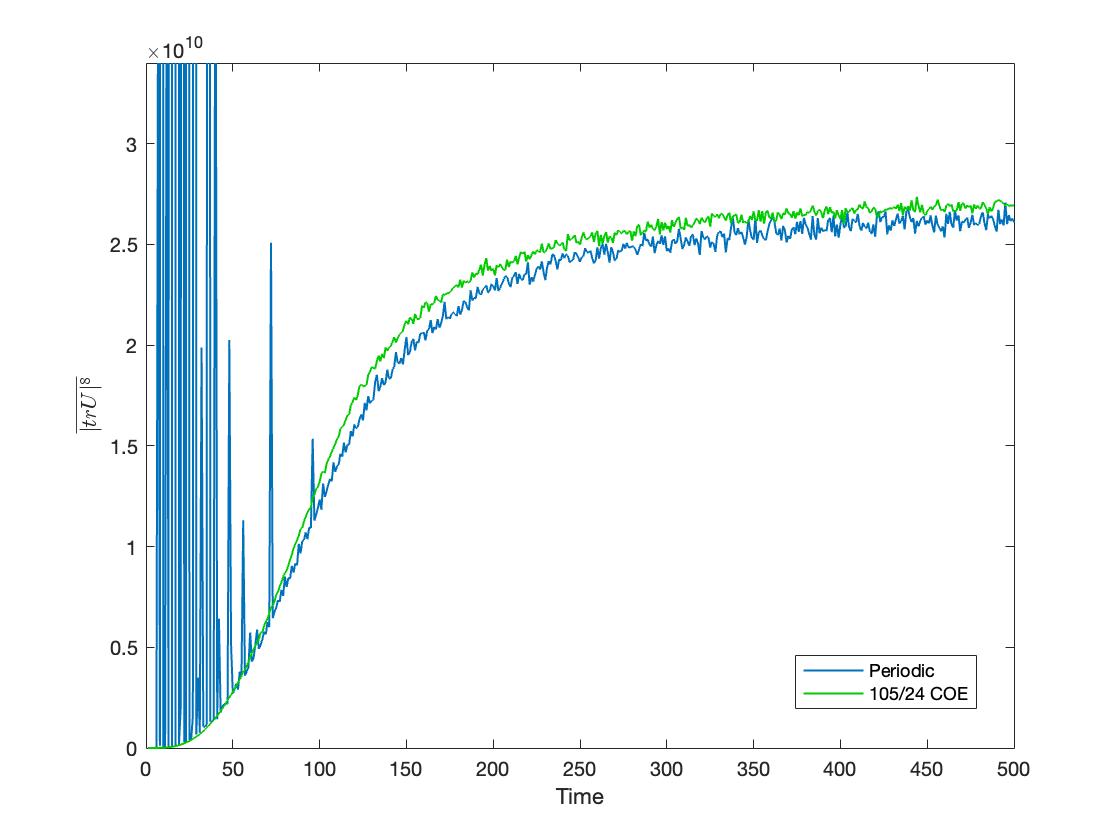}
        \subcaption{8th moment}
        \label{fig:periodic 8th}
    \end{subfigure}
    \caption{(Left, a) 6th moment of $Z(t)$ for Floquet operator. The green curve is $15/6$ times the COE prediction. (Right, b) 8th moment of $Z(t)$ for Floquet operator. The green curve is $105/24$ times the COE prediction.}
    \label{fig:periodic higher order}
\end{figure}

We conclude that up to the 8th moment, the moments are well described by a real Gaussian ansatz for $Z(t)$. The slight deviations in the fit between the scaled COE and periodic boundary conditions may get smaller as the system size increases, and may also simply be small corrections arising between the scaled COE and true ensemble prediction, $S_{\pm}$.

The underlying microscopic situation is further clarified by looking at histograms of the distribution of $Z(t)$. The result for periodic boundary conditions is shown in Figure~\ref{fig:periodic traces} for even and odd times. We see that the histogram is effectively concentrated along a real line in the complex plane, with the angle this line makes with the real axis taking special discrete values depending on $t$. Thus we see that $ e^{i \theta(t)} Z_{\text{real}}(t)$, with $\theta (t)$ periodic, provides a good model of $Z(t)$ in the case of periodic boundary conditions.  Note that this is consistent with the prediction from Sec. \ref{sec:model}, where the Floquet spectrum was predicted to be real up to a phase of $\phi = e^{i \frac{\pi}{4} L t}$.

\begin{figure}[ht]
    \centering
    \begin{subfigure}[b]{0.23\textwidth}
        \centering
        \includegraphics[width=\textwidth]{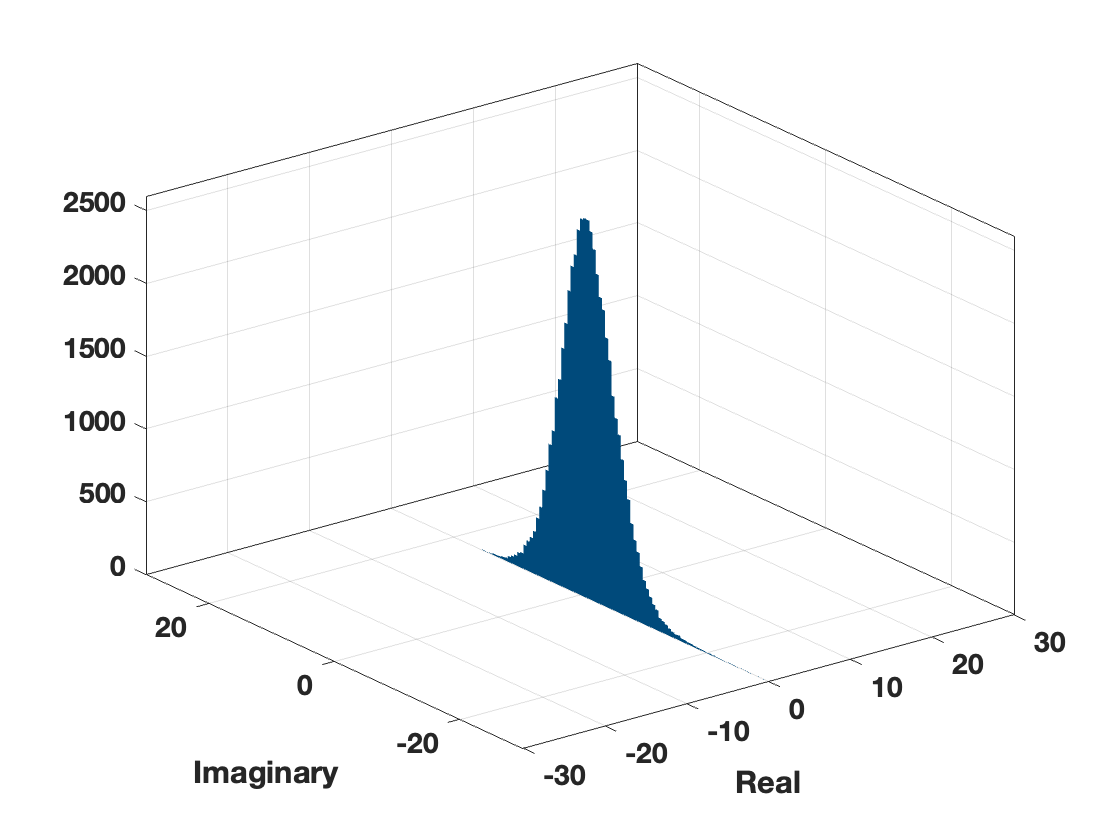}
        \subcaption{$t=10$}
    \end{subfigure}
    \begin{subfigure}[b]{0.23\textwidth}
        \centering
        \includegraphics[width=\textwidth]{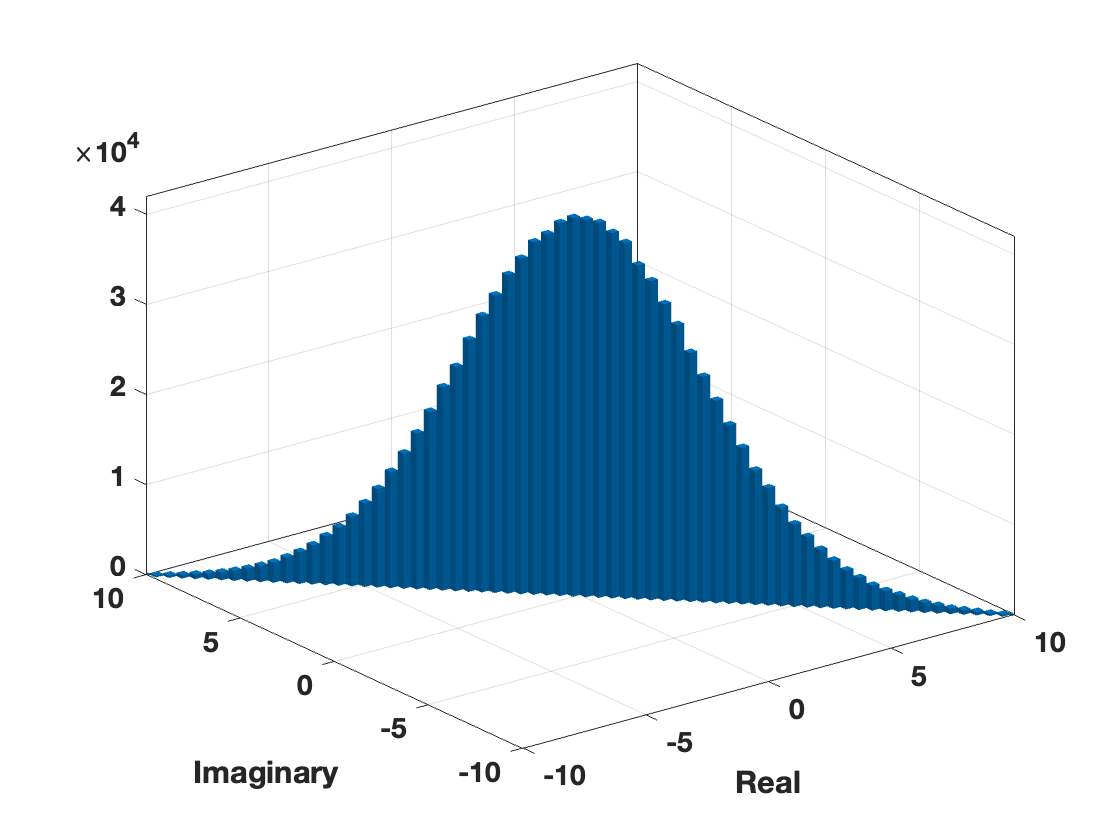}
        \subcaption{$t=11$}
    \end{subfigure}
    \begin{subfigure}[b]{0.23\textwidth}
        \centering
        \includegraphics[width=\textwidth]{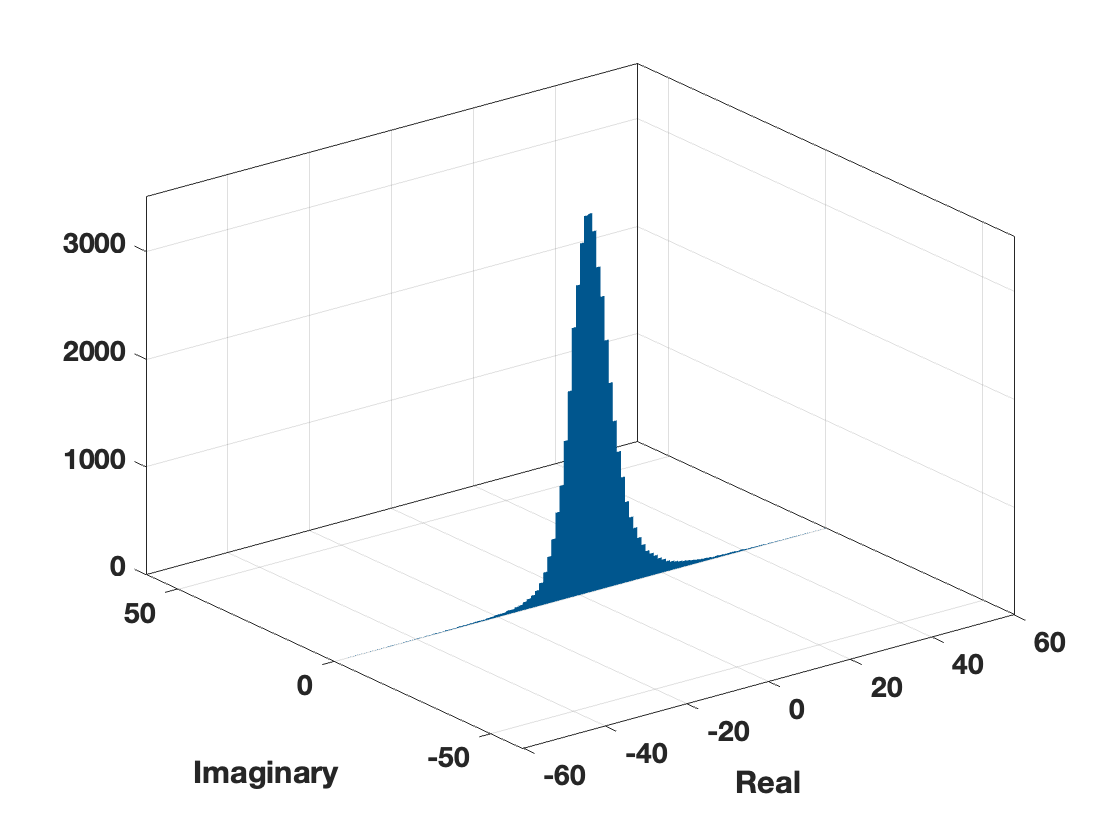}
        \subcaption{$t=12$}
    \end{subfigure}
    \begin{subfigure}[b]{0.23\textwidth}
        \centering
        \includegraphics[width=\textwidth]{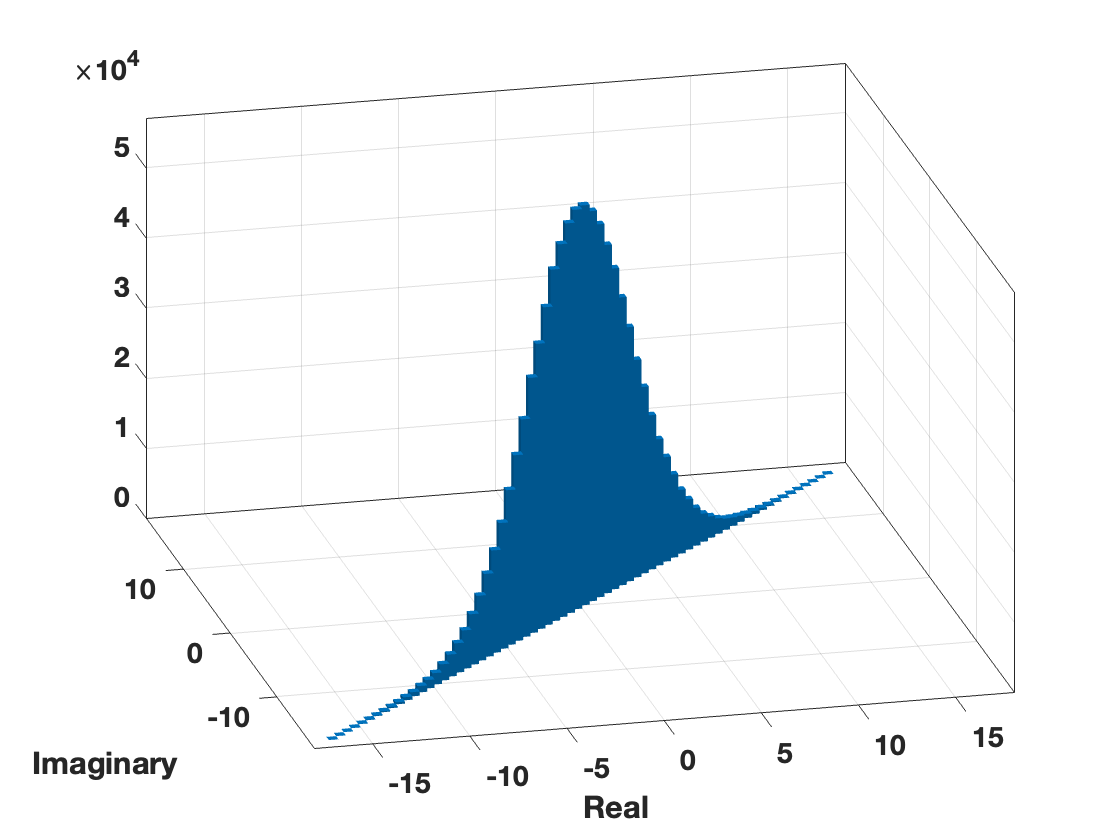}
        \subcaption{$t=13$}
    \end{subfigure}
    \caption{Distribution of $Z(t)$ for periodic boundary conditions.}
    \label{fig:periodic traces}
\end{figure}

\subsection{Transfer matrix}
With numerical evidence behind the claim that $Z(t)$ obeys real random statistics, we now construct and analyze a transfer matrix $\mathbb{T}^{(4)}$ to analytically compute the 4th moment of $Z(t)$. We show that the eigenvectors with unimodular eigenvalues that contributed to the 2nd moment can also be used to construct analogous eigenvectors that contribute to the 4th moment. The naive number of such unimodular eigenvalues is $2 K^2$. We then show that there is another class of eigenvectors that also contribute unimodular eigenvalues to the 4th moment transfer matrix. This enables us to give analytic evidence for our results in the previous subsection (in the TL), including for moments higher than 4th order. We will restrict our analysis to odd time as in \cite{Bertini_2018}.

The construction parallels the analysis reviewed in Section~\ref{sec:model}, now with 4 copies of the 2-dimensional Ising lattice: 2 copies of $U$ and 2 copies of $U^*$. This is to compute the 4th moment of ${\rm tr}\left(U^t_{\rm KI}\right)$,
\begin{equation}
    M_4(t) = \mathbb E_{h}\left[ {\rm tr}\left(\tilde{U}^L_{\rm KI}[h \boldsymbol \epsilon]\right){\rm tr}\left(\tilde{U}^L_{\rm KI}[h \boldsymbol \epsilon]\right)^* {\rm tr}\left(\tilde{U}^L_{\rm KI}[h \boldsymbol \epsilon]\right){\rm tr}\left(\tilde{U}^L_{\rm KI}[h \boldsymbol \epsilon]\right)^* \right].
\end{equation}
There are multiple ways to represent the product of traces as a trace over a tensor product. We choose to work with an alternating pattern, $U \otimes U^* \otimes U \otimes U^*$, which we refer to as a signature and denote $(-,*,-,*)$. Using $\left|{\rm tr} (U)\right|^4 = {\rm tr}(U \otimes U^* \otimes U \otimes U^*)$ we can rewrite the 4th order moment using a transfer matrix $\mathbb{T}^{(4)}$:
\begin{align}
    \mathbb{T}^{(4)} &= \mathbb E_{h}\left[\tilde{U}_{\rm KI}[h \boldsymbol \epsilon] \otimes \tilde{U}_{\rm KI}[h \boldsymbol \epsilon]^* \otimes \tilde{U}_{\rm KI}[h \boldsymbol \epsilon] \otimes \tilde{U}_{\rm KI}[h \boldsymbol \epsilon]^*\right] = (\tilde{U}_{\rm KI} \otimes \tilde{U}_{\rm KI}^* \otimes \tilde{U}_{\rm KI} \otimes \tilde{U}_{\rm KI}^* )\cdot \mathbb{O}^{(4)}_{\sigma} \\
    \mathbb{O}^{(4)}_{\sigma} &= \exp{ \left[-\frac{\sigma^2}{2} \left(M^{(1)}_z - M^{(2)}_z + M^{(3)}_z - M^{(4)}_z\right)^2\right]}
\end{align}
where $\tilde{U}_{\rm KI} \equiv \tilde{U}_{\rm KI}[\bar{h} \boldsymbol \epsilon]$, $M_{\alpha} \equiv \sum_{\tau = 1}^t \sigma^\alpha_\tau$ for $\alpha \in \{a,y,z\}$, and the superscripts on $M$ denote its position in a tensor product with the identity operator $\1$ (i.e. $M_z^{(1)} = M_z \otimes \1 \otimes \1 \otimes \1$). 

Thus we obtain the 4th moment for periodic boundary conditions, 
\begin{equation}
    M_4 = {\rm tr}\left(\mathbb{T}^{{(4)}^L}\right).
\end{equation}
Following the same reasoning as for the 2nd order transfer matrix, the eigenvalues of $\mathbb{T}^{(4)}$ will have magnitude less than or equal to one, so in the thermodynamic limit (TL) $L\to \infty$ the trace will be controlled by the unimodular eigenvalues. Thus we consider $\ket{A^{(4)}}$, a unimodular eigenvector of $\mathbb{T}^{(4)}$ that must satisfy,
\begin{align}
\label{eq:u4 action}
    \left(\tilde{U}_{\rm KI} \otimes \tilde{U}_{\rm KI}^* \otimes \tilde{U}_{\rm KI} \otimes \tilde{U}_{\rm KI}^* \right) \ket{A^{(4)}} &= e^{i\Phi} \ket{A^{(4)}} \\
    \label{eq:m4 action}
    \left(M_z^{(1)} - M_z^{(2)} + M_z^{(3)} - M_z^{(4)}\right) \ket{A^{(4)}} &= 0
\end{align}
Expanding $\ket{A^{(4)}} = \sum A_{n,m,n',m'} \ket{n,m,n',m'}$ as an arbitrary vector in terms of $\{\ket{n}\}$, which form basis vectors of $\mathcal{H}_t$, we can once again reduce the equations above to constraints on the operator $A$. Note that the sum over $\{n,m,n',m'\}$ is implied and $\ket{n,m,n',m'} \equiv \ket{n}\otimes\ket{m}^* \otimes \ket{n'} \otimes \ket{m'}^*$.
\\\\
A complete count of the unimodular eigenvectors $\ket{A^{(4)}}$ would involve repeating arguments similar to \cite{Bertini_2018} but instead now with a rank 4 tensor $A_{n,m,n'm'}$. However, we can simply use the 2nd order eigenvectors to construct some of the 4th order eigenvectors. This of course only provides a lower bound on the trace and thus the 4th moment; however showing that this lower bound is larger than the COE prediction is sufficient to show that the 4th moment for periodic boundary conditions deviates from it. The numerical data suggests that the lower bound is in fact the true number of eigenvectors, at least at sufficiently large $L$ and $t$; however, we do not have a proof of this.
\\\\
This idea of constructing 4th order eigenvectors from 2nd order ones leverages the simple relationship between the 4th ($M_4$) and 2nd ($M_2 = K$) order moments: $M_4 = 2 K^2$ for a complex random variable and $M_4 = 3K^2$ for a real random variable. The factor of $K^2$ suggests that eigenvectors of $\mathbb{T}^{(4)}$ can be formed by pairing together eigenvectors of $\mathbb{T}$. The COE prediction matches that of a complex random variable, which would suggest there are 2 such pairings. We claim that there in fact (at least) 3 pairings that form unimodular eigenvectors, and thus the 4th moment is (at least) that of a real random variable, as the numerical data suggests.

\begin{Claim}
    \label{eq:claim1}
    $\ket{A^{(4)}}  = \sum  A_{nm}  A'_{n'm'} \ket{n,m,n',m'}$ is a unimodular eigenvector of $\mathbb{T}^{(4)}$, where $A,A'$ are operator representations of unimodular eigenvectors of $\mathbb{T}$.
\end{Claim}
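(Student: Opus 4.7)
The plan is to exploit the alternating signature $(-,*,-,*)$ of the four-copy transfer matrix to split the copies into two pairs, $(1,2)$ and $(3,4)$, each of which carries the structure $U \otimes U^*$ of the second-order problem. Under the identification $\ket{A^{(4)}} = \ket{A}_{12} \otimes \ket{A'}_{34}$, verifying that $\ket{A^{(4)}}$ is a unimodular eigenvector of $\mathbb{T}^{(4)}$ reduces to checking that the two defining conditions \eqref{eq:u4 action} and \eqref{eq:m4 action} follow from the corresponding second-order conditions for $\ket{A}$ and $\ket{A'}$.

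For the zero-charge constraint \eqref{eq:m4 action}, I would observe that $M_z^{(1)} - M_z^{(2)}$ acts as the identity on copies $(3,4)$ and annihilates $\ket{A}_{12}$ by the 2nd-order constraint, while $M_z^{(3)} - M_z^{(4)}$ annihilates $\ket{A'}_{34}$ analogously. Summing these gives $\left(M_z^{(1)} - M_z^{(2)} + M_z^{(3)} - M_z^{(4)}\right) \ket{A^{(4)}} = 0$, so $\mathbb{O}^{(4)}_\sigma$ acts as the identity on $\ket{A^{(4)}}$. For the Floquet condition, the operator $\tilde{U}_{\rm KI} \otimes \tilde{U}_{\rm KI}^* \otimes \tilde{U}_{\rm KI} \otimes \tilde{U}_{\rm KI}^*$ factorizes cleanly across the two pairs; each factor acts on its respective tensor slot and produces a unimodular phase $e^{i\phi}$ or $e^{i\phi'}$ by the hypothesis on $\ket{A}$ and $\ket{A'}$, giving total eigenvalue $e^{i(\phi + \phi')}$, which has unit modulus as required.

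Combining the two pieces yields $\mathbb{T}^{(4)} \ket{A^{(4)}} = e^{i(\phi + \phi')} \ket{A^{(4)}}$, which is exactly the statement of the claim. The argument is essentially structural rather than computational; the only nontrivial input is that the alternating signature $(-,*,-,*)$ is compatible with the chosen pairing, so that both pieces of $\mathbb{T}^{(4)}$ respect the tensor decomposition into copies $(1,2)$ and $(3,4)$. The genuine obstacle lies outside this particular claim: to match the real-Gaussian factor of $3$ one must exhibit a third, independent family of pairings contributing additional unimodular eigenvectors, and ultimately argue that no further eigenvectors exist at large $L$ and $t$ beyond these three classes.
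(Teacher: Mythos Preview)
Your proposal is correct and is essentially the same argument as the paper's: both verify \eqref{eq:u4 action} and \eqref{eq:m4 action} by splitting the four copies into the pairs $(1,2)$ and $(3,4)$ and invoking the second-order conditions on each factor. The only difference is presentational---the paper carries out the computation explicitly in indices (writing out $U_{kn}A_{nm}U^\dagger_{ml}$ and $[M,A]_{kl}$, etc.), whereas you phrase it structurally via the tensor factorization $\ket{A^{(4)}}=\ket{A}_{12}\otimes\ket{A'}_{34}$; the content is identical.
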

\begin{proof}
    Start by applying the left side of \eqref{eq:u4 action} to the eigenvector,
    \begin{align*}
        (U\otimes U^* \otimes U \otimes U^*)\ket{A^{(4)}} &= \sum_{n,m,n',m'} A_{nm} A'_{n'm'} \ket{Un,U^*m,Un',U^*m'} \\
        & = \sum_{n,m,n',m'} \sum_{k,l,k',l'} A_{nm} A'_{n'm'} U_{kn}U^{*}_{lm}U_{k'n'} U^{*}_{l'm'} \ket{k,l,k',l'} \\
        & = \sum_{n,m,n',m'} \sum_{k,l,k',l'} \left(U_{kn} A_{nm} U^{\dagger}_{ml}\right) \left( U_{k'n'} A'_{n'm'} U^{\dagger}_{m'l'}\right) \ket{k,l,k',l'} \\
        & = \sum_{k,l,k',l'} (U A U^{\dagger})_{kl} ( U A' U^{\dagger})_{k'l'} \ket{k,l,k',l'} \\
        &= \sum_{k,l,k',l'}  (e^{i\phi}A_{kl})( e^{i\phi'} A'_{k'l'}) \ket{k,l,k',l'} =  e^{i\Phi} \ket{A^{(4)}}
    \end{align*}
    where $U \equiv \tilde{U}_{\rm KI}, \Phi \equiv \phi + \phi'$ and to go to the last line we have used that $A,A'$ satisfy \eqref{eq:u2 action}. Thus \eqref{eq:u4 action} is satisfied.

Now apply the left side of \eqref{eq:m4 action} to $\ket{A^{(4)}}$,
\begin{align*}
&\left(M^{(1)} - M^{(2)} + M^{(3)} - M^{(4)}\right) \ket{A^{(4)}} \\
&= \sum_{n,m,n',m'} A_{nm} A'_{n'm'}   \left(\ket{M n,m,n',m'} - \ket{n,M m,n',m'} + \ket{n,m,M n',m'} - \ket{n,m,n',M m'}\right) \\
&= \sum_{n,m,n',m'} \sum_{k,l,k',l'} \left(M_{kn}A_{nl}  - A_{km}M^T_{ml}\right) A'_{n'm'} \ket{k,l,n',m'} + A_{nm} \left(M_{k'n'}A'_{n'l'}  - A'_{k'm'}M^T_{m'l'}\right) \ket{n,m,k',l'} \\
& = \sum_{k,l,k',l'} \left([M,A]_{kl} A'_{k'l'} +  [M,A']_{k'l'} A_{kl} \right) \ket{k,l,k',l'} = 0
\end{align*}
where we drop the subscript $z$ on $M_z$ and since $M_z$ is a real Hermitian matrix, it is symmetric ($T$ is the transpose operation). In the last line we again use that $A,A'$ satisfy \eqref{eq:m2 action}, thus showing that \eqref{eq:m4 action} is satisfied.
\end{proof}

\begin{Claim}
    \label{eq:claim2}
    $\ket{A^{(4)}}  = \sum  A_{nm'}  A'_{n'm} \ket{n,m,n',m'}$ is a unimodular eigenvector of $\mathbb{T}^{(4)}$, where $A,A'$ are operator representations of unimodular eigenvectors of $\mathbb{T}$.
\end{Claim}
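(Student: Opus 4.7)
The plan is to mimic the proof of Claim~\ref{eq:claim1} essentially verbatim, with the only novelty being that $A_{nm'}A'_{n'm}$ implements a \emph{crossed} contraction: $A$ now has its first index $n$ tied to slot~1 (non-conjugate) and its second index $m'$ tied to slot~4 (conjugate), while $A'$ has its first index $n'$ tied to slot~3 and its second index $m$ tied to slot~2. The two sub-verifications are (i)~the Floquet eigenvalue equation~\eqref{eq:u4 action}, obtained by collecting the matrix elements of the four $\tilde U_{\rm KI}$ factors onto the correct operator, and (ii)~the constraint~\eqref{eq:m4 action}, obtained by grouping the four $M_z$ contributions into commutators.

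For step~(i), I would expand $(\tilde U_{\rm KI}\otimes \tilde U_{\rm KI}^*\otimes \tilde U_{\rm KI}\otimes \tilde U_{\rm KI}^*)\ket{A^{(4)}}$ in matrix elements exactly as in Claim~\ref{eq:claim1} and regroup the four factors along the crossed pairing. Slots~1 and~4 supply $U_{kn}$ and $U^\dagger_{m'l'}$, so the sandwich with $A$ becomes $U_{kn}A_{nm'}U^\dagger_{m'l'}=(UAU^\dagger)_{kl'}$; slots~2 and~3 supply $U^\dagger_{ml}$ and $U_{k'n'}$, giving $U_{k'n'}A'_{n'm}U^\dagger_{ml}=(UA'U^\dagger)_{k'l}$. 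Applying \eqref{eq:u2 action} to each sandwich then yields $e^{i\phi}A_{kl'}\cdot e^{i\phi'}A'_{k'l}$, which is $e^{i\Phi}$ with $\Phi=\phi+\phi'$ times the original coefficient of $\ket{k,l,k',l'}$ in $\ket{A^{(4)}}$.

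For step~(ii), I would act with $M_z^{(1)}-M_z^{(2)}+M_z^{(3)}-M_z^{(4)}$ on $\ket{A^{(4)}}$ and, after relabelling dummy indices and using $M_z^T=M_z$, sort the four resulting contributions according to whether they act on an $A$-index or an $A'$-index. The key observation is that in the alternating sign pattern $(+,-,+,-)$, the two slots contracted with $A$ (namely 1 and 4) carry opposite signs, and likewise the two slots contracted with $A'$ (namely 2 and 3) carry opposite signs. The four terms therefore organize as $[M_z,A]_{kl'}A'_{k'l}+A_{kl'}[M_z,A']_{k'l}$, which vanishes by \eqref{eq:m2 action}.

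The main obstacle is precisely this sign bookkeeping in step~(ii): the crossed pairing is admissible only because the $(+,-,+,-)$ pattern of the $M_z$-constraint distributes opposite signs across each pair of slots picked out by the pairing, which is what lets the commutators reassemble. This compatibility also clarifies why the remaining ``non-crossing'' pairing (contracting slots 1 with 3 and 2 with 4) would not produce a further unimodular eigenvector by the same argument: that pairing would sandwich $A$ between two $\tilde U_{\rm KI}$'s rather than a $\tilde U_{\rm KI}$ and a $\tilde U_{\rm KI}^*$, so \eqref{eq:u2 action} would no longer apply in step~(i).
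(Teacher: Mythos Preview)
Your proposal is correct and follows essentially the same route as the paper's proof: the paper verifies \eqref{eq:u4 action} by regrouping the four $U$ factors along the crossed pairing to obtain $(UAU^\dagger)_{kl'}(UA'U^\dagger)_{k'l}$ and then invokes \eqref{eq:u2 action}, and verifies \eqref{eq:m4 action} by collapsing the four $M_z$ terms into $[M,A]_{kl'}A'_{k'l}+[M,A']_{k'l}A_{kl'}$ via \eqref{eq:m2 action}, exactly as you outline. Your closing remark about why the $(1,3)$--$(2,4)$ pairing fails under the \emph{same} argument is also consistent with the paper, which handles that case separately in Claim~\ref{eq:claim3} using additional structure specific to the self-dual point.
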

\begin{proof}
    We apply \eqref{eq:u4 action} to the eigenvector just as in the previous claim, but with $n$ "paired" with $m'$ and $n'$ with $m'$,
    \begin{align*}
        (U\otimes U^* \otimes U \otimes U^*)\ket{A^{(4)}} &= \sum_{n,m,n',m'} A_{nm'} A'_{n'm} \ket{Un,U^*m,Un',U^*m'} \\
        & = \sum_{k,l,k',l'} (U A U^{\dagger})_{kl'} (U A' U^{\dagger})_{k'l} \ket{k,l,k',l'} \\
        &= \sum_{k,l,k',l'}  (e^{i\phi}A_{kl'})( e^{i\phi'} A'_{k'l}) \ket{k,l,k',l'} =  e^{i\Phi} \ket{A^{(4)}}
    \end{align*}
Applying \eqref{eq:m4 action} to $\ket{A^{(4)}}$ is similar,
\begin{align*}
\left(M^{(1)} - M^{(2)} + M^{(3)} - M^{(4)}\right) \ket{A^{(4)}} = \sum_{k,l,k',l'} \left([M,A]_{kl'} A'_{k'l} +  [M,A']_{k'l} A_{kl'} \right) \ket{k,l,k',l'} = 0
\end{align*}
Thus both \eqref{eq:u4 action} and \eqref{eq:m4 action} are satisfied.
\end{proof}

The 3rd pairing is slightly different. First we recall the form of the dual transfer matrix $\tilde{U}_{\rm KI}[\bar{h} \boldsymbol \epsilon] = \tilde{U}_K \tilde{U}_I [\bar{h} \boldsymbol \epsilon]$, where $\tilde{U}_I [\bar{h} \boldsymbol \epsilon] = \exp{\left(-i J' \sum \limits_{\tau = 1}^t \sigma_\tau^z \sigma_{\tau + 1}^z \right)} \exp{\left(-i \bar{h} M_z \right)}$ and $\tilde{U}_K = \exp{\left(-i b M_x \right)}$ are the (dual) interaction and kick terms. We also define $Q \equiv \prod \limits_{\tau = 1}^t \sigma_\tau^z$. Observe that $Q$ is diagonal and $Q_{-n,-n} = Q_{n,n} (-1)^t$. Furthermore we need the following lemma,
\begin{Lemma}
    \label{lemma: kick flip}
    $ \left(\tilde{U}_K \right)_{k,n} = \left(\tilde{U}_K \right)_{-k,-n}$
\end{Lemma}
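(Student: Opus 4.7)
The plan is to exploit the tensor product structure of $\tilde U_K$ together with the fact that $M_x=\sum_\tau \sigma^x_\tau$ is manifestly invariant under a global spin-flip in the computational ($\sigma^z$) basis. Since the $\sigma^x_\tau$ on different sites commute,
\begin{equation*}
\tilde U_K \;=\; e^{-ib M_x} \;=\; \bigotimes_{\tau=1}^{t} e^{-ib\,\sigma^x_\tau},
\end{equation*}
so the matrix element $(\tilde U_K)_{k,n}$ in the computational basis factorizes into a product of single-site matrix elements $\langle k_\tau|e^{-ib\sigma^x}|n_\tau\rangle$, where I write $|n\rangle = |n_1\rangle\otimes\cdots\otimes|n_t\rangle$ and interpret ``$-n$'' as the simultaneous flip $n_\tau\mapsto -n_\tau$ for all $\tau$ (consistent with the relation $Q_{-n,-n}=(-1)^t Q_{n,n}$ noted just above the lemma).

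Next I would evaluate the single-site factor using $e^{-ib\sigma^x}=\cos(b)\,\1-i\sin(b)\,\sigma^x$. In the computational basis the diagonal entries are both $\cos b$ and the off-diagonal entries are both $-i\sin b$, so each factor satisfies $\langle k_\tau|e^{-ib\sigma^x}|n_\tau\rangle=\langle -k_\tau|e^{-ib\sigma^x}|-n_\tau\rangle$: flipping both row and column index sends diagonal entries to diagonal entries and off-diagonal to off-diagonal, with equal values. Taking the product over $\tau$ immediately yields the lemma.

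An equivalent, more structural route I could present instead is to introduce the global spin-flip operator $F\equiv\prod_{\tau=1}^{t}\sigma^x_\tau$, which acts on computational basis kets as $F|n\rangle=|-n\rangle$ and satisfies $F=F^\dagger$, $F^2=\1$. Since $[F,\sigma^x_\tau]=0$ for every $\tau$, we have $[F,M_x]=0$ and therefore $[F,\tilde U_K]=0$, giving
\begin{equation*}
(\tilde U_K)_{k,n} \;=\; \langle k|\tilde U_K|n\rangle \;=\; \langle k|F^\dagger\tilde U_K F|n\rangle \;=\; \langle -k|\tilde U_K|-n\rangle.
\end{equation*}

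There is essentially no obstacle here: this is a one-line symmetry statement. The only point that requires care is fixing the notational convention for ``$-n$'' as the global computational-basis spin-flip, so that the identity $F|n\rangle=|-n\rangle$ holds on the nose and the lemma can be applied cleanly downstream in constructing the third pairing of unimodular eigenvectors of $\mathbb{T}^{(4)}$.
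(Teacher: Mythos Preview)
Your proposal is correct. Your second, ``structural'' route is essentially the paper's own argument: the paper implements the global spin flip via the $\pi$-rotation $e^{\pm i\frac{\pi}{2}M_x}$ (which equals $(\pm i)^t\prod_\tau\sigma^x_\tau$, i.e.\ your $F$ up to a global phase that cancels between bra and ket) and then uses that this rotation commutes with $e^{-ibM_x}$. Your first route---factorizing $\tilde U_K=\bigotimes_\tau e^{-ib\sigma^x_\tau}$ and checking the single-site identity $\langle k_\tau|e^{-ib\sigma^x}|n_\tau\rangle=\langle -k_\tau|e^{-ib\sigma^x}|-n_\tau\rangle$ from the explicit form $\cos b\,\1 - i\sin b\,\sigma^x$---is a slightly more elementary alternative that bypasses any phase bookkeeping; it buys directness at the cost of being less manifestly a symmetry statement.
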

\begin{proof}
    Here $\ket{-n}$ is defined by flipping all the spins in the state $\ket{n}$, i.e.  $\bra{n}M_z \ket{n} = - \bra{-n}M_z \ket{-n}$. Observing that $\ket{-n}$ is simply given by rotating $\ket{n}$ by $\pi$ about the x-axis, 
\begin{align*}
        \left(\tilde{U}_K \right)_{-k,-n} &= \bra{-k} \exp{\left(- i b M_x\right)} \ket{-n}\\ 
&= \bra{k} \exp{\left(- i \frac{\pi}{2} M_x\right)} \exp{\left( -i b M_x\right)} \exp{\left( i \frac{\pi}{2} M_x\right)} \ket{n} = \left(\tilde{U}_K \right)_{k,n}
\end{align*}
\end{proof}
\begin{Claim}
\label{eq:claim3}
    $\ket{A^{(4)}}  = \sum  (AQ)_{n,-n'}  (QA')_{-m,m'} \ket{n,m,n',m'}$ is a unimodular eigenvector of $\mathbb{T}^{(4)}$, where $A,A'$ are operator representations of unimodular eigenvectors of $\mathbb{T}$.
\end{Claim}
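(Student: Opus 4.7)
The plan is to mimic the structure of the proofs of Claims~\ref{eq:claim1} and~\ref{eq:claim2}, verifying \eqref{eq:m4 action} and \eqref{eq:u4 action} separately, with the new ingredient being a conjugation symmetry of the dual Floquet operator at the self-dual point that accounts for the extra $Q$'s and flipped indices.

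For \eqref{eq:m4 action}, the argument is direct. Since $[A,M_z]=0$, the entry $A_{n,-n'}$ can be nonzero only when $m_n=m_{-n'}=-m_{n'}$, forcing $m_n+m_{n'}=0$, where $m_n\equiv\bra{n}M_z\ket{n}$; likewise $(QA')_{-m,m'}$ is nonzero only when $m_m+m_{m'}=0$. On the support of $A^{(4)}$ the eigenvalue of $M_z^{(1)}-M_z^{(2)}+M_z^{(3)}-M_z^{(4)}$ therefore equals $(m_n+m_{n'})-(m_m+m_{m'})=0$.

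For \eqref{eq:u4 action}, introduce the spin-flip operator $X=\prod_{\tau=1}^{t}\sigma_\tau^x$ so that $X_{a,b}=\delta_{a,-b}$. Applying $\tilde U_{\rm KI}\otimes\tilde U_{\rm KI}^{*}\otimes\tilde U_{\rm KI}\otimes\tilde U_{\rm KI}^{*}$ to $\ket{A^{(4)}}$ and regrouping the unstarred and starred slots rewrites the matrix-element condition as the pair of operator identities
\[
\tilde U_{\rm KI}\,(AQX)\,\tilde U_{\rm KI}^T=\lambda_1\,AQX,\qquad \tilde U_{\rm KI}^{*}\,(XQA')\,\tilde U_{\rm KI}^\dagger=\lambda_2\,XQA',
\]
for unimodular $\lambda_1,\lambda_2$. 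The key new observation is a dual conjugation symmetry $\tilde U_{\rm KI}^{*}=\tilde U_{\rm KI}X$ at the self-dual point, equivalently $\tilde U_{\rm KI}^T=X\tilde U_{\rm KI}^\dagger$. I would derive it by writing $\tilde U_{\rm KI}^{*}=\tilde U_K^{-1}\tilde U_I^{-1}$ and then applying $\tilde U_K^{-1}=i^t\tilde U_K X$ (a consequence of $\tilde U_K^{-2}=i^tX$ at $b=\pi/4$, which in turn relies on Lemma~\ref{lemma: kick flip}) together with $\tilde U_I^{-1}=(-i)^t\,X\tilde U_I X$ (combining $X\tilde U_I X=\tilde U_I[-\bar h]$ with the identity $e^{-i(\pi/2)\sum_\tau\sigma_\tau^z\sigma_{\tau+1}^z}=(-i)^t$ valid at $J'=-\pi/4$ under periodic time boundary conditions, where $\prod_\tau\sigma_\tau^z\sigma_{\tau+1}^z=1$); the factors $i^t$ and $(-i)^t$ cancel, leaving $\tilde U_{\rm KI}^{*}=\tilde U_{\rm KI}X$.

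With this symmetry, and the auxiliary identity $\tilde U_{\rm KI}\,Q\,\tilde U_{\rm KI}^\dagger=Q\tilde U_K^{-2}=i^tQX$ (which follows from $Q\tilde U_K=\tilde U_K^{-1}Q$ and $[Q,\tilde U_I]=0$), the unstarred identity collapses via $\tilde U_{\rm KI}(AQX)\tilde U_{\rm KI}^T=\tilde U_{\rm KI}AQ\tilde U_{\rm KI}^\dagger=(e^{i\phi}A)(i^tQX)=i^te^{i\phi}\,AQX$ upon invoking the 2nd-order relation $\tilde U_{\rm KI}A\tilde U_{\rm KI}^\dagger=e^{i\phi}A$; analogously, the starred identity yields $i^te^{i\phi'}QXA'=(-i)^te^{i\phi'}\,XQA'$ after applying $QX=(-1)^tXQ$. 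Multiplying the two phases, $i^t e^{i\phi}\cdot(-i)^t e^{i\phi'}=e^{i(\phi+\phi')}$, which is unimodular, as desired. The main obstacle is establishing the dual conjugation symmetry $\tilde U_{\rm KI}^{*}=\tilde U_{\rm KI}X$ cleanly; the rest is operator bookkeeping, and it is the cancellation of the $i^t$ and $(-i)^t$ factors here (which mirrors the cancellation implicit in Claims~\ref{eq:claim1} and~\ref{eq:claim2}) that makes the same eigenvalue $e^{i(\phi+\phi')}$ reappear for this third class of eigenvectors.
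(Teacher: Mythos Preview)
Your proof is correct and takes a genuinely different route from the paper's. The paper works entirely at the index level: it separates $\tilde U_{\rm KI}$ into kick and interaction parts, proves $U_K^\dagger = Q U_K^T Q$ from $Q M_x Q = -M_x$, and then handles the interaction term by explicitly computing its eigenvalues $\gamma_n = e^{\pm i\frac{\pi}{4}(t-4f_n)}$ on computational-basis states (where $f_n$ counts certain spin flips), finally arguing that $\gamma_{n'}^2(\gamma_m^*)^2$ is state-independent and cancels. Your approach instead packages everything into two operator identities at the self-dual point: the dual conjugation relation $\tilde U_{\rm KI}^* = \tilde U_{\rm KI} X$ (equivalently $\tilde U_{\rm KI}^T = X\tilde U_{\rm KI}^\dagger$) and $\tilde U_{\rm KI}\,Q\,\tilde U_{\rm KI}^\dagger = i^t QX$. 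Introducing $X$ to rewrite $(AQ)_{n,-n'} = (AQX)_{n,n'}$ and $(QA')_{-m,m'} = (XQA')_{m,m'}$ then reduces \eqref{eq:u4 action} to two clean operator equations whose phases $i^t e^{i\phi}$ and $(-i)^t e^{i\phi'}$ manifestly multiply to $e^{i(\phi+\phi')}$. This is more conceptual: the single relation $\tilde U_{\rm KI}^* = \tilde U_{\rm KI} X$ is precisely the temporal analogue of the conjugate symmetry \eqref{eq:conjugate symmetry} that underlies the real-Gaussian statistics, so your proof makes transparent \emph{why} the third pairing works only at the self-dual point, whereas the paper's eigenvalue counting obscures this. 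The paper's approach, on the other hand, is more self-contained for a reader unfamiliar with the symmetry and makes the role of temporal periodicity (your identity $\prod_\tau \sigma_\tau^z\sigma_{\tau+1}^z = 1$) equally explicit through the flip-counting argument. Your handling of \eqref{eq:m4 action} via the support of $A$ in $M_z$-eigenspaces is also slightly slicker than the paper's commutator expansion, though equivalent in content. One minor remark: the reference to Lemma~\ref{lemma: kick flip} in deriving $\tilde U_K^{-2}=i^t X$ is inessential, since that identity follows directly from $e^{-i(\pi/2)M_x}=(-i)^t X$.
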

\begin{proof}
    Here we are pairing $n$ with $n'$ and $m$ with $m'$; however since there is no negative sign between the 1st and 3rd $M_z$ (and similarly between the 2nd and 4th), we need the flipped pairings $n$ with $-n'$, where $\bra{-k'}M_z \ket{-n'} = - \bra{k'}M_z \ket{n'}$ i.e. all spins from the $\ket{n'}$ state are flipped. Applying \eqref{eq:m4 action} to the eigenvector is straightforward,
\begin{align*}
&\left(M^{(1)} - M^{(2)} + M^{(3)} - M^{(4)}\right) \ket{A^{(4)}} \\
&= \sum_{n,m,n',m'} (AQ)_{n,-n'} (QA')_{-m,m'}   \left(\ket{M n,m,n',m'} - \ket{n,M m,n',m'} + \ket{n,m,M n',m'} - \ket{n,m,n',M m'}\right) \\
&= \sum_{n,m,n',m'} \sum_{k,l,k',l'}  \left(M_{kn}(AQ)_{n,-k'}  + (AQ)_{k,-n'} M^T_{n'k'}\right) (QA')_{-m,m'} \ket{k,m,k',m'} \\
& - (AQ)_{n,-n'}\left(M_{lm}(QA')_{-m,l'}  + (QA')_{-l,m'}M^T_{m'l'}\right) \ket{n,l,n',l'} \\
& = \sum_{k,l,k',l'} \left([M,AQ]_{k,-k'}(QA')_{-l,l'} +  [M,QA']_{-l,l'} (AQ)_{k,-k'} \right) \ket{k,l,k',l'} = 0
\end{align*}
where in going to the last line, we used $M_{n'k'} = M^T_{n'k'} = -M_{-n',-k'}$. The equality in the last line follows from the fact that $[Q,M_z] = 0$, which reduces the commutators to $[M,A], [M,A']$ which vanish by \eqref{eq:m2 action}.
\\\\
Applying \eqref{eq:u4 action} to $\ket{A^{(4)}}$ is more complicated however,
\begin{align}
        (U\otimes U^* \otimes U \otimes U^*)\ket{A^{(4)}} &= \sum_{n,m,n',m'} (AQ)_{n,-n'} (QA')_{-m,m'} \ket{Un,U^*m,Un',U^*m'} \nonumber \\
        & = \sum_{n,m,n',m'} \sum_{k,l,k',l'} \left(U_{kn} (AQ)_{n,-n'} U^{T}_{n'k'}\right) \left( U^*_{lm} (QA')_{-m,m'} U^{\dagger}_{m'l'}\right) \ket{k,l,k',l'}
        \label{eq:claim 3 u action}
    \end{align}
As long as $U_{kn} (AQ)_{n,-n'} U^T_{n',k'} = (AQ)_{k,-k'} e^{i \phi}$ and $U^*_{lm} (QA')_{-m,m'} U^\dagger_{m'l'} = (Q A')_{-l,l'} e^{i \phi'}$, we obtain an eigenvector with unimodular eigenvalue.

Let's work on the first of these. The important part is,
\begin{equation*}
    (AQ)_{n,-n'} U^T_{n',k'}.
\end{equation*}
We want,
\begin{equation*}
    A_{n,-n'} U^\dagger_{-n',-k'} Q_{-k',-k'} \propto A_{n,-n'} Q_{-n',-n'} U^T_{n',k'}
\end{equation*}
or,
\begin{equation}
    \label{eq: A sub}
    U^\dagger_{-n',-k'} Q_{-k',-k'} \propto Q_{-n',-n'} U^T_{n',k'}
\end{equation}
where we have used that $Q$ is diagonal.

The interaction term in $U$ will simply pull out eigenvalues on both sides, which we will deal with later. Focusing on the kick (which we denote without tilde for simplicity), we have,
\begin{equation*}
    (U_K)^\dagger_{-n',-k'} \stackrel{?}{=}  Q_{-n',-n'} (U_K^T)_{n',k'} Q_{-k',-k'} = (Q (U_{K})^T Q)_{n',k'}
\end{equation*}
which reduces to $U_K^\dagger = Q U_K^T Q$ by Lemma \ref{lemma: kick flip}. We also know that $U_K$ is symmetric, so at last we have,
\begin{equation*}
    e^{i \frac{\pi}{4} M_x} \stackrel{?}{=} Q e^{-i \frac{\pi}{4} M_x} Q
\end{equation*}
which follows from $Q M_x Q=-M_x$. 
\\\\
We can follow a similar reasoning for the 2nd term. Here we want $U^*_{lm} (QA')_{-m,m'}$ to be proportional to $(QUA')_{-l,m'}$,
\begin{equation}
    \label{eq: A' sub}
    Q_{-l,-l}U_{-l,-m} A'_{-m,m'} \propto U^*_{lm} Q_{-m,-m} A'_{-m,m'}
\end{equation}
or
\begin{equation*}
     U^*_{lm} \propto Q_{-l,-l}U_{-l,-m}  Q_{-m,-m}.
\end{equation*}
Once again we must only worry about the kick, which is symmetric and thus $U_K^* = U_K^{\dagger}$ which reduces the condition to,
\begin{equation*}
    (U^{\dagger}_{K})_{lm} \stackrel{?}{=} (Q U_K Q)_{-l,-m}.
\end{equation*}
By Lemma \ref{lemma: kick flip} we need $U^{\dagger}_K = Q U_K Q$, which we already showed.
\\\\
Now we must only deal with the eigenvalue of the interaction term when acting on the states. In particular, we need to compare the eigenvalues of $U^*_I \ket{n}$ and $U_I \ket{-n}$. Expanding these terms at the self dual points given by $J' = \mp \pi/4 , b = \pm \pi/4$ we have,
\begin{align*}
    U^*_I \ket{n} &= \exp{\left(\mp i \frac{\pi}{4} \sum_{\tau = 1}^t \sigma_\tau^z \sigma_{\tau+1}^z\right)}\exp{\left(i \bar{h} M_z\right)} \ket{n} \\
    U_I\ket{-n} &= \exp{\left(\pm i \frac{\pi}{4} \sum_{\tau = 1}^t \sigma_\tau^z \sigma_{\tau+1}^z\right)}\exp{\left(- i \bar{h} M_z\right)} \ket{-n}
\end{align*}
Eigenvalues of the 2nd exponential are the same for both $U^*, U$ and we must only worry about the eigenvalues of the 1st exponential. It is important to recall that the unimodular eigenvectors for the 2nd moment transfer matrix $\mathbb{T}$ in \cite{Bertini_2018} were only valid for odd times $t$, which is what we restrict our analysis to. To evaluate these eigenvalues we start by assuming $\ket{n}$ is the state with all $t$ spins up. Then all terms in the 1st exponential are positive and the eigenvalues of $U^*_I,U_I$ are $e^{\mp it\frac{\pi}{4}}, e^{\pm it\frac{\pi}{4}}$. Every other state can be obtained by flipping spins of the all up state. If we flip a spin that is adjacent to 2 up spins, the exponent in the eigenvalue decreases from $t$ to $t-4$. If we flip a spin adjacent to one and one down spin, the eigenvalue is unchanged. Thus each state $\ket{n}$ can be characterized by the number of relevant spin flips $f_n$ one must apply to the all up state. Importantly, $\ket{-n}$, which flips all spins in the $\ket{n}$ state, is given by performing the same number of flips on the all down state and thus has the same eigenvalue. Thus the eigenvalues of $U^*_I, U_I$ are given by $\gamma^*_n, \gamma_n = e^{\mp i \frac{\pi}{4} (t- 4f_n)}, e^{\pm i \frac{\pi}{4} (t- 4f_n)}$, irrespective of the state being $\ket{n}$ or $\ket{-n}$, and substituting into \eqref{eq: A sub} and \eqref{eq: A' sub} we have,
\begin{align*}
   Q_{-n',-n'} U^T_{n',k'} = \gamma_{n'}^2 U^\dagger_{-n',-k'} Q_{-k',-k'}\\
   U^*_{lm}  Q_{-m,-m} = (\gamma^*_m)^2 Q_{-l,-l}U_{-l,-m}
\end{align*}
where $(\gamma^*_{m})^2 = e^{\mp i t\frac{\pi}{2}} e^{\pm 2\pi f_{m} i}$ and $ \gamma^2_{n'} =  e^{\pm i t\frac{\pi}{2}} e^{\mp 2\pi f_{n'} i}$. Since $f_{n'}, f_m$ are integers, the second exponential is 1 in both cases and the resulting expressions are independent of index. We can use this to simplify \eqref{eq:claim 3 u action}, where we substitute the expressions above,
\begin{align*}
    (U\otimes U^* \otimes U \otimes U^*)\ket{A^{(4)}} &= \sum_{n,m,n',m'} \sum_{k,l,k',l'} (\gamma_{n'}\gamma_m^*)^2 \left(U_{kn} A_{n,-n'} U^{\dagger}_{-n',-k'} Q_{-k',-k'}\right) \left(Q_{-l,-l} U_{-l,-m} A'_{-m,m'} U^{\dagger}_{m'l'}\right) \ket{k,l,k',l'} \\
&= \sum_{k,l,k',l'} (U A U^{\dagger} Q)_{k,-k'} (Q U A' U^{\dagger})_{-l,l'} \ket{k,l,k',l'} \\ 
&= \sum_{k,l,k',l'} \left(e^{i\phi} (AQ)_{k,-k'}\right) \left(e^{i\phi'} (QA')_{-l,l'}\right) \ket{k,l,k',l'} =  e^{i\Phi} \ket{A^{(4)}}
\end{align*}
Since the eigenvalues have no dependence on an index, they cancel each other and we once again use that $A,A'$ satisfy \eqref{eq:u2 action} to show that \eqref{eq:u4 action} is satisfied.
\end{proof}

%%%%%
Thus we have constructed $3$ classes of unimodular eigenvectors from $3$ ways of pairing the $Z$s and $Z^*$s among themselves. There are 4 copies of the return amplitude: 2 of $Z$ and 2 of $Z^*$. There are $2$ ``natural'' pairings between the $Z$'s and $Z^*$'s, and one non-trivial ``self'' pairing that pairs $Z$ with $Z$ and $Z^*$ with $Z^*$. Since our eigenvectors are formed from vectors of $\mathbb{T}^{(4)}$ with eigenvalue 1, all $\ket{A^{(4)}}$ unimodular eigenvectors of $\mathbb{T}^{(4)}$ also have eigenvalue 1 and the 4th moment in the TL is simply the number of linearly independent $\ket{A^{(4)}}$. Our 3 independent pairings thus show that the 4th moment $K_4$ is (at least) $3K^2$, the expected value for ${\rm tr}\left(U^t_{\rm KI}\right)$ behaving as a real random variable.

Extending this construction to higher order moments is simple. Continuing to work in the alternating signature $(-,*,\cdots,-,*)$, we can construct unimodular eigenvectors of the $(2\ell)$-th order transfer matrix $\mathbb{T}^{(2
\ell)}$ by summing matrix elements of $2n$ eigenvectors of $\mathbb{T}$ over the indices $\{m_1,m_1',\cdots ,m_\ell, m'_\ell\}$, where the primed indices denote complex conjugate basis states of $\mathcal{H}_t$. The primed and unprimed indices can be paired with each other as shown in claims 1 and 2 above, while self-pairings between two primed/unprimed indices are to be done as in claim 3. Since there are an equal number of primed and unprimed indices, for each pair of primed indices there must be a pair of unprimed indices, which is required since as shown in claim 3 these pairings have conjugate eigenvalues that cancel each other. Thus, each set of pairings contributes $K^n$ to the moment (as higher order eigenvectors constructed by pairing 2nd order eigenvectors also have eigenvalue 1), and since any 2 indices can be paired we get $M_{2\ell} \geq (2\ell-1)!!K^\ell$, i.e. the higher order moments of a real Gaussian.

This provides an alternate computation of the lower bound computed in \cite{Flack_2020}. It should be noted that in \cite{Flack_2020}, the pairings are among operators that span the dihedral group $\mathcal{G}_t$, which are in a one-to-one correspondence with the unimodular eigenvectors $\ket{A}$. Thus the procedure outlined above follows the same essential idea as \cite{Flack_2020}, albeit being more explicit with the steps. The explicit nature of our procedure sheds some light on the nature of the self-dual point (see Sec. \ref{sec:discussion}) and helps with computations related to the Loschmidt SFF (see Sec. \ref{sec:loschmidt}).

\section{Open boundary conditions}
\label{sec:open}
\subsection{Numerics}
Much like the periodic case, we start our discussion of open boundary conditions with numerical evidence. In line with the data in the previous section, all plots are generated by averaging over $N = 10^6$ samples of \eqref{eq:floquet} defined by $\bar{h} = 0.6, \sigma = 10\pi$ for a $7$-qubit chain ($L = 7$) at the self dual point $J = b = \pi/4$, unless otherwise noted. The modification required for open boundary conditions is to change the upper limit of the first sum in \eqref{eq:hamiltonians} for the interaction term $H_{\rm I}$ to $L-1$.

\begin{figure}[ht]
    \centering
    \begin{subfigure}[b]{0.49\textwidth}
        \centering
        \includegraphics[width=\textwidth]{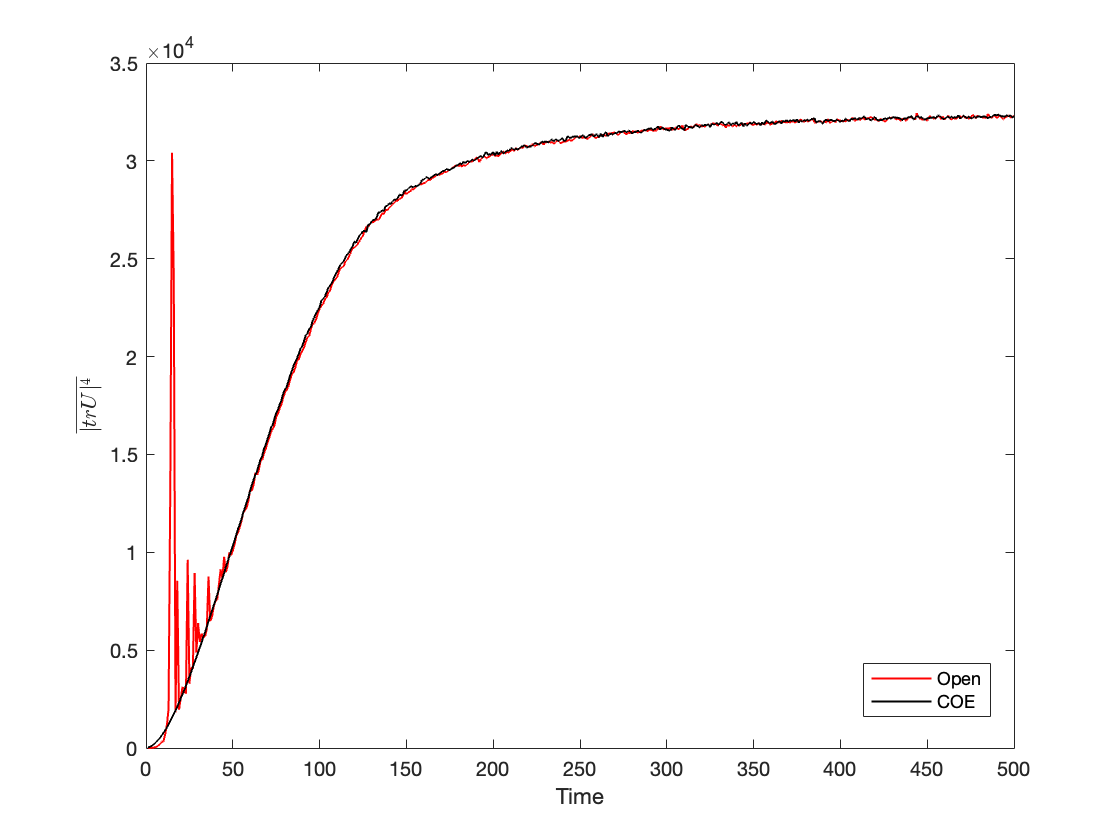}
    \end{subfigure}
    \begin{subfigure}[b]{0.49\textwidth}
        \centering
        \includegraphics[width=\textwidth]{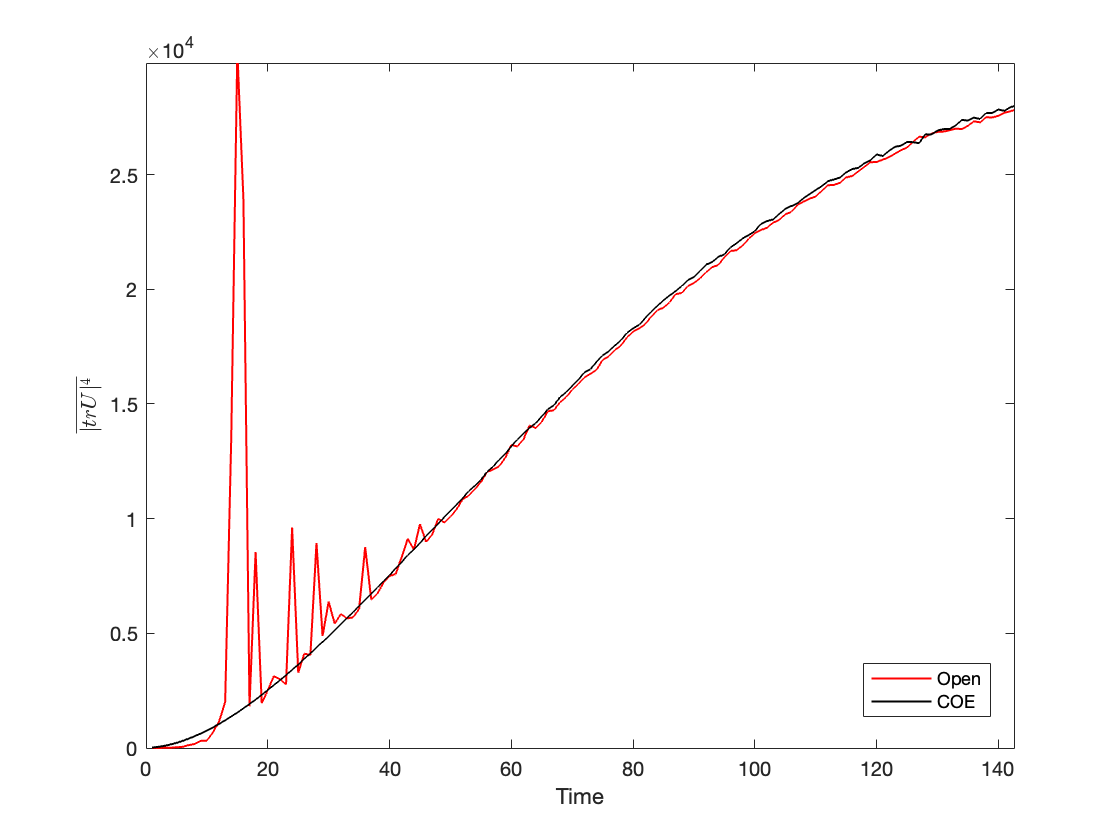}
    \end{subfigure}
    \caption{(Left) 4th moment of $Z(t)$ for the open boundary condition Floquet operator. (Right) Zooming in on the ramp region. The black curve is the COE prediction.}
    \label{fig:open o4}
\end{figure}

We show the data for the 4th moment in Figure~\ref{fig:open o4}. The 4th moment agrees almost perfectly with the COE prediction, which leads us to hypothesize that $Z(t)$ behaves like a complex Gaussian random variable for open boundary conditions. Thus we predict the higher moments for open boundary conditions to line up with the COE prediction, i.e. $M_{2\ell}/M_{2}^\ell = \ell!$.

We subject this to additional checks by looking at higher moments, as shown in Figure~\ref{fig:open higher order}. We see that open boundary conditions agree with the COE prediction, which is close to $6 K^3$ (6th moment) and $24 K^4$ (8th moment), i.e. the moments for a complex Gaussian variable.

\begin{figure}[ht]
    \centering
    \begin{subfigure}[b]{0.49\textwidth}
        \centering
        \includegraphics[width=\textwidth]{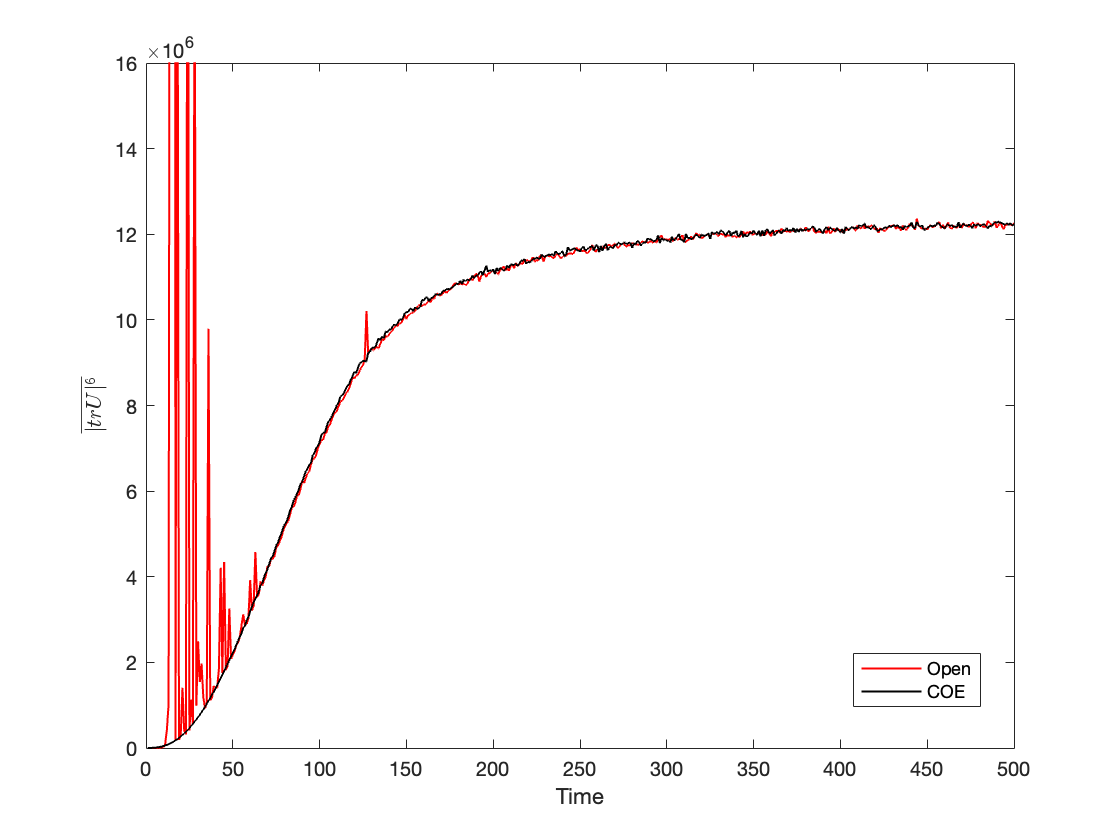}
        \subcaption{6th moment}
    \end{subfigure}
    \begin{subfigure}[b]{0.49\textwidth}
        \centering
        \includegraphics[width=\textwidth]{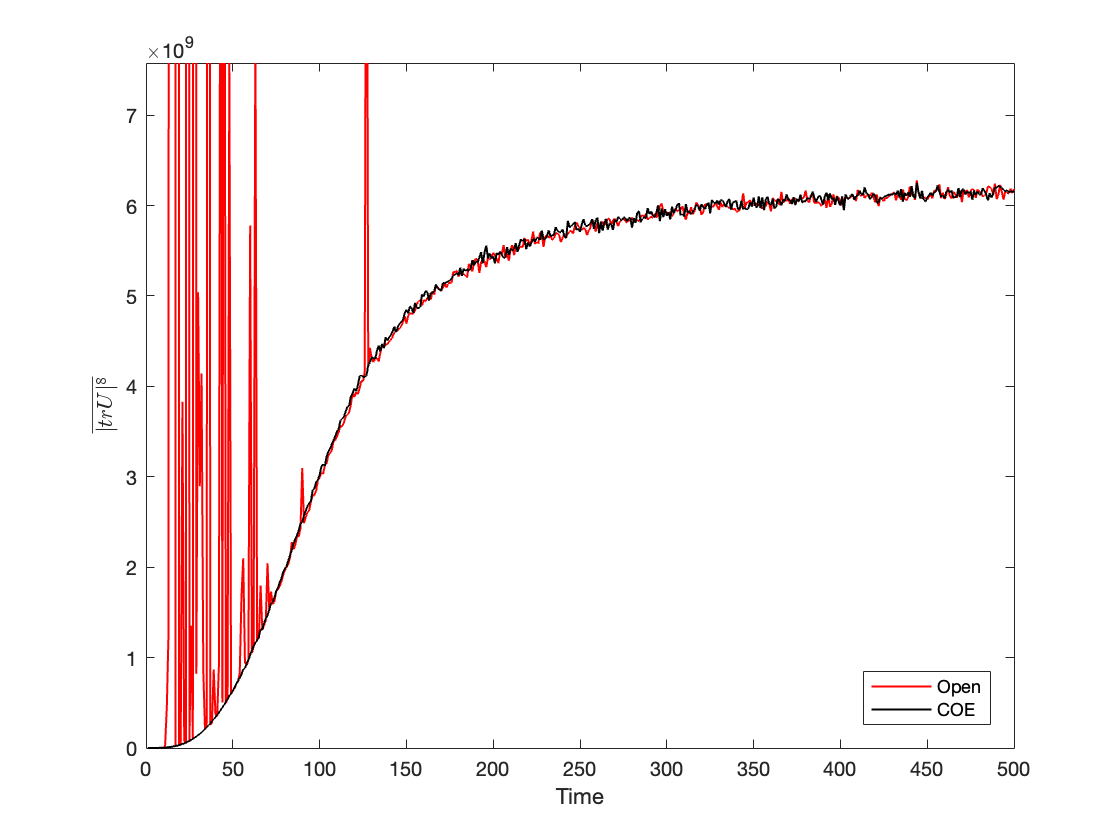}
        \subcaption{8th moment}
    \end{subfigure}
    \caption{(Left, a) 6th moment of $Z(t)$ for Floquet (red) compared with COE (red). (Right, b) 8th moment of $Z(t)$ for Floquet (red) compared with COE (black).}
    \label{fig:open higher order}
\end{figure}

Similar to the situation for periodic boundary conditions, we consider histograms of the magnitude and phase of $Z(t)$. The open boundary condition case shown in Figure~\ref{fig:open traces} shows a full 2d histogram in the complex plane. Hence, $Z(t)$ has a (near) normal distribution over the complex plane in the open case, suggesting it behaves as a complex Gaussian.

\begin{figure}[ht]
    \centering
    \begin{subfigure}[b]{0.23\textwidth}
        \centering
        \includegraphics[width=\textwidth]{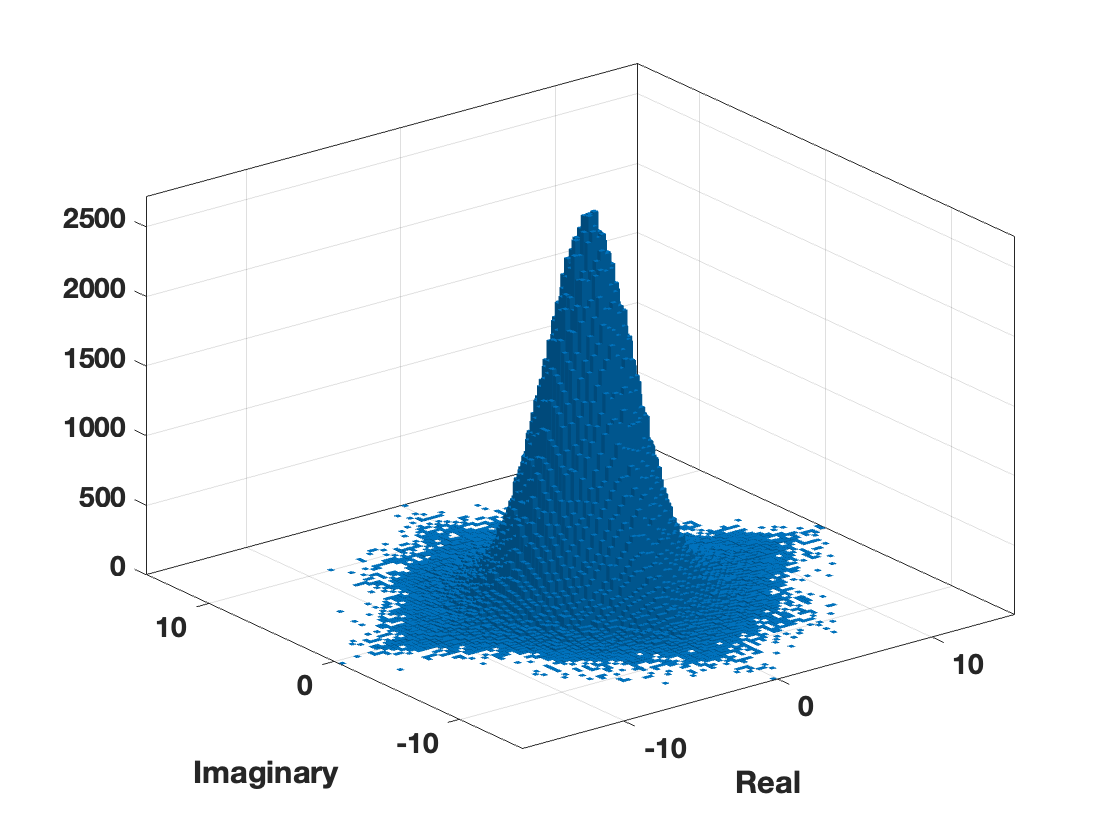}
        \subcaption{$t=10$}
    \end{subfigure}
    \begin{subfigure}[b]{0.23\textwidth}
        \centering
        \includegraphics[width=\textwidth]{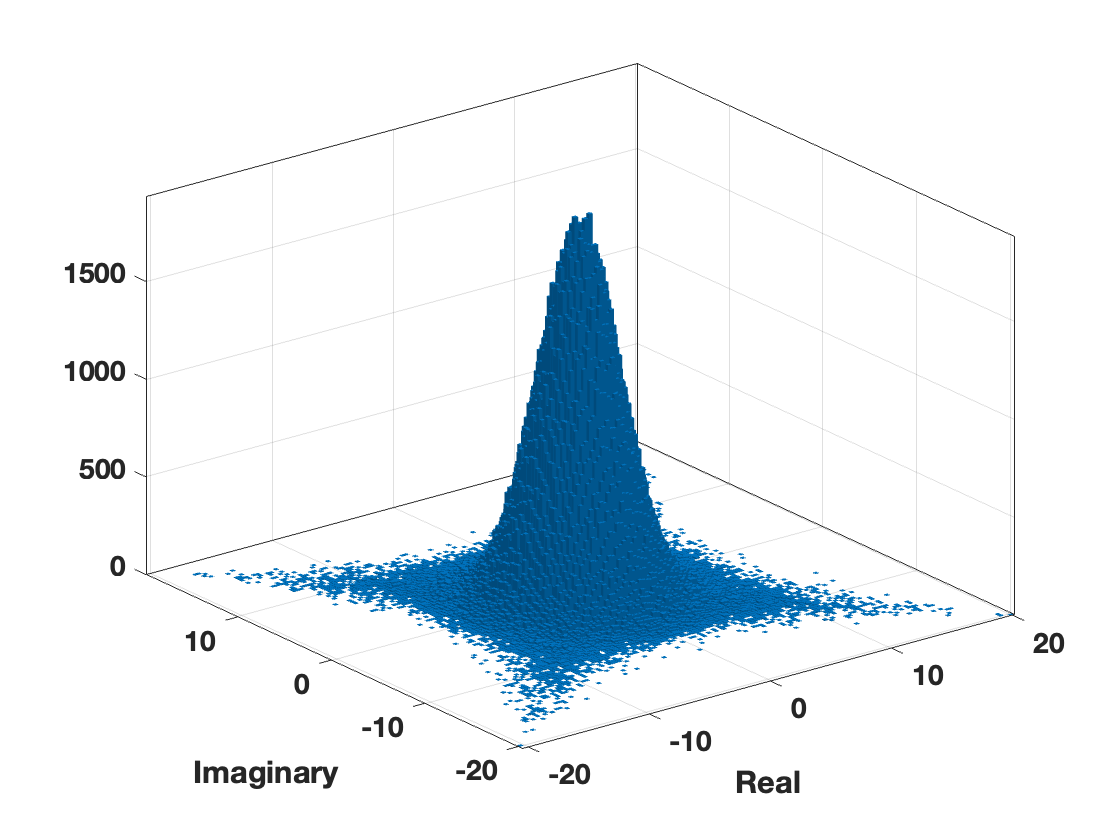}
        \subcaption{$t=11$}
    \end{subfigure}
    \begin{subfigure}[b]{0.23\textwidth}
        \centering
        \includegraphics[width=\textwidth]{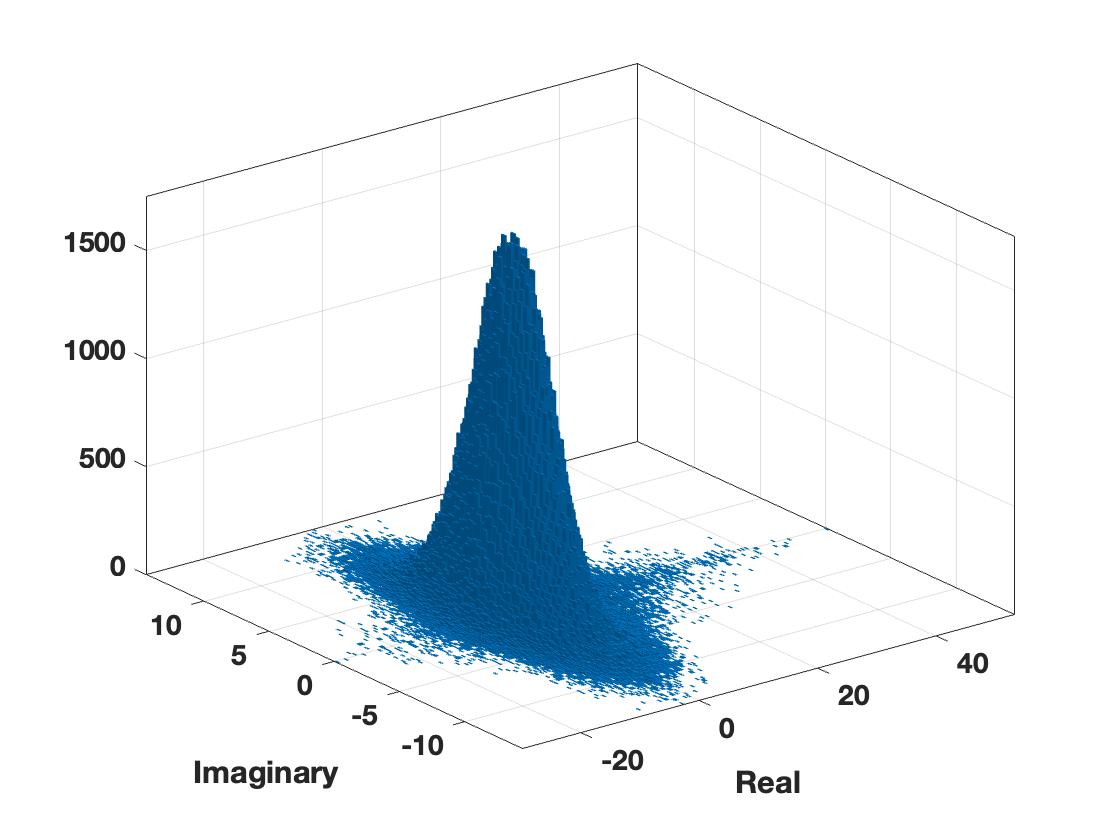}
        \subcaption{$t=12$}
    \end{subfigure}
    \begin{subfigure}[b]{0.23\textwidth}
        \centering
        \includegraphics[width=\textwidth]{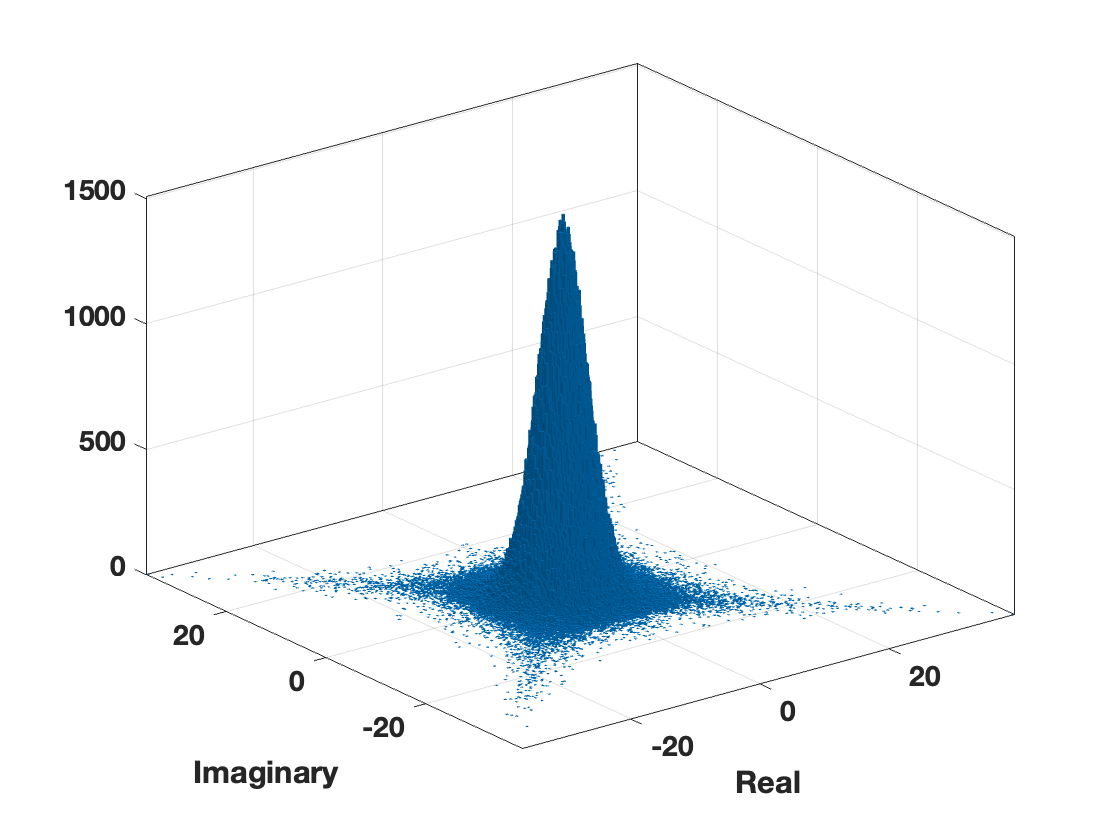}
        \subcaption{$t=13$}
    \end{subfigure}
    \caption{Distribution of $Z(t)$ for open boundary conditions.}
    \label{fig:open traces}
\end{figure}

\subsection{Boundary vectors}

Analytic verification of the results shown above is challenging. Symmetry arguments similar to those applied to periodic boundary conditions do not work for open boundary conditions. The Floquet operator in both situations possesses the time-reversal symmetry; however the `conjugate symmetry' \eqref{eq:conjugate symmetry} is only present for periodic boundary conditions. This, however, is not sufficient to conclude that the relevant ensemble for open boundary conditions is COE and thus the relevant statistics are complex Gaussian, as this does not rule out additional symmetries that may further constrain the ensemble.

Alternatively, an analytic computation of the moments for open boundary conditions can be conceived of in a similar fashion to the periodic case. We can consider modifying the transfer matrix approach by introducing boundary vectors. As the trace of the Floquet operator can be rewritten as the partition function of 2 dimensional Ising model on a $t \times L$ lattice, we may use the same modification as is used for the open boundary Ising model. Breaking the periodicity in $L$ (but not $t$) makes it so the partition function is no longer the trace of the temporal transfer matrix $\mathbb{T}$, but instead a matrix element defined by an inner product with appropriately defined boundary vectors $\bra{v_1} \mathbb{T}^{L-1} \ket{v_L}$. 

Thus, starting with the modified 2-dimensional partition function, for open boundary conditions we have
\begin{equation}
    {\rm tr}\left(U^t_{\rm KI}[\boldsymbol h]\right) = \left[\frac{\sin{(2b)}}{2i}\right]^{Lt/2} \sum_{{s_{\tau,j}}} e^{-i\mathcal{E}
    \left[\{s_{\tau,j}\}, \boldsymbol h\right]}
\end{equation}
as before, but now with the energy function given by
\begin{equation}
    \mathcal{E}\left[\{s_{\tau, j}, \boldsymbol h\}\right] = \sum_{\tau = 1}^t \left(J \sum_{j=1}^{L-1} s_{\tau, j} s_{\tau, j+1} + J' \sum_{j=1}^L s_{\tau, j} s_{\tau + 1, j} + \sum_{j=1}^L  h_j s_{\tau,j} \right)
\end{equation}
where we have changed the upper bound of the first sum. We can rewrite this using the dual (temporal) Floquet operator,
\begin{align}
\label{eq:open Z}
    {\rm tr}\left(U^t_{\rm KI}[\boldsymbol h]\right) &= \sum_{\{\boldsymbol{s}_j\}} \bra{\boldsymbol{s}_L} B[h_L\boldsymbol{\epsilon}] \ket{\boldsymbol{s}_L} \prod_{j = 1}^{L-1} \bra{\boldsymbol{s}_{j+1}} \tilde{U}_{\rm KI} [h_j\boldsymbol{\epsilon}] \ket{\boldsymbol{s}_j}
\end{align}
where we sum over the spin configurations $\ket{\boldsymbol{s}_j}$ defined by $\sigma_{\tau}^z \ket{\boldsymbol{s}_j} = s_{\tau,j} \ket{\boldsymbol{s}_j}$ and we define the additional operator,
\begin{equation}
    B[h_j\boldsymbol{\epsilon}] = \left(\frac{\sin{(2b)}}{2i}\right)^{t/2} e^{-i \tilde{H}_{\rm I}[h_j\boldsymbol{\epsilon}]}, \quad{} \tilde{H}_{\rm I}[h_j\boldsymbol{\epsilon}] = J' \sum_{\tau = 1}^t \sigma_\tau^z \sigma_{\tau + 1}^z + \sum_{\tau = 1}^t h_j \sigma_{\tau}^z.
\end{equation}
Here, $\tilde{H}_{\rm I} [h_j\boldsymbol{\epsilon}]$ is the dual interaction Hamiltonian evaluated on the magnetic field at the $j$-th spin.

Using \eqref{eq:open Z}, we can write the SFF,
\begin{align}
    K_{\rm open} (t) &= \mathbb{E}_{h} \left[{\rm tr}\left(U^t_{\rm KI}[\boldsymbol h]\right) {\rm tr}\left(U^t_{\rm KI}[\boldsymbol h]\right) ^* \right] \\
    &= \mathbb{E}_{h} \left[ \sum_{\{\boldsymbol{s}_1, \boldsymbol{s}_L,\boldsymbol{s}'_1, \boldsymbol{s}'_L\}} \bra{\boldsymbol{s}_L} B[h \boldsymbol{\epsilon}] \ket{\boldsymbol{s}_L} \bra{\boldsymbol{s}'_L} B^*[h \boldsymbol{\epsilon}] \ket{\boldsymbol{s}'_L} \bra{\boldsymbol{s}_L} \tilde{U}_{\rm KI}^{L-1}[h \boldsymbol{\epsilon}] \ket{\boldsymbol{s}_1} \bra{\boldsymbol{s}'_1} (\tilde{U}_{\rm KI}^*)^{L-1}[h \boldsymbol{\epsilon}] \ket{\boldsymbol{s}'_L} \right] \\
    &= \bra{1,1} \mathbb{E}_{h} \left[\left(\tilde{U}_{\rm KI}[h \boldsymbol{\epsilon}] \otimes \tilde{U}_{\rm KI}^* [h \boldsymbol{\epsilon}]\right)^{L-1}\left(B[h\boldsymbol{\epsilon}] \otimes B^*[h\boldsymbol{\epsilon}]\right)\right] \ket{1,1} \\
    &= \bra{1,1} \mathbb{T}^{L-1} \left(B \otimes B^*\right) \cdot\mathbb{O}_\sigma \ket{1,1}
\end{align}
Here we define $\ket{1,1} \equiv \ket{1} \otimes \ket{1}$, where $\ket{1}$ is the vector with all entries equal to 1 in the computational basis. We also use the convention where $B \equiv B[\bar{h} \boldsymbol{\epsilon}]$ and $\mathbb{O}_\sigma$ is the same variance matrix as for the periodic case. We have also picked the self-dual point i.e. $J = b = \pi/4$. $\mathbb{T}$ may not be generically diagonalizable; however \cite{Bertini_2018} shows that the algebraic and geometric multiplicities of any unimodular eigenvalue of $\mathbb{T}$ must coincide. Writing $\mathbb{T}$ in Jordan normal form $P D P^{-1}$, $P$ consists of (generalized) eigenvectors and $D$ is an upper triangular matrix consisting of Jordan blocks. Since the algebraic and geometric multiplicities of unimodular eigenvalues coincide, they all form a diagonal Jordan block with size equal to the number of unimodular eigenvectors (i.e. the periodic SFF, $K(t)$). Similarly, we may write $P = [\ket{A_1}, \cdots \ket{A_{K(t)}}, \cdots]$ in terms of the unimodular eigenvectors $\ket{A_n}$ followed by the remaining eigenvectors. Defining $\ket{v_L} \equiv \ket{1,1}$ and $\ket{v_R} \equiv (B\otimes B^*) \cdot \mathbb{O}_\sigma \ket{1,1}$, we can now expand $\mathbb{T}$ in terms of unimodular eigenvectors $\ket{A}$ to compute,
\begin{align}
\lim_{L\rightarrow \infty} K_{\rm open} (t) = \lim_{L \rightarrow \infty} \bra{v_L} \mathbb{T}^{L-1} \ket{v_R} = \sum_{\ket{A}}^{K(t)}  \braket{v_L|A} \braket{A|v_R}
\end{align}
since only the unimodular eigenvectors contribute in the TL. From this we can see that the SFF for open boundary conditions is computed by the sum over the components of $\ket{A}$ in the computational basis, $\braket{1,1|A}$ and the sum of the components of $\bra{A} (B\otimes B^*)\cdot \mathbb{O}_\sigma$. We were unable to find an analytic expression for these, and numerical data was limited due to the exponential increase in the transfer matrix size with $t$. We expect $\braket{v_L|A}\braket{A|v_R} = 1$ such that the SFF for open boundary conditions matches the periodic SFF.

Extending to the 4th moment, one could imagine a scenario where the moment no longer matches the periodic boundary condition case. Here we have the 4th order transfer matrix $\mathbb{T}^{(4)}$, appropriately defined boundary vectors $\ket{v^{(4)}_L},\ket{v^{(4)}_R}$ and unimodular eigenvectors $\ket{A_n^{(4)}}$ such that
\begin{align*}
    M_4^{(\rm open)}(t) = \lim_{L \rightarrow \infty} \bra{v^{(4)}_L} \left(\mathbb{T}^{(4)}\right)^{L-1} \ket{v^{(4)}_R} = \sum_{n = 1}^{3 K^2(t)}  \braket{v_L^{(4)}|A_n^{(4)}} \braket{A_n^{(4)}|v^{(4)}_R}.
\end{align*}
If suppose only eigenvectors formed by pairings of the kind in claim 1 and 2 had a non-zero inner product with both $\ket{v^{(4)}_L},\ket{v^{(4)}_R}$, as opposed to eigenvectors formed by pairings of the kind in claim 3 whose inner product with one (or both) boundary vectors is 0, then the 4th moment would match the COE prediction $M_4 (t) = 2 K^2 (t)$. We suspect some cancellation of this kind also holds for higher order moments, with eigenvectors formed by pairings of primed/unprimed indices with themselves having an inner product of $0$ with the boundary vectors. This would be a possible explanation for the open boundary condition moments matching the COE results, which could be investigated in future work.

\section{Loschmidt echo SFF}
\label{sec:loschmidt}
Here we consider a generalization of the SFF called the Loschmidt SFF (LSFF)~\cite{Winer_2022}. The SFF, given by an ensemble average of $|Z(t)|^2$, considers evolving the system forward and backwards in time by the same Hamiltonian $H$, which determines the return amplitude $Z$. To define the LSFF, we instead consider 2 (correlated) Hamiltonians $H_1, H_2$ which each define a return amplitude $Z_1, Z_2$. This defines the LSFF,
\begin{equation}
    K_{\rm E} (t) = \mathbb{E}[Z_1(t) Z_2^*(t)],
\end{equation}
where the subscript E denotes echo. The Loschmidt SFF has its power when the Hamiltonians are closely correlated, and thus often has the interpretation of evolving the system forward in time by a Hamiltonian $H_1$ and then applying a close, but imperfect time reversal procedure given by $H_2 = H_1 + \delta H_0$, where $\delta \ll 1$ is a perturbation parameter for some perturbation Hamiltonian $H_0$.

For the kicked Ising model, one may consider Hamiltonians $H_1, H_2$ defined by \eqref{eq:ham} and determined by longitudinal fields $\boldsymbol{h}_1,\boldsymbol{h}_2$ drawn from correlated normal distributions, with the same mean and variance $\bar{h}, \sigma^2$, and correlation $C$. Thus we have for the kicked Ising model,
\begin{equation}
    K_{\rm E} (t) = \mathbb{E}_{\boldsymbol{h}_1,\boldsymbol{h}_2} \left[{\rm tr}\left(U^t_{\rm KI}[\boldsymbol{h}_1]\right){\rm tr}\left(U^t_{\rm KI}[\boldsymbol{h}_2]\right)^*\right]
\end{equation}
Following \cite{Bertini_2018} and the previous discussion, one may be interested in this quantity in the thermodynamic limit ($L \rightarrow\infty$). Repeating the transfer matrix procedure, we can rewrite
\begin{equation}
    K_{\rm E} (t) = \rm{tr} \left(\mathbb{T}_E^L\right)    
\end{equation}
where,
\begin{align}
    \mathbb{T}_{\rm E} &= \mathbb E_{h_1,h_2}\left[\tilde{U}_{\rm KI}[h_1 \boldsymbol \epsilon] \otimes \tilde{U}_{\rm KI}[h_2 \boldsymbol \epsilon]^*\right] = (\tilde{U}_{\rm KI} \otimes \tilde{U}_{\rm KI}^*)\cdot \mathbb{O}_{\sigma,C} \\
    \mathbb{O}_{\sigma,C} &= \exp{\left[-\frac{\sigma^2}{2} \left\{\left(M_z^{(1)}\right)^2 + \left(M_z^{(2)}\right)^2  - 2C M_z^{(1)} M_z^{(2)}\right\}\right]}.
\end{align}
Here we have defined $\tilde{U}_{\rm KI} \equiv \tilde{U}_{\rm KI}[\bar{h} \boldsymbol{\epsilon}]$ and $M_z^{(1)} = M_z \otimes \1$ ($M_z^{(2)}$ is defined similarly) as before. Note that for $C = 1$, this reduces to the transfer matrix setup from \eqref{eq:transfer 2}, \eqref{eq:O 2} , as expected.
\\\\
It is convenient now to introduce $M \equiv M_z^{(1)} + M_z^{(2)}$, $\Delta \equiv M_z^{(1)} - M_z^{(2)}$, in terms of which we have,
\begin{align}
    \label{eq:O_C exp correction form}
    \mathbb{O}_{\sigma,C} &= \mathbb{O}_{\sigma} \exp{\left[-\frac{\delta\sigma^2}{2} \left(M^2 - \Delta^2\right)\right]}
\end{align}
where $\delta = \frac{1-C}{2}$. 

One could now find unimodular eigenvectors of this new transfer matrix $\mathbb{T}_{\rm E}$ (if they exist) and use them to predict the behavior of the LSFF in the thermodynamic limit (TL). The approach we follow to do so is to relate the behavior of $\mathbb{T}_{\rm E}$ to the old transfer matrix $\mathbb{T}$. \eqref{eq:O_C exp correction form} is suggestive that the LSFF could be computed as a perturbative correction to the SFF with $\delta \ll 1$ as the perturbation parameter, i.e. the distributions the longitudinal fields are drawn from are nearly perfectly correlated ($C \rightarrow 1$). These corrections should be approximately exponential.

We first note that $\mathbb{O}_{\sigma,C}$ defined in \eqref{eq:O_C exp correction form} is a positive semi-definite operator with (operator) norm $\leq 1$. Thus a non-zero contribution in the TL is only expected for a vector $\ket{v} \in \mathcal{H}_t \otimes\mathcal{H}_t$ satisfying $\Delta \ket{v} = M \ket{v} = 0$ and $(\tilde{U}_{\rm KI} \otimes \tilde{U}_{\rm KI}^*) \ket{v} = e^{i \phi}\ket{v}$. The first 2 conditions imply $M_z^{(1)} \ket{v} = \pm M_z^{(2)} \ket{v}$, which is only true when $M_z^{(1)} \ket{v} = M_z^{(2)} \ket{v} = 0$. Restricting to odd times $t$, we see this cannot be true for any vector composed of $\sigma_z$-basis vectors, as the sum of the spins cannot vanish. Thus to leading order, the LSFF trivially vanishes in the TL as there exist no unimodular eigenvectors (for odd $t$).
\\\\
It is more interesting to analyze the large yet finite $L$ regime, where subleading behavior ($\mathcal{O}(\delta)$) appears. To motivate our ansatz for analyzing the numerical data, we start by expanding the LSFF in $\delta$,
\begin{align}
    \rm tr \left(\mathbb{T}_E^L\right) &= \rm tr \left(\mathbb{T} \exp{\left[-\frac{\delta \sigma^2}{2} \left(M^2 - \Delta^2\right)\right]}\right)^L \\
    &= \rm tr \left(\mathbb{T} \left[ 1 - \frac{\delta \sigma^2}{2} \left( M^2 - \Delta^2\right) + \mathcal{O}\left(\delta^2\right)\right]\right)^L \\
    &= \rm tr \left( \mathbb{T}^L\right) - \frac{\delta \sigma^2 L}{2} \rm tr \left(\mathbb{T}^{L} \left[M^2 - \Delta^2\right]\right) + \mathcal{O}\left(\delta^2\right)
\end{align}
where in the last line we used the cyclic property of the trace to collect the $\mathcal{O}\left(\delta\right)$ terms.

When expanding the trace in the 2nd term, we only consider the contributions from the unimodular eigenvectors $\ket{A}$ of the transfer matrix $\mathbb{T}$. This is because since we cannot find unimodular eigenvectors of $\mathbb{T}_{\rm E}$, the subleading contribution to its trace is given by vectors $\ket{v}$ such that $\Delta \ket{v} = 0$ and $(\tilde{U}_{\rm KI} \otimes \tilde{U}_{\rm KI}^*) \ket{v} = \ket{v}$, which are exactly the relations \eqref{eq:m2 action}, \eqref{eq:u2 action} which define $\ket{A}$.

Expanding the trace and only keeping contributions from the unimodular eigenvectors $\ket{A}$ yields,
\begin{align*}
    \rm tr \left(\mathbb{T}^{L} \left[M^2 - \Delta^2\right]\right) &= \sum_{\ket{A}}^{K(t)} \left(\bra{A} \mathbb{T}^{L} M^2 \ket{A} - \bra{A} \mathbb{T}^{L} \Delta^2 \ket{A}\right) \\
    &= \sum_{\ket{A}}^{K(t)} 4 \bra{A} \left(M_z^{(1)} \right)^2 \ket{A} = 4 K(t) \overline{\bra{A}\left(M_z^{(1)}\right)^2 \ket{A}}
\end{align*}
where we used $\Delta \ket{A} = 0$ from \eqref{eq:m2 action} and $\mathbb{T}\ket{A} = \ket{A}$. Furthermore, we replace the sum in the second line with the average value of the inner product $\bra{A}\left(M_z^{(1)}\right)^2 \ket{A}$ over all unimodular eigenvectors $\ket{A}$ (note this is not an ensemble average). This average is time dependent, and should scale with the number of terms in $M_z^{(1)}$, i.e. $\overline{\bra{A} \left(M_z^{(1)}\right)^2 \ket{A}} \sim t$. Combining with the previous result we obtain,
\begin{align}
    \rm tr \left(\mathbb{T}_E^L\right) &= K(t) \left[1 - 2 \delta\sigma^2 L t + \mathcal{O}\left(\delta^2\right)\right] \\ 
    & \sim K(t) e^{-2 \delta \sigma^2 L t} \\
    \label{eq:lsff conjecture}
    & \sim  2t e^{-2\delta\sigma^2 L t}
\end{align}
In the last line, we have conjectured an approximate re-exponentiated form of the LSFF; it is this form we will compare to numerical data shortly. As a first check, we see that indeed in the TL the exponential damping causes the LSFF to be 0 for all time. However, for large but finite L we expect some kind of exponential decay, i.e. the subleading behavior we were after.
\begin{figure}[ht]
    \centering
    \begin{subfigure}{0.49\textwidth}
        \centering
        \includegraphics[width = \textwidth]{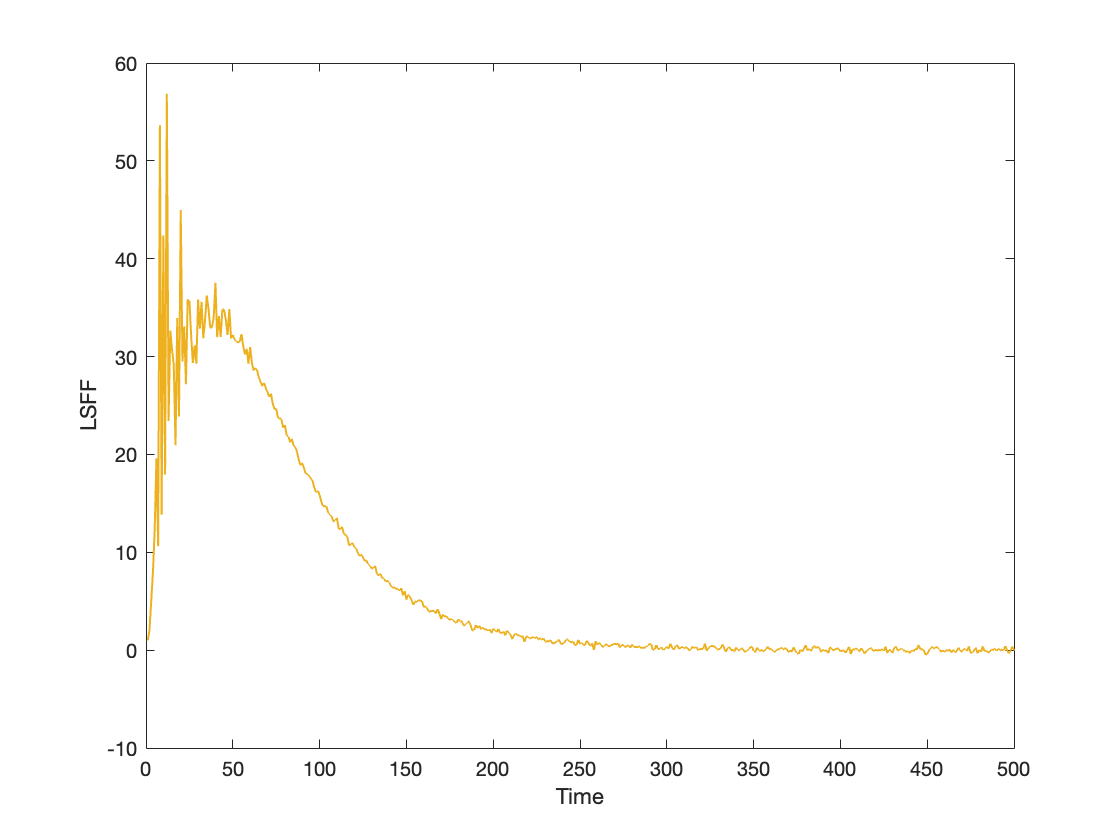}
    \end{subfigure}
    \begin{subfigure}{0.49\textwidth}
        \centering
        \includegraphics[width = \textwidth]{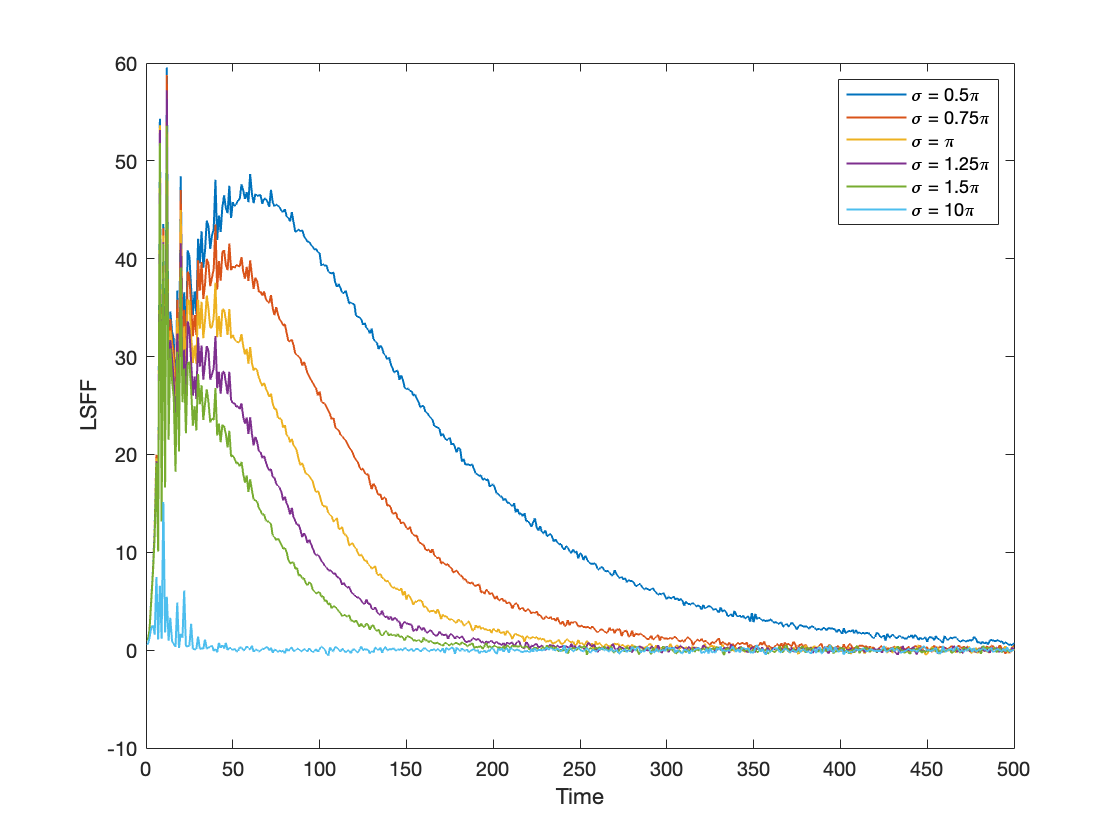}
    \end{subfigure}
    \caption{(Left) LSFF over time for $L = 6, \sigma = \pi, C = 0.9999$, (Right) Plot of LSFF for various values of $\sigma$. Ensemble average over $N = 10^5$.}
    \label{fig:lsff decay}
\end{figure}

Numerical analysis as shown in Fig. \ref{fig:lsff decay} supports this behavior, showing a ramp for early times (ignoring fluctuations) followed by an exponential decay. Furthermore, the decay is faster with increasing $\sigma$, as \eqref{eq:lsff conjecture} suggests. Unless mentioned otherwise, the data in the rest of this section is for $L = 6, \sigma = \pi,\delta = 5 \cdot 10^{-5}$ and averaged over $N = 2\cdot10^5$.

We can go a step further and fit our data to the sample curve $f(t) = 2t e^{-a t}$, where $a$ is the fitting parameter. We must be careful with the ranges we are fitting over, as this behavior is only subleading in $\delta$. Beyond subleading corrections will affect the behavior for intermediate times, and large early time fluctuations (similar to the ones we saw for the SFF) make the fit poor for early times. Thus, we only expect our ansatz to be a good fit for the late time decay. For fixed $\sigma$ and sufficiently small $\delta$ we should have a linear relationship with the fitting parameter and $\delta$, i.e. $a \propto \delta$ as our ansatz \eqref{eq:lsff conjecture} predicts, which we verify numerically.
\begin{figure}[ht]
    \centering
    \begin{subfigure}{0.49\textwidth}
        \centering
        \includegraphics[width=\linewidth]{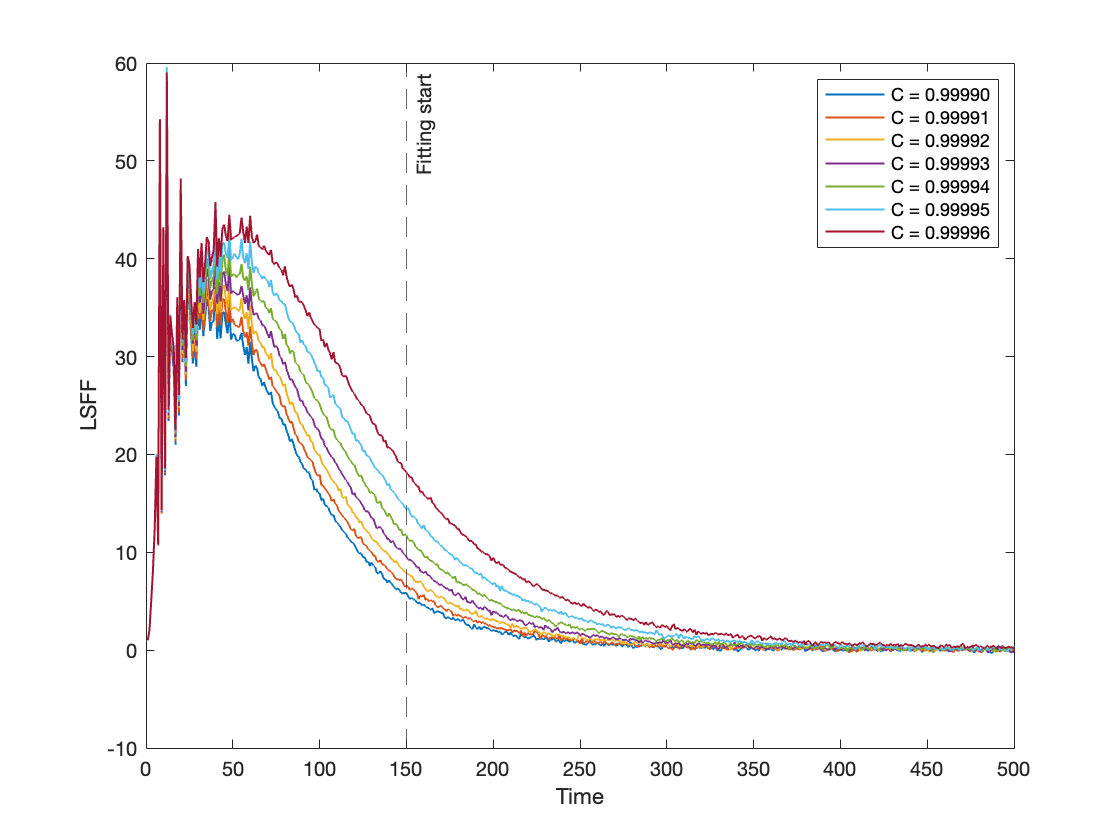}
    \end{subfigure}
    \begin{subfigure}{0.49\textwidth}
        \centering
        \includegraphics[width=\linewidth]{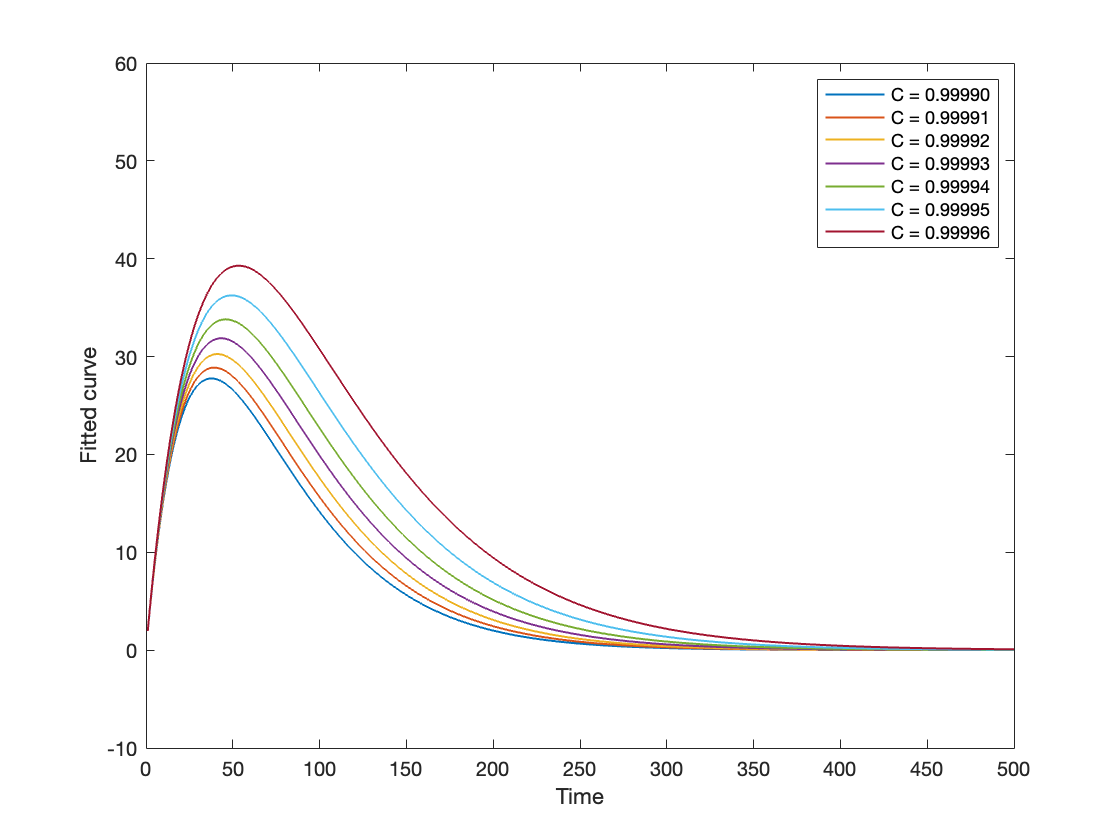}
    \end{subfigure}
    \caption{(Left) LSFF for various $\delta = \frac{1-C}{2}$ values with line showing where they were all fit to $f(t)$. (Right) Fits of the curves on the left starting at $t = 150$.}
    \label{fig:varying C}
\end{figure}

\begin{figure}[ht]
    \centering
    \begin{subfigure}{0.49\textwidth}
        \centering
        \includegraphics[width=\linewidth]{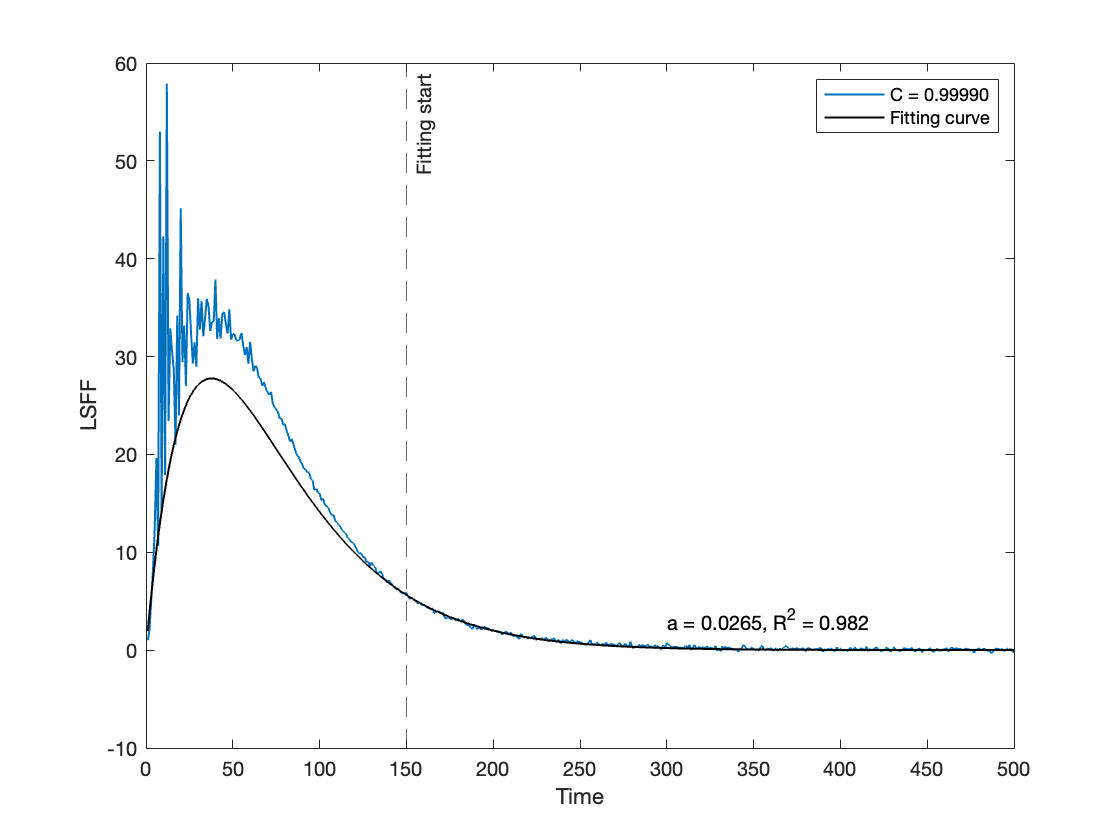}
    \end{subfigure}
    \begin{subfigure}{0.49\textwidth}
        \centering
        \includegraphics[width=\linewidth]{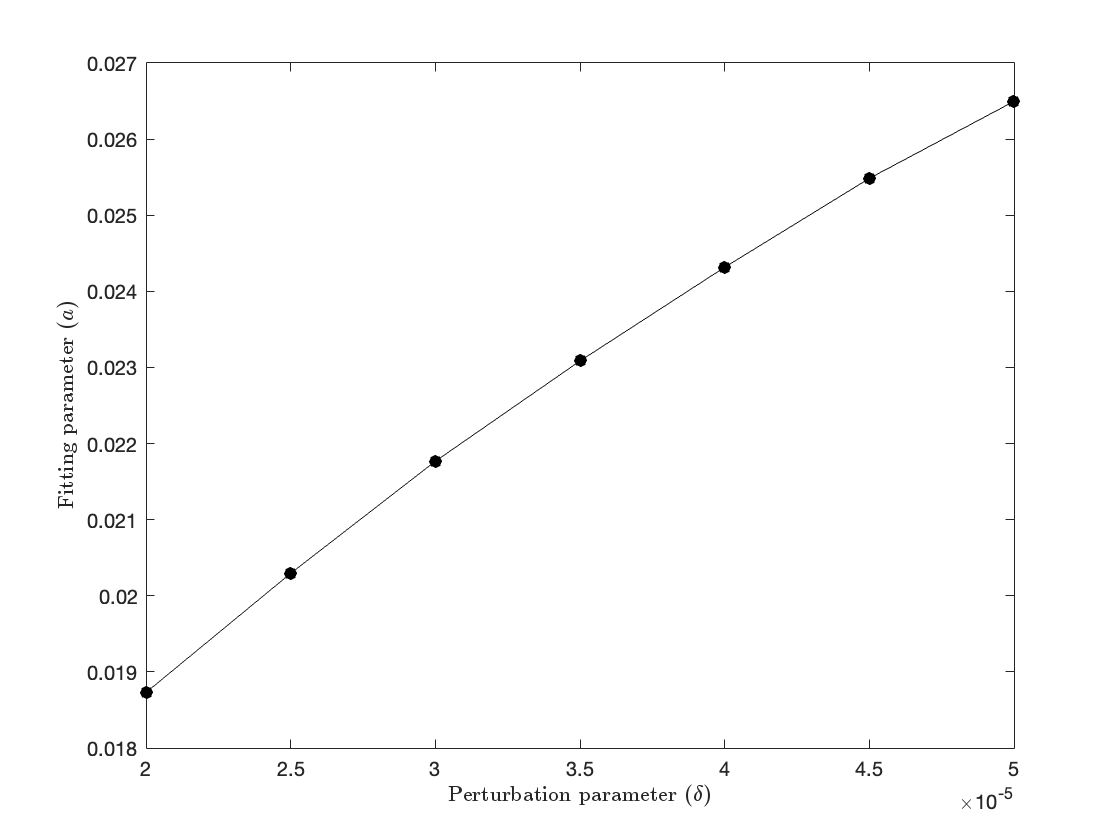}
    \end{subfigure}
    \caption{(Left) Fit shown for $\delta = 5\cdot10^{-5}$ along with value for fitting parameter $a$ and coefficient of determination $R^2$. (Right) Dependence of fitting parameter $a$ on perturbation parameter $\delta$.}
    \label{fig:fitting C}
\end{figure}

Fig. \ref{fig:varying C} shows the $\delta$-dependence of the data, along with the fits performed (at late times). Fig. \ref{fig:fitting C} shows the results of this fitting, yielding a nearly linear relationship between $a$ (fitting parameter) and $\delta$, as expected from \eqref{eq:lsff conjecture}.

\subsection{Higher order LSFF}
Following the approach we followed with the SFF, it is natural to consider higher moment analogues of the LSFF. The LSFF is no longer the 2nd moment of some return amplitude; however, we can instead consider deformations of the 4th moment of $Z(t)$ to define the 4th order LSFF,
\begin{equation}
    M_{\rm 4E} = {\rm tr}\left(\mathbb{T}_{\rm E}^{{(4)}^L}\right)
\end{equation}
where,
\begin{align}
    \mathbb{T}_{\rm E}^{(4)} &= \mathbb E_{h_1,h_2}\left[\tilde{U}_{\rm KI}[h_1 \boldsymbol \epsilon] \otimes \tilde{U}_{\rm KI}[h_2 \boldsymbol \epsilon]^* \otimes \tilde{U}_{\rm KI}[h_2 \boldsymbol \epsilon] \otimes \tilde{U}_{\rm KI}[h_1 \boldsymbol \epsilon]^*\right] = (\tilde{U}_{\rm KI} \otimes \tilde{U}_{\rm KI}^* \otimes \tilde{U}_{\rm KI} \otimes \tilde{U}_{\rm KI}^* )\cdot \mathbb{O}^{(4)}_{\sigma, C} \\
    \mathbb{O}^{(4)}_{\sigma,C} &= \mathbb{O}_{\sigma}^{(4)} \exp{\left[-\frac{\delta\sigma^2}{2} \left(M^2 - \Delta^2\right)\right]}.
\end{align}
Here we again define $\delta = \frac{1-C}{2}$ to be our perturbation parameter and $M_z^{(1)} = M_z \otimes \1 \otimes \1 \otimes \1$. We have used the variables $M \equiv M_z^{(1)} + M_z^{(2)} - M_z^{(3)} - M_z^{(4)}$ and $\Delta \equiv M_z^{(1)} - M_z^{(2)} + M_z^{(3)} - M_z^{(4)}$, and once again $\mathbb{O}_\sigma^{(4)} = \exp{\left(-\frac{\sigma^2}{2} \Delta^2\right)}$ is recovered for $C = 1$.

Similar to the LSFF, the trace has non-zero contributions in the TL only if there exist unimodular eigenvectors of $\mathbb{T}_{\rm E}^{(4)}$. These vectors must be unimodular eigenvectors $\ket{A^{(4)}}$ of $\mathbb{T}^{(4)}_{\rm E}$, satisfying the additional constraint $M \ket{A^{(4)}} = 0$. While no such eigenvectors could exist for the LSFF, they do for the 4th moment. Sec. \ref{sec:periodic} describes how the eigenvectors $\ket{A^{(4)}}$ are constructed by pairings of the 2nd order eigenvectors $\ket{A}$, where we obtain 3 different pairings: vectors that pair (1) $M_z^{(1)}$ with $M_z^{(2)}$ and $M_z^{(3)}$ with $M_z^{(4)}$, (2) $M_z^{(1)}$ with $M_z^{(4)}$ and $M_z^{(2)}$ with $M_z^{(3)}$, and (3) $M_z^{(1)}$ with $-M_z^{(3)}$ and $M_z^{(2)}$ with $-M_z^{(4)}$. For eigenvectors of pairing (2) we have $M_z^{(1)} \ket{A^{(4)}} = M_z^{(4)}\ket{A^{(4)}}$ and $M_z^{(2)} \ket{A^{(4)}} = M_z^{(3)}\ket{A^{(4)}}$, which imply $M \ket{A^{(4)}} = 0$. Since each pairing contributes a factor of $K(t)^2$ to $M_4 (t)$, we see that the leading order ($\mathcal{O}(\delta^0)$) contribution to $M_{\rm 4E}$ must be $K(t)^2$. It should also be noted that the fact that eigenvectors corresponding to pairing (2) have the leading order contribution is an artifact of the choice of signature we used to define $\mathbb{T}^{(4)}$, $(-,*,-,*)$. Changing the signature to $(-,-,*,*)$ would instead result in the eigenvectors corresponding to pairing (1) having the leading order contribution. However, pairing (3) eigenvectors do not have the leading order contribution for any choice of signature.

To obtain this result, along with the subleading contribution that appears at finite size ($\mathcal{O}(\delta)$), we repeat the procedure for the LSFF and expand $M_{\rm 4E}$ in $\delta$,
\begin{align}
    \rm tr \left(\mathbb{T}_E^{{(4)}^L}\right) &= \rm tr \left(\mathbb{T}^{(4)} \exp{\left[-\frac{\delta \sigma^2}{2} \left(M^2 - \Delta^2\right)\right]}\right)^L \\
    &= \rm tr \left(\mathbb{T}^{(4)} \left[ 1 - \frac{\delta \sigma^2}{2} \left( M^2 - \Delta^2\right) + \mathcal{O}\left(\delta^2\right)\right]\right)^L \\
    \label{eq:lsff4 ansatz}
    &= \rm tr \left( \mathbb{T}^{(4)^L}\right) - \frac{\delta \sigma^2 L}{2} \rm tr \left(\mathbb{T}^{(4)^{L}} \left[M^2 - \Delta^2\right]\right) + \mathcal{O}\left(\delta^2\right)
\end{align}
Now we expand the 2nd trace in terms of unimodular eigenvectors of $\mathbb{T}^{(4)}$, since only these have a subleading contribution by the same argument as for the LSFF,
\begin{align*}
    \rm{tr} \left(\mathbb{T}^{(4)^{L}} \left[M^2 - \Delta^2\right]\right) &= \sum_{\ket{A^{(4)}}}^{3K(t)^2} \left(\bra{A^{(4)}} \mathbb{T}^{(4)^{L}} M^2 \ket{A^{(4)}} - \bra{A^{(4)}} \mathbb{T}^{(4)^{L}} \Delta^2 \ket{A^{(4)}}\right) \\
    &= \sum_{\ket{A^{(4)}}}^{3K(t)^2} \bra{A^{(4)}} M^2 \ket{A^{(4)}}
\end{align*}
where we take the sum over eigenvectors formed from all 3 pairings. To evaluate this inner product, we must now split up the sum over the eigenvectors $\ket{A^{(4)}_1},\ket{A^{(4)}_2},\ket{A^{(4)}_3}$ corresponding to the 3 different pairings discussed earlier. As we mentioned, $M\ket{A^{(4)}_2} = 0$ so only 2 of the eigenvectors contribute. Note that for $\ket{A^{(4)}_1}$ we set $M_z^{(2)} \ket{A^{(4)}_1}  = M_z^{(1)} \ket{A^{(4)}_1}$, $M_z^{(3)} \ket{A^{(4)}_1}  = M_z^{(4)} \ket{A^{(4)}_1}$ and for $\ket{A_3^{(4)}}$ we set $M_z^{(2)} \ket{A^{(4)}_3}  = -M_z^{(4)} \ket{A^{(4)}_3}$, $M_z^{(3)} \ket{A^{(4)}_3}  = -M_z^{(1)} \ket{A^{(4)}_3}$. Thus we obtain,
\begin{align*}
    \rm{tr} \left(\mathbb{T}^{(4)^{L}} \left[M^2 - \Delta^2\right]\right) &= \sum_{\ket{A^{(4)}}}^{2K^2(t)} 4\bra{A^{(4)}}  \left(M_z^{(1)} - M_z^{(4)}\right)^2 \ket{A^{(4)}} \\
    &= 8K^2(t) \overline{\bra{A^{(4)}}  \left(M_z^{(1)} - M_z^{(4)}\right)^2 \ket{A^{(4)}}}
\end{align*}
where we take the overline to mean the average over the non-zero contributions (i.e. from $\ket{A^{(4)}_1}$ and $\ket{A^{(4)}_3}$). Once again, we expect of order $t$ non-zero terms in this inner product, suggesting again $\overline{\bra{A^{(4)}}  \left(M_z^{(1)} - M_z^{(4)}\right)^2 \ket{A^{(4)}}} \sim t$. Substituting $\rm tr \left( \mathbb{T}^{(4)^L}\right) = 3K^2(t)$ and collecting terms in the original expression \eqref{eq:lsff4 ansatz},
\begin{align}
    \rm tr \left(\left(\mathbb{T}_E^{(4)}\right)^L\right) &= K(t)^2 + 2K(t)^2  \left[1 - 2 \delta\sigma^2 L t + \mathcal{O}\left(\delta^2\right)\right] \\ 
    & \sim K(t)^2 + 2K(t)^2 e^{-2 \delta \sigma^2 L t} \\
    \label{eq:lsff4 conjecture}
    & \sim  4t^2 + 8t^2 e^{-2\delta\sigma^2 L t}
\end{align}
Similar to the SFF, there is an exponential decay for large finite $L$ but now we also have $M_{\rm 4E}(t) = 4t^2$ in the TL, which is simply the 2nd moment squared of the COE. For finite $L$ we thus expect some decay behavior, but instead of decaying to 0 it plateaus at the same value the 2nd moment squared of the COE plateaus as shown in Fig. \ref{fig:lsff4 var}.
\begin{figure}[ht]
    \centering
    \begin{subfigure}{0.49\textwidth}
        \centering
        \includegraphics[width=\linewidth]{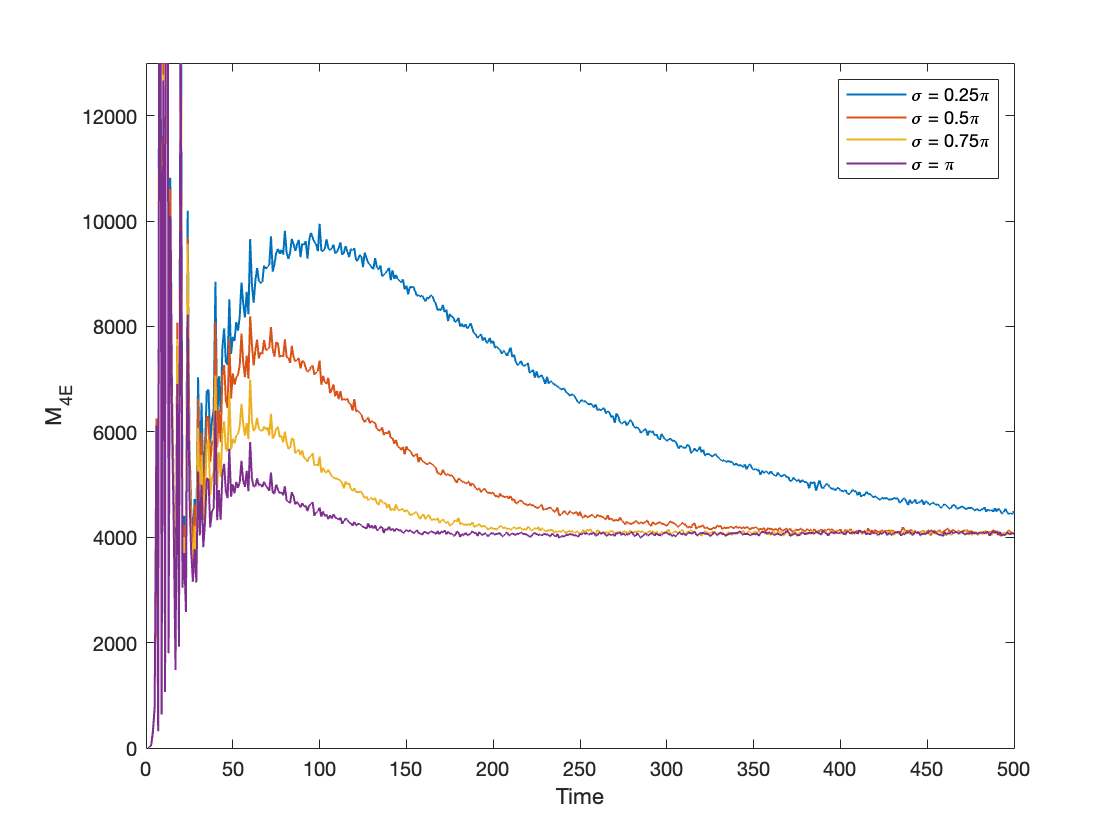}
    \end{subfigure}
    \begin{subfigure}{0.49\textwidth}
        \centering
        \includegraphics[width=\linewidth]{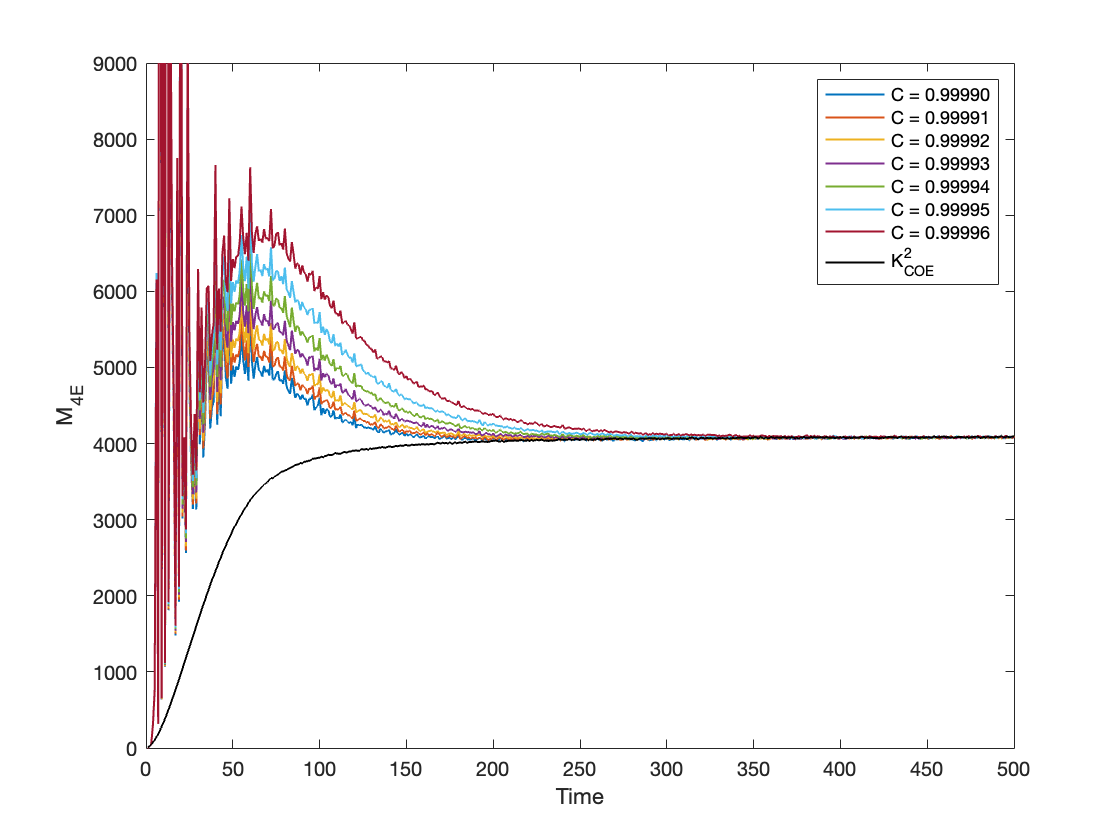}
    \end{subfigure}
    \caption{Plots of $M_{\rm 4E}(t)$ varying (left) $\sigma$ and (right) $\delta$. The black curve is the leading order behavior given by $K^2_{\rm COE}$.}
    \label{fig:lsff4 var}
\end{figure}
We can once again fit the data to a sample curve. To perform the fits, we subtract off the leading order behavior of $K^2_{\rm COE}$ and fit to the sample curve $f(t) = 8t^2 e^{-at}$. Once again, we test the relationship between the fitting parameter $a$ and the perturbation parameter $\delta$.
\begin{figure}[ht]
    \centering
    \begin{subfigure}{0.49\textwidth}
        \centering
        \includegraphics[width=\linewidth]{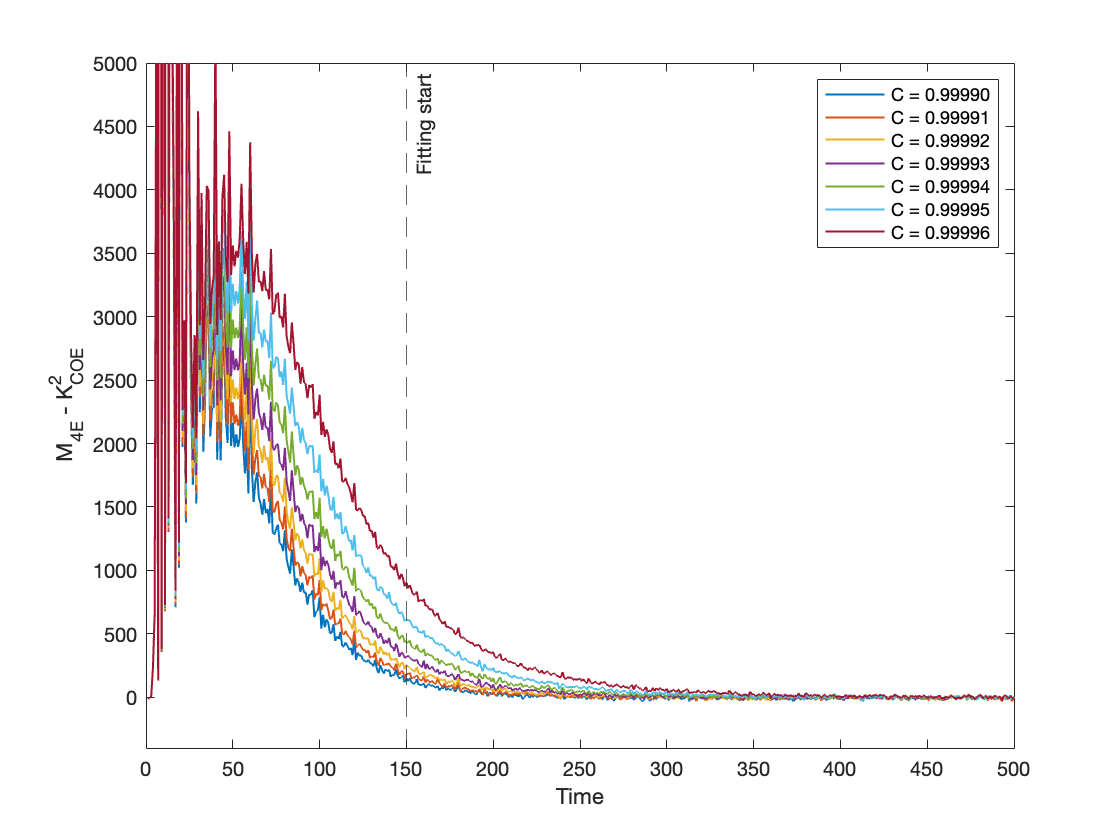}
    \end{subfigure}
    \begin{subfigure}{0.49\textwidth}
        \centering
        \includegraphics[width=\linewidth]{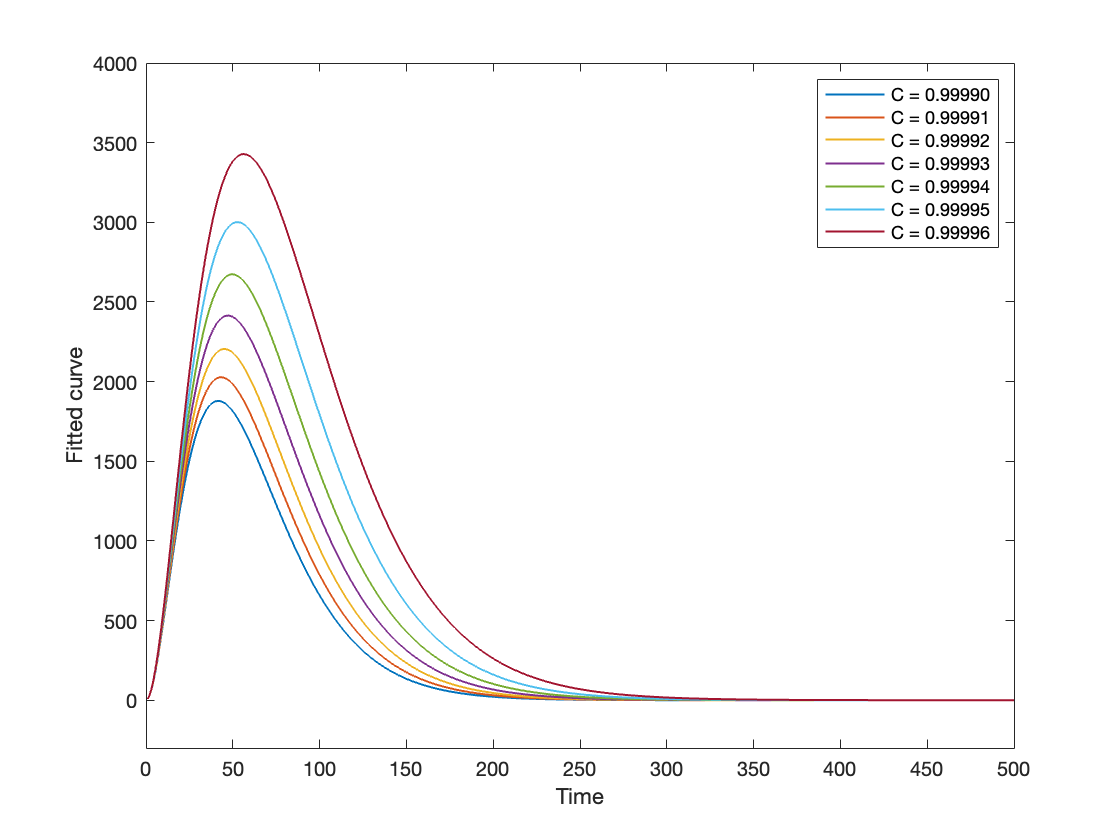}
    \end{subfigure}
    \caption{(Left) $M_{4E} - K_{\rm COE}^2$ for various $\delta$ values with line showing where they were all fit to $f(t)$. (Right) Fits of curves on the left starting at $t = 150$. Averaged over $N = 2 \cdot 10^6$}
    \label{fig:varying C4}
\end{figure}

\begin{figure}[ht]
    \centering
    \begin{subfigure}{0.49\textwidth}
        \centering
        \includegraphics[width=\linewidth]{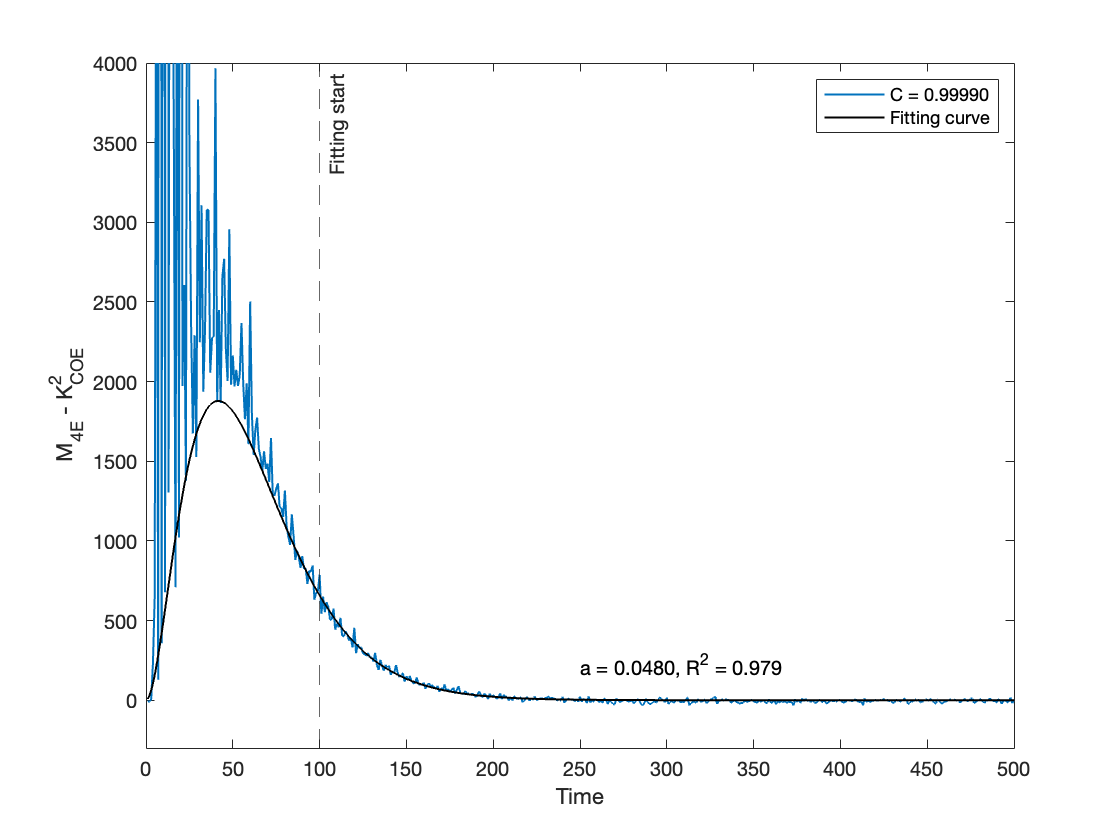}
    \end{subfigure}
    \begin{subfigure}{0.49\textwidth}
        \centering
        \includegraphics[width=\linewidth]{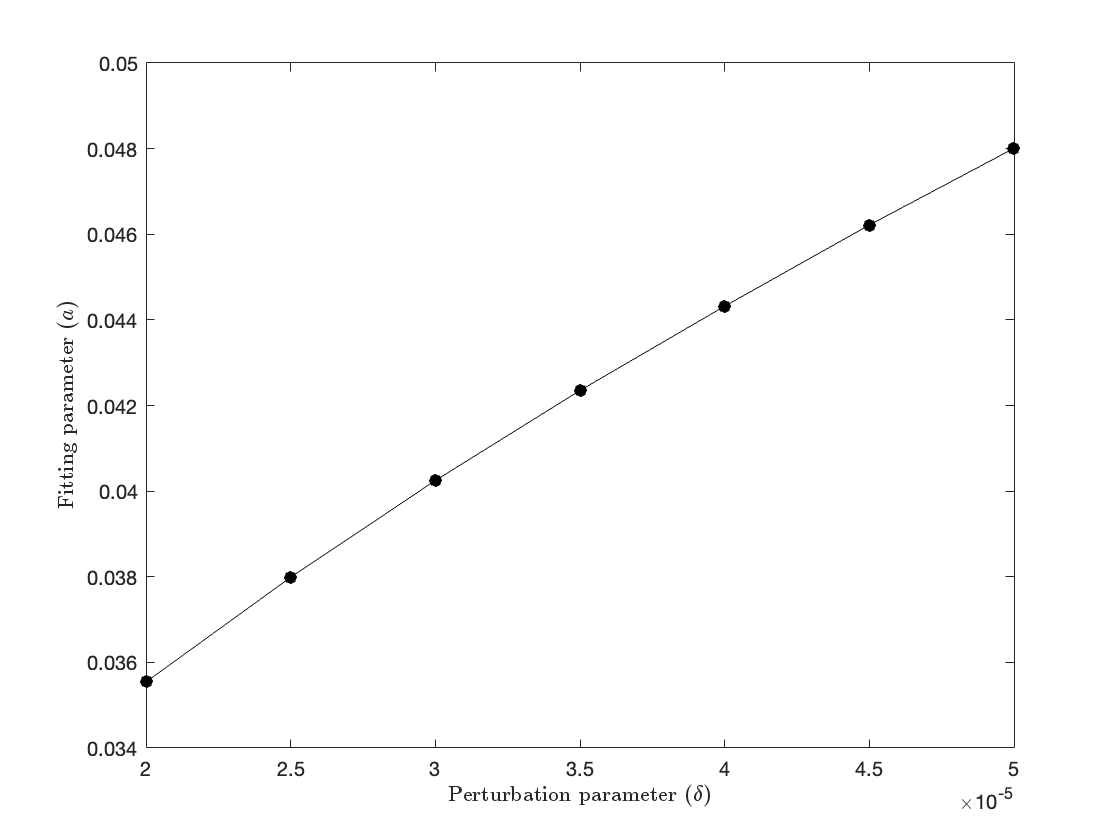}
    \end{subfigure}
    \caption{(Left) Fit shown for $\delta = 5\cdot10^{-5}$ along with value for fitting parameter $a$ and coefficient of determination $R^2$. (Right) Dependence of fitting parameter $a$ on perturbation parameter $\delta$}
    \label{fig:fitting C4}
\end{figure}

Fig. \ref{fig:varying C4} shows the $\delta$-dependence of the data, along with the fits performed (at late times). Fig. \ref{fig:fitting C4} shows the results of this fitting, yielding a nearly linear relationship between $a$ (fitting parameter) and $\delta$, as expected from \eqref{eq:lsff4 conjecture}.

\subsection{Open boundary conditions}
The LSFF (and its higher order variants) can also be computed for open boundary conditions. One method would involve computing the appropriate overlap between the boundary vectors from Sec. \ref{sec:open} and the relevant unimodular eigenvectors. Another simply makes use of our knowledge of the numerical data for open boundary conditions to verify the numerical data for the LSFF for open boundary conditions.

The periodic and open case agree for the 2nd moment, i.e. $K(t) = K_{\rm open}(t)$. Since the perturbative expansion of the LSFF can be written as $K(t) \left[1 - 2 \delta \sigma^2 L t + \mathcal{O}\left(\delta^2\right)\right] \sim K(t) e^{-2\delta\sigma^2 Lt}$, we use the same ansatz for open boundary conditions i.e. $K_{\rm E}^{(\rm open)} = K_{\rm open}(t) e^{-2 \delta \sigma^2 L t} = 2t e^{-2 \delta \sigma^2 Lt}$. It should be noted that while the LSFF for both open and periodic boundary conditions averages to a real number for large $N$, the return amplitudes $Z_1,Z_2$ for open boundary conditions have complex random statistics. Thus for finite $N$ the LSFF for open boundary conditions has a non-trivial imaginary part; however this can be ignored as this goes to 0 for large $N$.
\begin{figure}[ht]
    \centering
    \begin{subfigure}{0.49\textwidth}
        \centering
        \includegraphics[width=\linewidth]{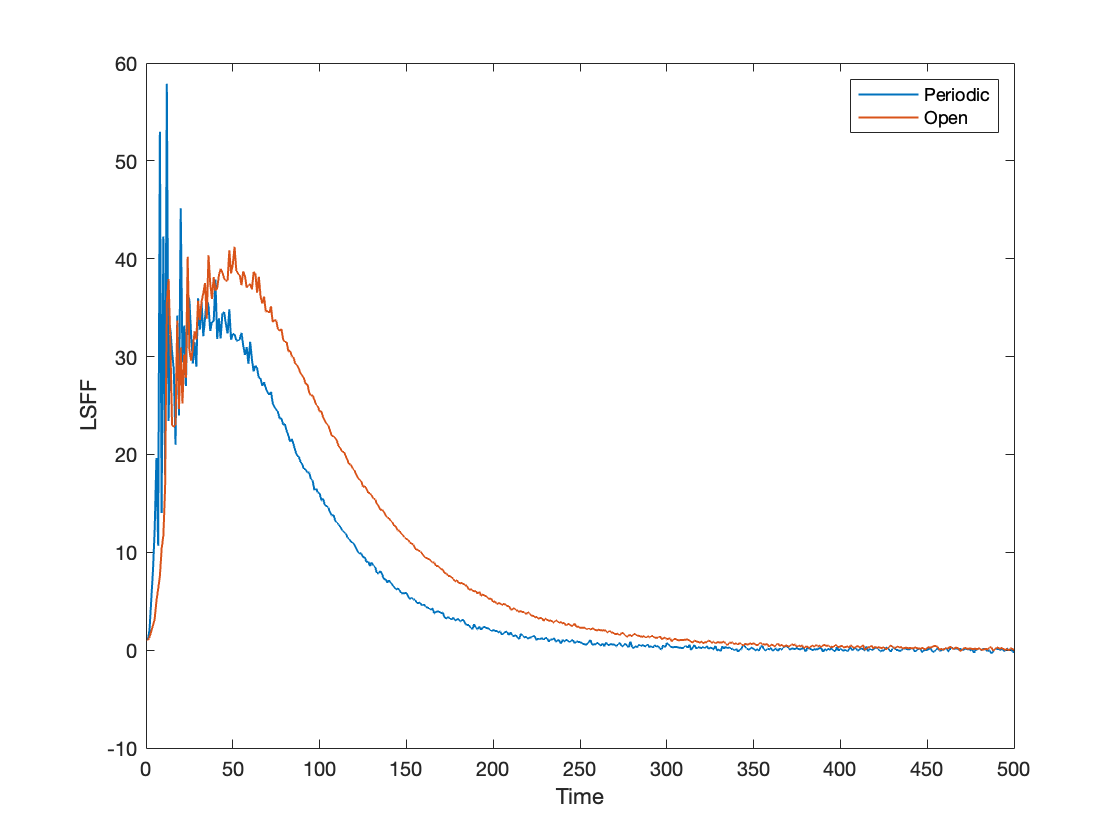}
    \end{subfigure}
    \begin{subfigure}{0.49\textwidth}
        \centering
        \includegraphics[width=\linewidth]{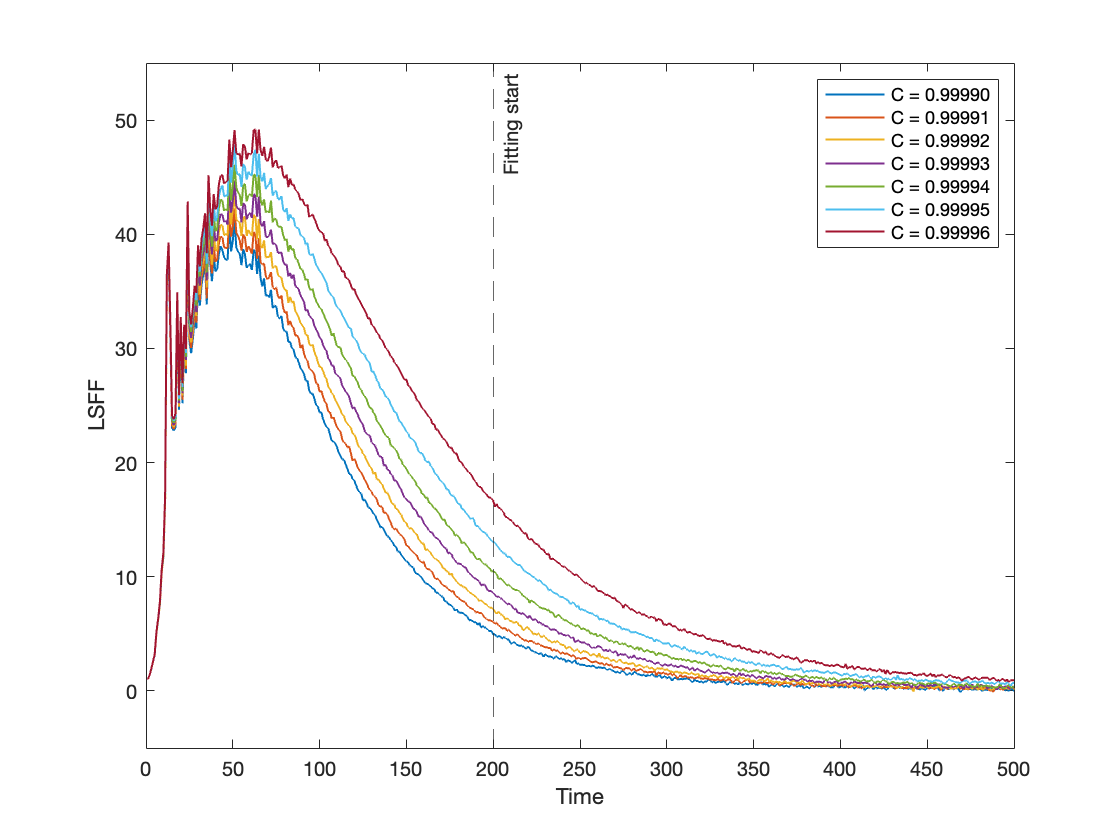}
    \end{subfigure}
    \caption{LSFF for (left) periodic vs open boundary conditions, (right) open boundary conditions for various $\delta$ values}
    \label{fig:lsff open}
\end{figure}

\begin{figure}[ht]
    \centering
    \begin{subfigure}{0.49\textwidth}
        \centering
        \includegraphics[width=\linewidth]{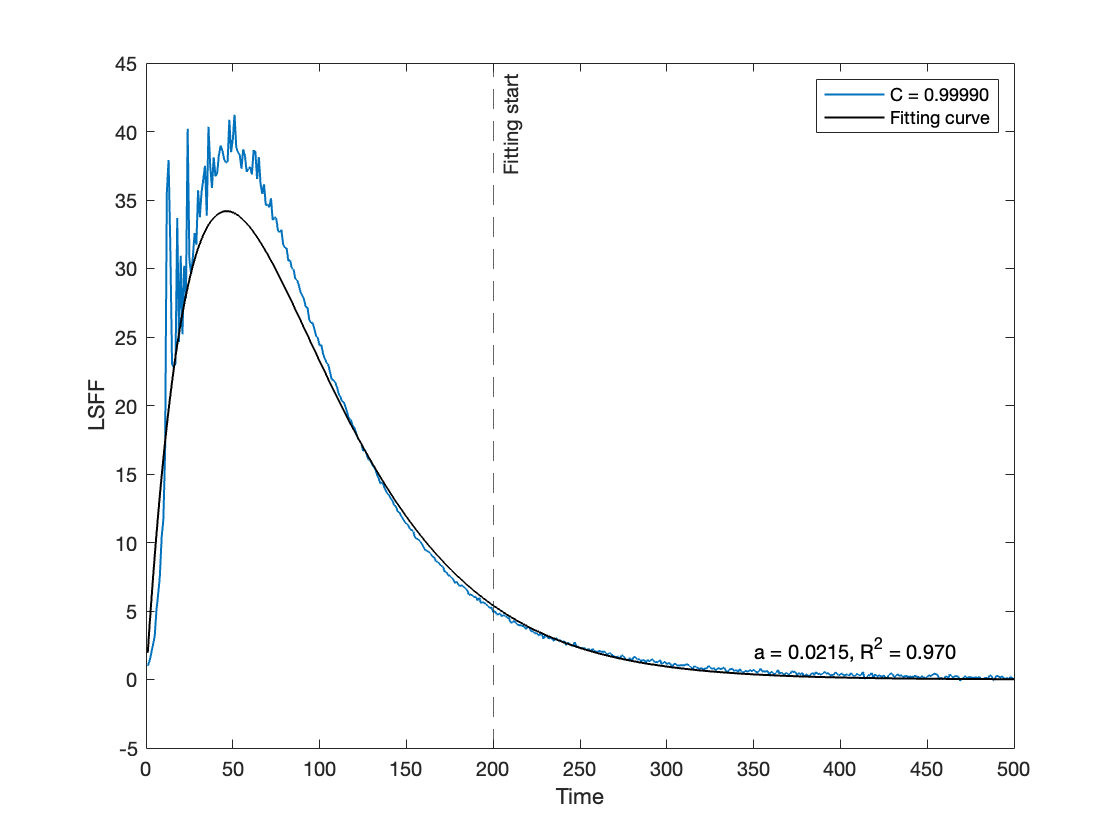}
    \end{subfigure}
    \begin{subfigure}{0.49\textwidth}
        \centering
        \includegraphics[width=\linewidth]{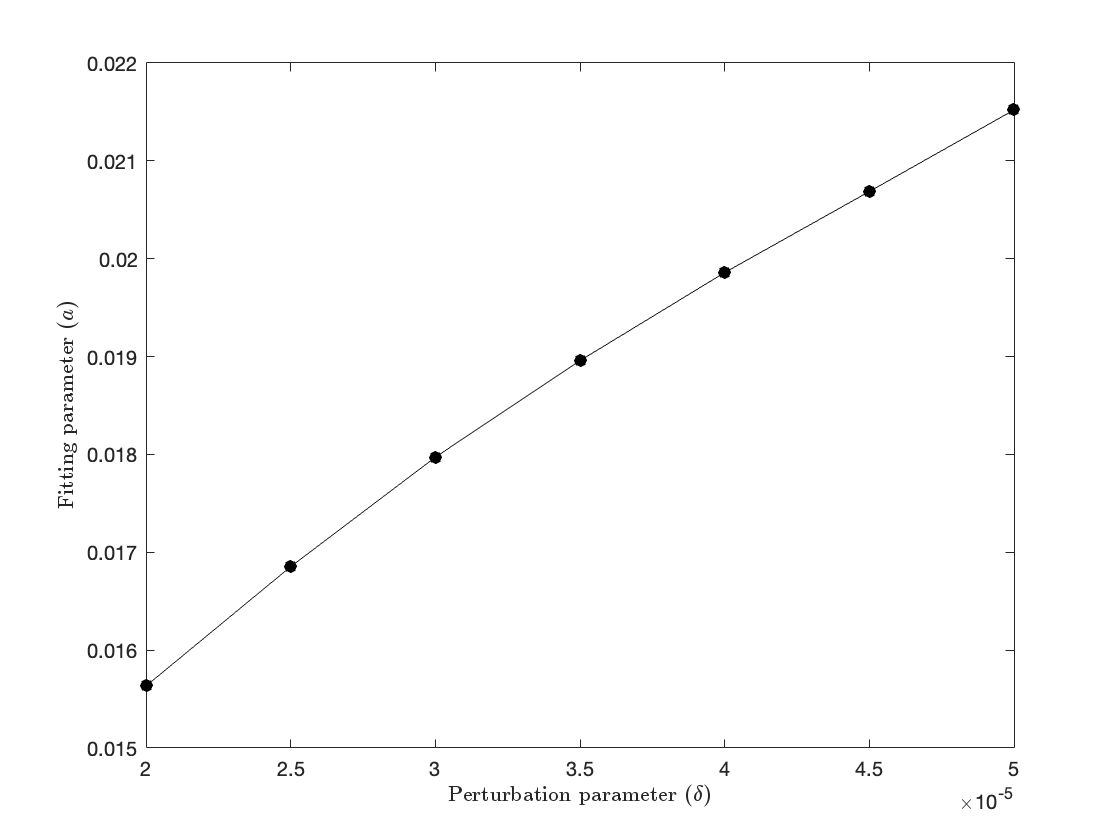}
    \end{subfigure}
    \caption{(Left) Fit shown for $\delta = 5\cdot10^{-5}$ along with value for fitting parameter $a$ and coefficient of determination $R^2$. (Right) Dependence of fitting parameter $a$ on perturbation parameter $\delta$}
    \label{fig:fitting C open}
\end{figure}

Fig. \ref{fig:lsff open} shows plots of the periodic and open LSFF together for the same parameters. Here we can see that the exponential decay ansatz is valid for both, albeit for differing decay parameters. We can also verify that when fitting to a sample curve $f(t) = 2t e^{-at}$ while varying $\delta$ (see Fig. \ref{fig:lsff open}), we obtain a linear relationship between the fitting parameter $a$ and $\delta$, like in the periodic case (see Fig. \ref{fig:fitting C open}).

For the 4th order LSFF, we must account for the difference between $M_4$ and $M_4^{(\rm open)}$. The 4th order LSFF is written as the perturbative expansion,
\begin{align*}
    M_{\rm 4E}(t) &= \rm tr \left( \mathbb{T}^{(4)^L}\right) - \frac{\delta \sigma^2 L}{2} \rm tr \left(\mathbb{T}^{(4)^{L-1}} \left[M^2 - \Delta^2\right]\right) + \mathcal{O}\left(\delta^2\right) \\
    &= M_4(t) - 2 \delta \sigma^2 L \sum_{\ket{A^{(4)}}}^{M_{4}(t)}  \bra{A^{(4)}}  \left(M_z^{(1)} - M_z^{(4)}\right)^2 \ket{A^{(4)}}
\end{align*}
Modifying appropriately for open boundary conditions we replace $M_4(t) = 3K(t)^2$ with $M_4^{(\rm open)}(t) = 2K(t)^2$. Furthermore, the sum over the 2nd term is over all 3 pairings, but only the 1st and 3rd pairings contribute. For open boundary conditions, we anticipate the 3rd pairing to be absent and so the sum is only over the 1st and 2nd pairings, of which only the 1st pairing has a non-zero contribution. We again expect the averaged value $ \overline{\bra{A^{(4)}}  \left(M_z^{(1)} - M_z^{(4)}\right)^2 \ket{A^{(4)}}} \sim t$ to have the same time dependence. Thus we obtain,
\begin{align}
    M_{\rm 4E}^{(\rm open)}(t) &= K(t)^2 + K(t)^2 \left[1 - 2 \delta\sigma^2 L t + \mathcal{O}\left(\delta^2\right)\right] \\ 
    & \sim K(t)^2 + K(t)^2 e^{-2 \delta \sigma^2 L t} \\
    \label{eq:lsff4 open conjecture}
    & \sim  4t^2 + 4t^2 e^{-2\delta\sigma^2 L t}
\end{align}

\begin{figure}[ht]
    \centering
    \begin{subfigure}{0.49\textwidth}
        \centering
        \includegraphics[width=\linewidth]{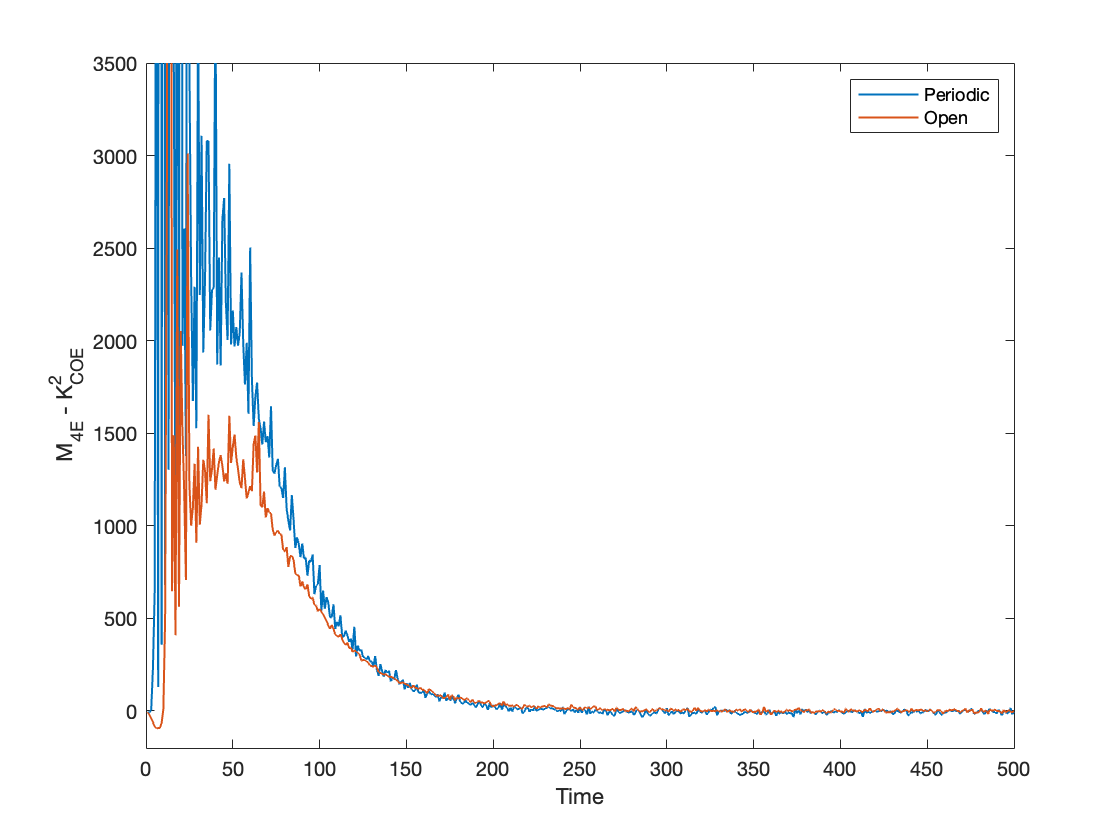}
    \end{subfigure}
    \begin{subfigure}{0.49\textwidth}
        \centering
        \includegraphics[width=\linewidth]{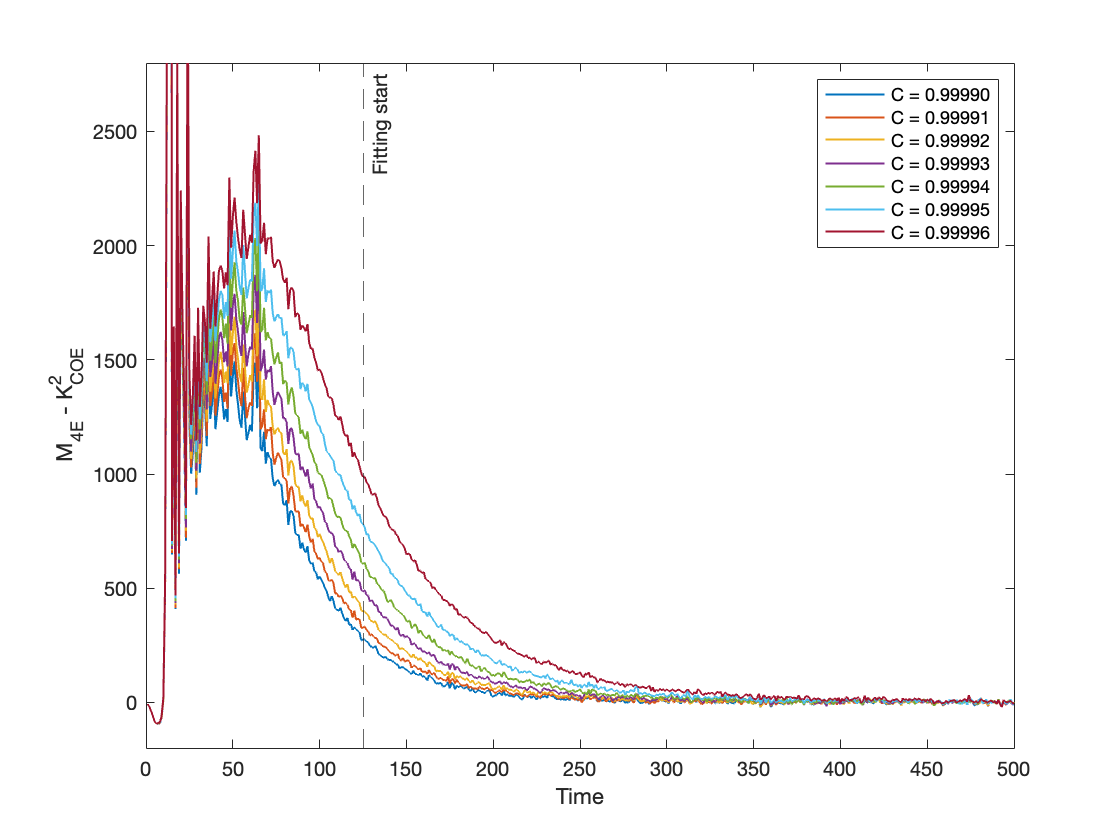}
    \end{subfigure}
    \caption{$M_{\rm 4E} - K_{\rm COE}^2$ for (left) periodic vs open boundary conditions, (right) open boundary conditions for various $\delta$ values. Averaged over $N = 2\cdot 10^6$}
    \label{fig:lsff4 open}
\end{figure}

\begin{figure}[ht]
    \centering
    \begin{subfigure}{0.49\textwidth}
        \centering
        \includegraphics[width=\linewidth]{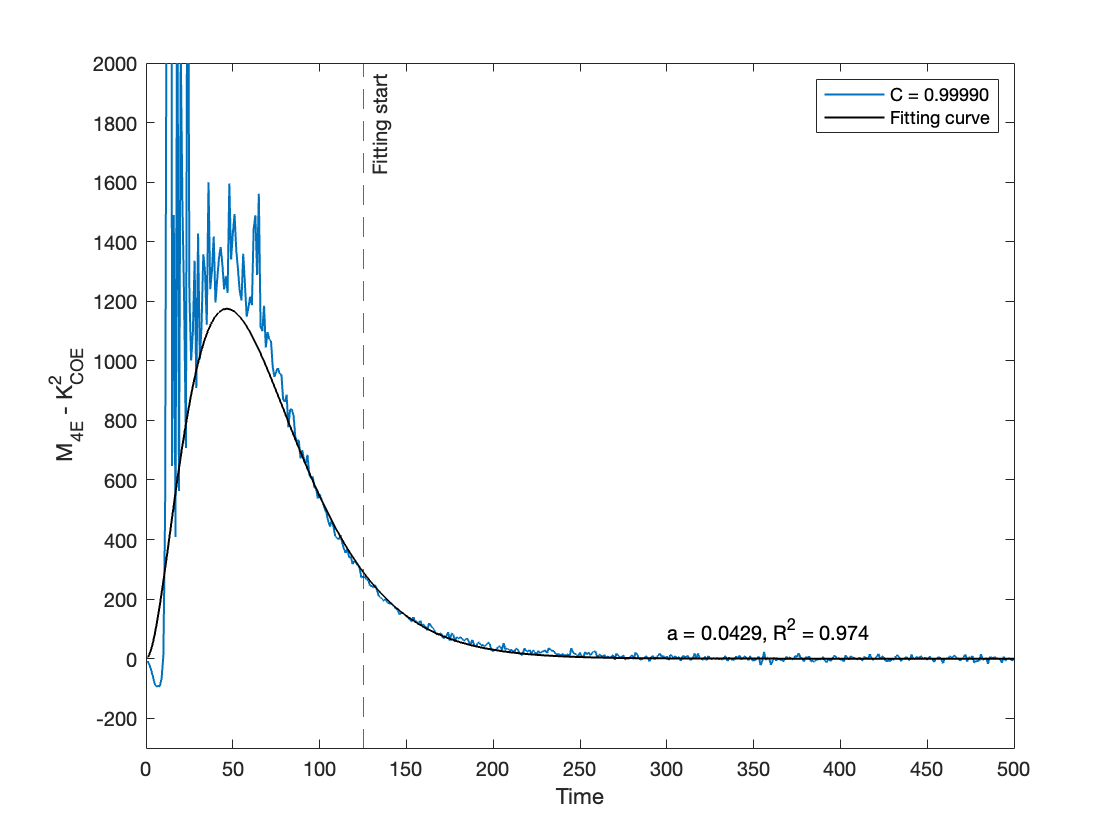}
    \end{subfigure}
    \begin{subfigure}{0.49\textwidth}
        \centering
        \includegraphics[width=\linewidth]{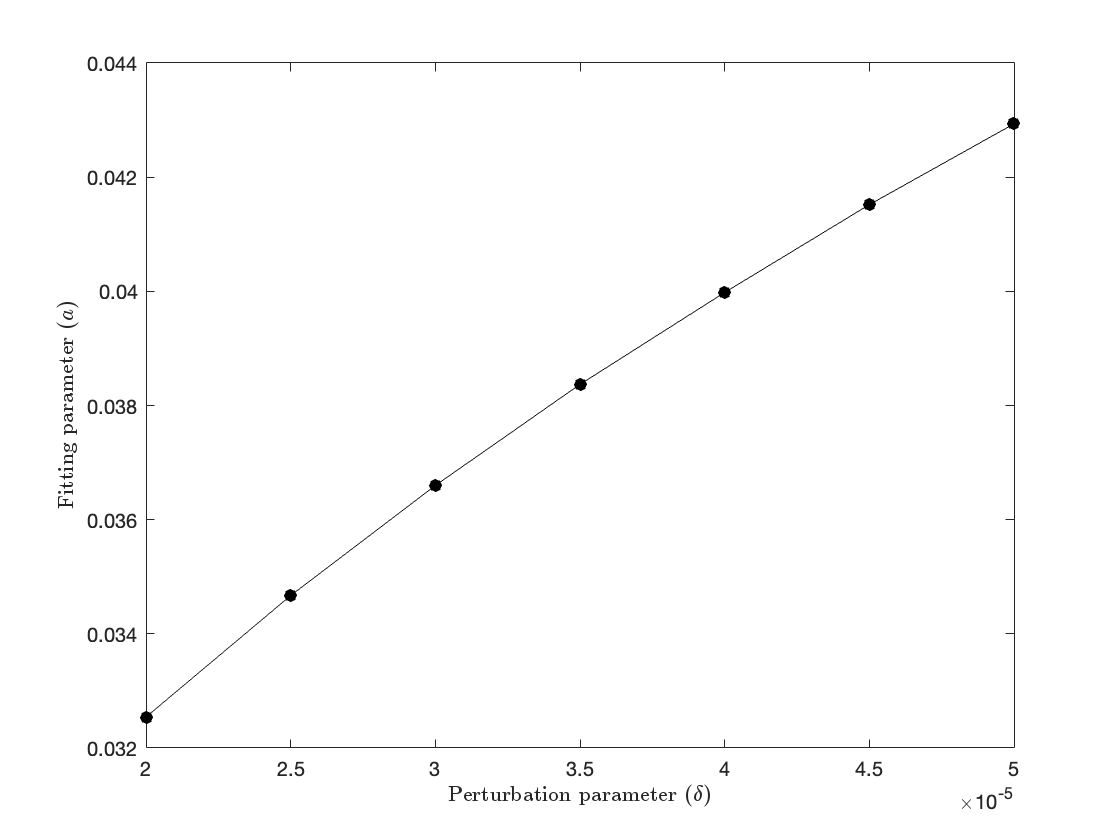}
    \end{subfigure}
    \caption{(Left) Fit shown for $\delta = 5\cdot10^{-5}$ along with value for fitting parameter $a$ and coefficient of determination $R^2$. (Right) Dependence of fitting parameter $a$ on perturbation parameter $\delta$}
    \label{fig:fitting C4 open}
\end{figure}

Note that the leading order behavior $(4t^2)$ remains the same regardless of boundary conditions; it is only the subleading behavior that changes. We see this is Fig. \ref{fig:lsff4 open}, where the ramp for the periodic case is steeper.\footnote{For early time, $M_{\rm 4E}-K_{\rm COE}^2$ for open boundary conditions is negative. This is likely due to early time fluctuations and is expected to become non-negative with sufficient averaging.}

We also vary $\delta$ (see Fig. \ref{fig:lsff4 open}) and fit to the sample curve $f(t) = 4t^2 e^{-at}$, noting once again the linear relationship between $a$ and $\delta$ (see Fig. \ref{fig:fitting C4 open}). Thus the pairings described in Sec. \ref{sec:periodic} provide a convenient method to compute the LSFF and its higher order variants for both open and periodic boundary conditions.

\section{Discussion}
\label{sec:discussion}

In this work we have considered higher order spectral statistics in the kicked Ising model for a variety of boundary conditions and couplings. Considering the trace of the Floquet operator ${\rm tr}\left(U^t_{\rm KI} [\boldsymbol h]\right)$ to be a random variable, we investigated higher order moments for both open and periodic boundary conditions on the Ising chain. Numerical analyses indicate the system is sensitive to boundary conditions, with the periodic chain obeying statistics consistent with the moments of a real random variable, which is confirmed by explicitly constructing unimodular eigenvectors of the 4th order transfer matrix $\mathbb{T}^{(4)}$ from unimodular eigenvectors of $\mathbb{T}$. This was done by pairing 2nd order eigenvectors, with each pairing contributing $K^2$ to the 4th moment, where $K$ is the 2nd moment, to recover the lower bound from \cite{Flack_2020}.  The open chain followed complex Gaussian statistics and lined up with the COE (circular orthogonal matrix) prediction. An analytic computation of the moments is proposed by altering the transfer matrix approach using boundary vectors. These results are then used to compute the leading and subleading terms for the perturbative expansion of the Loschmidt SFF and its higher order analogues for both open and periodic boundary conditions. Numerical data is fit to an exponential decay ansatz to find the predicted dependence of the decay parameter on the perturbation parameter.
\\\\
The results in this paper have all centered around the self-dual point. As shown in Figure~\ref{fig:non dual}, when away from the self-dual point, we find that both open and periodic boundary conditions reproduce the COE result. Hence, there is something special about the self-dual point as far the higher moments are concerned.
\begin{figure}[h!]
    \centering
    \begin{subfigure}[b]{0.49\textwidth}
        \centering
        \includegraphics[width=\textwidth]{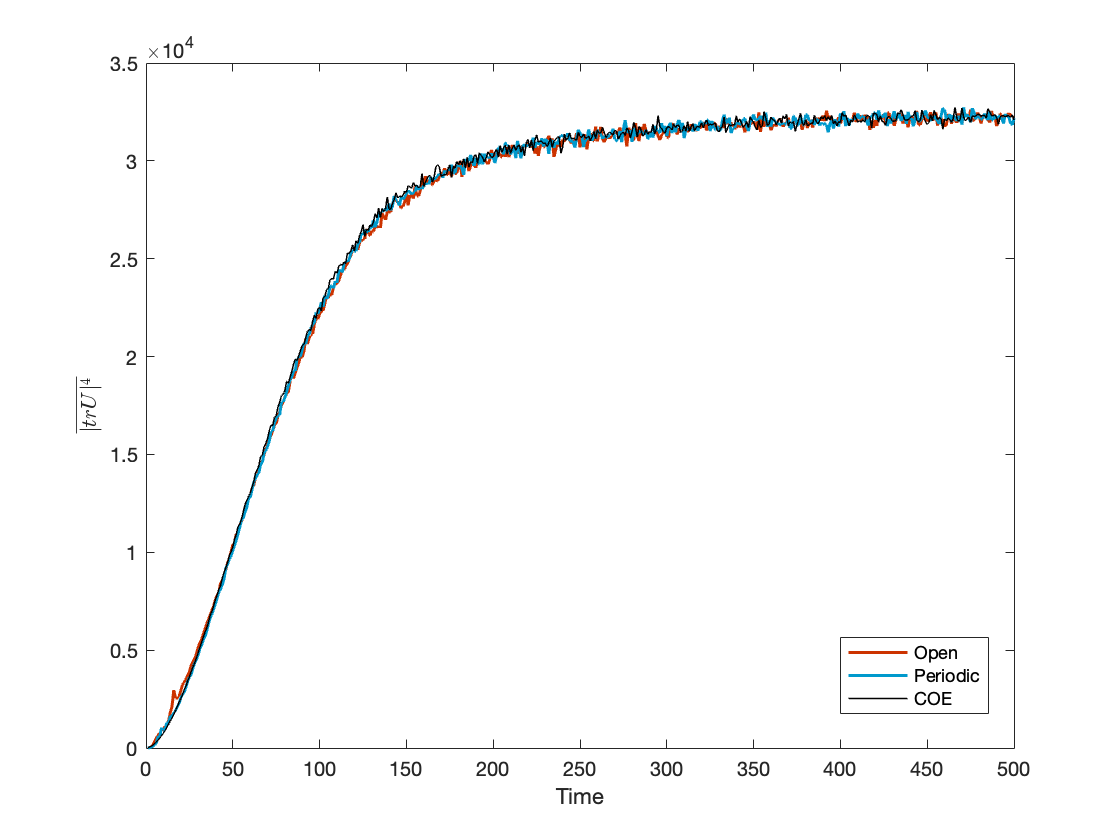}
        \subcaption{$J = (0.8)\pi/4, b = (0.9)\pi/4$}
    \end{subfigure}
    \begin{subfigure}[b]{0.49\textwidth}
        \centering
        \includegraphics[width=\textwidth]{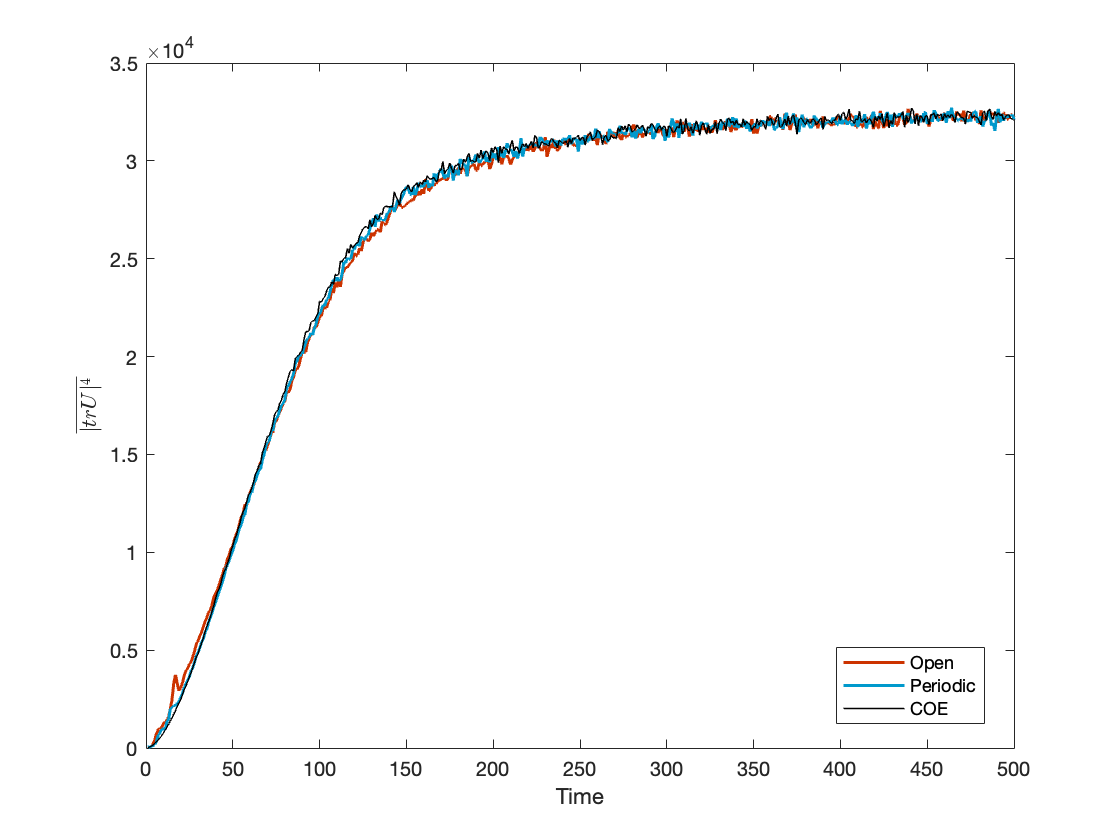}
        \subcaption{$J = (1.1)\pi/4, b = (-0.8)\pi/4$}
    \end{subfigure}
    \caption{4th moments for 7-qubit model away from the self dual points over $N = 10^5$ samples and $\bar{h} = 0.6, \sigma = 100\pi$}
    \label{fig:non dual}
\end{figure}

The proof of claim \ref{eq:claim3} from Sec. \ref{sec:periodic} helps us understand what was so special about the self-dual point(s). The original proof counting the eigenvectors of $\mathbb{T}$ did assume the system was at the self-dual point(s); however numerical results show that deviating slightly from the self-dual points does not affect the 2nd moment significantly \cite{PhysRevE.101.052201}. Thus if we assume (almost) the same 2nd moment and thus the same number of eigenvectors of $\mathbb{T}$ when we move slightly away from the self-dual point(s), one of our pairings of $\mathbb{T}^{(4)}$ must not contribute to the 4th moment since Fig. \ref{fig:non dual} shows that the 4th moment for both open and periodic boundary conditions matches the COE prediction. Our proofs of claims \ref{eq:claim1} and \ref{eq:claim2} made no reference to the self-dual point(s); however our proof of claim \ref{eq:claim3} required us to be at the self-dual point(s) for the eigenvector to exist. For general higher order moments, this means that self pairings of primed/unprimed indices among themselves would not contribute to the moment. Only pairings between a primed and unprimed index can contribute a factor of $K^n$, and there are $n!$ such pairings, giving us the expected moment that matches the COE prediction (i.e. trace behaves as a complex random variable) away from the self-dual point(s).
\\\\
A key limitation to highlight is that the result obtained for the 4th and higher order moments is only a lower bound. While we expect this bound to be saturated in the thermodynamic limit, a complete proof is absent from both our work and \cite{Flack_2020}. Furthermore, future work could focus on giving an analytic result for open boundary conditions using the boundary vector formalism and applying it to the Loschmidt SFF to obtain the results found in Sec. \ref{sec:loschmidt} via a different method. One technical limitation to highlight is the restriction to odd time. \cite{Bertini_2018} was able to prove that unimodular eigenvalues of $\mathbb{T}$ were all 1 only for odd time. This was necessary to reduce the problem of computing the SFF to counting eigenvectors, as otherwise eigenvalues of different phases may cancel when taking the trace. Additional insight can be gained for even time as \cite{Bertini_2018} did show that the only possible eigenvalues of $\mathbb{T}$ here are $\pm 1$. Furthermore, an eigenvalue of $-1$ is much `rarer', appearing in numerical analyses only for early times. Since we constructed higher order moments by pairing the 2nd order eigenvectors, this would suggest that eigenvalues of $-1$ would be just as rare for higher order moments as these would only occur if we paired 2nd order eigenvectors of eigenvalue $+1$ with $-1$. \cite{Flack_2020} suggests the lower bound computed in Sec. \ref{sec:periodic} should be saturated for even times $t \geq 6$; however, a complete analysis is required.
\\\\
Directions for future work include proving that the bound in Sec. \ref{sec:periodic} is saturated in the TL, along with a complete analysis of even times. Further analysis of open boundary conditions could also involve providing analytic and further numerical evidence for the form of the boundary vectors proposed in Sec. \ref{sec:open}. This would also provide a method for an alternate computation of the open boundary condition Loschmidt SFF in Sec. \ref{sec:loschmidt}. Future work could also involve computing subleading corrections to Gaussianity for the SFF and higher order moments.

\begin{acknowledgments}
The authors would like to thank Michael Winer for collaboration on related projects. DG is supported by the Loomis Laboratory of Physics at the University of Illinois, Urbana-Champaign. The MATLAB code and datasets used to generate the figures are available at \cite{swingle_2025_18928776}.

\end{acknowledgments}

\bibliography{bibliography}

@article{Bertini_2018,
   title={Exact Spectral Form Factor in a Minimal Model of Many-Body Quantum Chaos},
   volume={121},
   ISSN={1079-7114},
   url={http://dx.doi.org/10.1103/PhysRevLett.121.264101},
   DOI={10.1103/physrevlett.121.264101},
   number={26},
   journal={Physical Review Letters},
   publisher={American Physical Society (APS)},
   author={Bertini, Bruno and Kos, Pavel and Prosen,Tomaž},
   year={2018},
   month=dec }

@article{Flack_2020,
   title={Statistics of the spectral form factor in the self-dual kicked Ising model},
   volume={2},
   ISSN={2643-1564},
   url={http://dx.doi.org/10.1103/PhysRevResearch.2.043403},
   DOI={10.1103/physrevresearch.2.043403},
   number={4},
   journal={Physical Review Research},
   publisher={American Physical Society (APS)},
   author={Flack, Ana and Bertini, Bruno and Prosen, Tomaž},
   year={2020},
   month=dec }

@article{prosenHamiltonian1,
  title = {General relation between quantum ergodicity and fidelity of quantum dynamics},
  author = {Prosen, Toma\ifmmode\check{z}\else\v{z}\fi{}},
  journal = {Phys. Rev. E},
  volume = {65},
  issue = {3},
  pages = {036208},
  numpages = {5},
  year = {2002},
  month = {Feb},
  publisher = {American Physical Society},
  doi = {10.1103/PhysRevE.65.036208},
  url = {https://link.aps.org/doi/10.1103/PhysRevE.65.036208}
}

@article{ProsenHamiltonian2,
   title={Chaos and complexity of quantum motion},
   volume={40},
   ISSN={1751-8121},
   url={http://dx.doi.org/10.1088/1751-8113/40/28/S02},
   DOI={10.1088/1751-8113/40/28/s02},
   number={28},
   journal={Journal of Physics A: Mathematical and Theoretical},
   publisher={IOP Publishing},
   author={Prosen, Tomaž},
   year={2007},
   month=jun, pages={7881–7918} }

@article{Winer_2022,
   title={The Loschmidt spectral form factor},
   volume={2022},
   ISSN={1029-8479},
   url={http://dx.doi.org/10.1007/JHEP10(2022)137},
   DOI={10.1007/jhep10(2022)137},
   number={10},
   journal={Journal of High Energy Physics},
   publisher={Springer Science and Business Media LLC},
   author={Winer, Michael and Swingle, Brian},
   year={2022},
   month=oct }

@article{PhysRevE.101.052201,
  title = {Transition from quantum chaos to localization in spin chains},
  author = {Braun, Petr and Waltner, Daniel and Akila, Maram and Gutkin, Boris and Guhr, Thomas},
  journal = {Phys. Rev. E},
  volume = {101},
  issue = {5},
  pages = {052201},
  numpages = {10},
  year = {2020},
  month = {May},
  publisher = {American Physical Society},
  doi = {10.1103/PhysRevE.101.052201},
  url = {https://link.aps.org/doi/10.1103/PhysRevE.101.052201}
}

@misc{altland2025statisticsrandommatrixspectral,
      title={Statistics of the Random Matrix Spectral Form Factor}, 
      author={Alex Altland and Francisco Divi and Tobias Micklitz and Silvia Pappalardi and Maedeh Rezaei},
      year={2025},
      eprint={2503.21386},
      archivePrefix={arXiv},
      primaryClass={quant-ph},
      url={https://arxiv.org/abs/2503.21386}, 
}

@article{Cotler_2017,
   title={Chaos, complexity, and random matrices},
   volume={2017},
   ISSN={1029-8479},
   url={http://dx.doi.org/10.1007/JHEP11(2017)048},
   DOI={10.1007/jhep11(2017)048},
   number={11},
   journal={Journal of High Energy Physics},
   publisher={Springer Science and Business Media LLC},
   author={Cotler, Jordan and Hunter-Jones, Nicholas and Liu, Junyu and Yoshida, Beni},
   year={2017},
   month=nov }

@article{Fritz_Haake_1999,
doi = {10.1088/0305-4470/32/40/301},
url = {https://dx.doi.org/10.1088/0305-4470/32/40/301},
year = {1999},
month = {oct},
publisher = {},
volume = {32},
number = {40},
pages = {6903},
author = {Fritz Haake and Hans-Jürgen Sommers and Joachim Weber},
title = {Fluctuations 
and ergodicity of the form factor of quantum 
propagators and random unitary matrices},
journal = {Journal of Physics A: Mathematical and General}
}

@article{PhysRevD.98.086026,
  title = {Spectral form factors and late time quantum chaos},
  author = {Liu, Junyu},
  journal = {Phys. Rev. D},
  volume = {98},
  issue = {8},
  pages = {086026},
  numpages = {28},
  year = {2018},
  month = {Oct},
  publisher = {American Physical Society},
  doi = {10.1103/PhysRevD.98.086026},
  url = {https://link.aps.org/doi/10.1103/PhysRevD.98.086026}
}

@article{Forrester_2021,
   title={Quantifying Dip–Ramp–Plateau for the Laguerre Unitary Ensemble Structure Function},
   volume={387},
   ISSN={1432-0916},
   url={http://dx.doi.org/10.1007/s00220-021-04193-w},
   DOI={10.1007/s00220-021-04193-w},
   number={1},
   journal={Communications in Mathematical Physics},
   publisher={Springer Science and Business Media LLC},
   author={Forrester, Peter J.},
   year={2021},
   month=aug, pages={215–235} }

@article{Cipolloni_2023,
   title={On the Spectral Form Factor for Random Matrices},
   volume={401},
   ISSN={1432-0916},
   url={http://dx.doi.org/10.1007/s00220-023-04692-y},
   DOI={10.1007/s00220-023-04692-y},
   number={2},
   journal={Communications in Mathematical Physics},
   publisher={Springer Science and Business Media LLC},
   author={Cipolloni, Giorgio and Erdős, László and Schröder, Dominik},
   year={2023},
   month=mar, pages={1665–1700} }

@article{duenez2004random,
  title={Random matrix ensembles associated to compact symmetric spaces},
  author={Duenez, Eduardo},
  journal={Communications in mathematical physics},
  volume={244},
  number={1},
  pages={29--61},
  year={2004},
  publisher={Springer}
}

@article{Akila_2016,
doi = {10.1088/1751-8113/49/37/375101},
url = {https://dx.doi.org/10.1088/1751-8113/49/37/375101},
year = {2016},
month = {aug},
publisher = {IOP Publishing},
volume = {49},
number = {37},
pages = {375101},
author = {Akila, M and Waltner, D and Gutkin, B and Guhr, T},
title = {Particle-time duality in the kicked Ising spin chain},
journal = {Journal of Physics A: Mathematical and Theoretical},
}

@article{Casati1980,
  author  = {Casati, G. and Valz-Gris, F. and Guarneri, I.},
  title   = {On the Connection Between Quantization of Nonintegrable Systems and Statistical Theory of Spectra},
  journal = {Lettere al Nuovo Cimento},
  volume  = {28},
  pages   = {279--282},
  year    = {1980},
  doi     = {10.1007/BF02798790}
}

@article{Berry1981,
  author  = {Berry, M. V.},
  title   = {Quantizing a Classically Ergodic System: Sinai's Billiard and the KKR Method},
  journal = {Annals of Physics},
  volume  = {131},
  number  = {1},
  pages   = {163--216},
  year    = {1981},
  doi     = {10.1016/0003-4916(81)90189-5}
}

@article{Bohigas1984,
  author  = {Bohigas, O. and Giannoni, M. J. and Schmit, C.},
  title   = {Characterization of Chaotic Quantum Spectra and Universality of Level Fluctuation Laws},
  journal = {Physical Review Letters},
  volume  = {52},
  number  = {1},
  pages   = {1--4},
  year    = {1984},
  doi     = {10.1103/PhysRevLett.52.1}
}

@book{Haake2001,
  author    = {Haake, F.},
  title     = {Quantum Signatures of Chaos},
  edition   = {2nd},
  publisher = {Springer},
  address   = {Berlin},
  year      = {2001}
}

@book{Mehta1991,
  author    = {Mehta, M. L.},
  title     = {Random Matrices and the Statistical Theory of Spectra},
  edition   = {2nd},
  publisher = {Academic Press},
  address   = {New York},
  year      = {1991}
}

@article{PhysRevLett.78.2280,
  title = {The Spectral Form Factor Is Not Self-Averaging},
  author = {Prange, R. E.},
  journal = {Phys. Rev. Lett.},
  volume = {78},
  issue = {12},
  pages = {2280--2283},
  numpages = {0},
  year = {1997},
  month = {Mar},
  publisher = {American Physical Society},
  doi = {10.1103/PhysRevLett.78.2280},
  url = {https://link.aps.org/doi/10.1103/PhysRevLett.78.2280}
}

@misc{swingle_2025_18928776,
  author       = {Swingle, B. and Gupta, D.},
  title        = {Dataset for "Generalized Spectral Statistics in the Kicked Ising model"},
  month        = jun,
  year         = 2025,
  publisher    = {Zenodo},
  url          = {https://zenodo.org/records/18928776},
}

@article{Legramandi:2024ljn,
    author = "Legramandi, Andrea and Talwar, Neil",
    title = "{The moments of the spectral form factor in SYK}",
    eprint = "2412.18737",
    archivePrefix = "arXiv",
    primaryClass = "hep-th",
    doi = "10.1007/JHEP08(2025)108",
    journal = "JHEP",
    volume = "08",
    pages = "108",
    year = "2025"
}

\end{document}